\documentclass[a4paper,12pt]{article}

\usepackage{amsmath,amsthm,amssymb,amsfonts}

\frenchspacing

\makeatletter

\@addtoreset{equation}{section}
\makeatother

\newtheorem{theorem}{Theorem}[section]
\newtheorem{lemma}[theorem]{Lemma}

\newtheorem{corollary}[theorem]{Corollary}

\def\ep{\varepsilon}

\def\R{\mathbb R}
\def\S{\mathbb S}
\def\N{\mathbb N}
\def\pa{\partial}
\def\b{\backslash}
\def\A{\mathbf A}

\def\V{\mathcal V}
\def\B{\mathcal B}
\def\A{\mathcal A}
\def\G{\mathcal G}

\begin{document}

\title{On inverse scattering at high energies for the multidimensional relativistic Newton equation in a long range electromagnetic field}
\author{Alexandre Jollivet\footnote{Laboratoire de Math\'ematiques Paul Painlev\'e,
CNRS UMR 8524/Universit\'e Lille 1 Sciences et Technologies,
59655 Villeneuve d'Ascq Cedex, France,\
alexandre.jollivet@math.univ-lille1.fr}}

\maketitle

\begin{abstract} 
We define scattering data for the relativistic Newton equation in an electric field  $-\nabla V\in C^1(\R^n,\R^n)$, $n\ge 2$, and in a magnetic field $B\in C^1(\R^n,A_n(\R))$ that decay at infinity like $r^{-\alpha-1}$
for some $\alpha\in (0,1]$, where $A_n(\R)$ is the space of $n\times n$ antisymmetric matrices.
We provide estimates on the scattering solutions and on the scattering data and we prove, in particular, that the scattering data at high energies uniquely determine the short range part of $(\nabla V,B)$ up to the knowledge of the long range tail of $(\nabla V,B)$.
The Born approximation at fixed energy of the scattering data is also considered. We then change the definition of the scattering data to study their behavior in other asymptotic regimes. 
This work generalizes [Jollivet, 2007] where a short range electromagnetic field was considered.
\end{abstract}

\section{Introduction}
Consider the multidimensional relativistic Newton equation in an external and static electromagnetic field:
\begin{eqnarray}
&&\dot p(t) = F(x(t),\dot x(t)):=-\nabla V(x(t))+{1\over c}B(x(t))\dot x(t),\label{1.1}\\
&&\ p(t)={\dot x(t) \over \sqrt{1-{|\dot x(t)|^2 \over c^2}}},\ x(t)\in \R ^n,\ t\in\R,\ n\ge2,\nonumber
\end{eqnarray}
where $\dot x(t)={d\over dt}x(t)$, and where $V \in C^2(\R^n,\R),$ $B(x)$ is the $n\times n$ real antisymmetric matrix with elements $B_{i,k}$, $1\le i,k\le n$, and where $B$ satisfies the closure condition
\begin{equation}
{\pa\over \pa x_i}B_{k,m}(x)+{\pa\over \pa x_m}B_{i,k}(x)+{\pa\over \pa x_k}B_{m,i}(x)=0,\label{1.1b}
\end{equation}
for $x\in \R^n$ and for $i,k,m=1\ldots n$.
The constant $c$ is positive, and for $\sigma\in(0,+\infty)$ we will denote by $\B(0,\sigma)$ (resp. $\overline{\B(0,\sigma)}$) the open (resp. closed) Euclidean Ball of center 0 and radius $\sigma$. 

When $n=3$ the equation \eqref{1.1} is the equation of motion of a relativistic particle of mass $m=1$ and charge $e=1$ in an
external electromagnetic field described by $(V,B)$ (see \cite{[E]} and, for example, \cite[Section 17]{[LL2]}). In this equation, $x$, $\dot x$, $p$ denote the position, the velocity and the impulse of the particle respectively, and $t$ is the time, and $c$ is the speed of light.

We also assume throughout this paper that $F$ satisfies the following conditions
\begin{equation}
F=F^l+F^s,\label{1.3}
\end{equation}
where $F^l(x,v):=-\nabla V^l(x)+{1\over c}B^l(x)v$, $F^s(x,v)=-\nabla V^s(x)+{1\over c}B^s(x)v$ and $(V^l,V^s)\in (C^2(\R^n,\R))^2$, $(B^l,B^s)\in (C^1(\R^n,A_n(\R)))^2$, and where
\begin{equation}
|\pa^{j_1}_x V^l(x)| \le \beta_{|j_1|}^l(1+|x|)^{-(\alpha+|j_1|)},\ |\pa^{j_2}_x B_{i,k}^l(x)| \le \beta_{|j_2|+1}^l(1+|x|)^{-(\alpha+|j_2|+1)},\label{1.4a}
\end{equation}
\begin{equation}
|\pa^{j_1}_x V^s(x)| \le \beta_{|j_1|+1}^s(1+|x|)^{-(\alpha+1+|j_1|)},\ |\pa^{j_2}_x B_{i,k}^s(x)| \le \beta_{|j_2|+2}^s(1+|x|)^{-(\alpha+|j_2|+2)},\label{1.4b}
\end{equation}
for $x\in \R^n$, $|j_1| \le 2$ and $|j_2|\le 1$ and for some $\alpha\in (0,1]$ (here $j$ is the multiindex $j=(j^1,\ldots,j^n)\in (\N \cup \{0\})^n, |j|= \sum_{m=1}^n j^m$, and $\beta_m^l$ and $\beta_{m'}^s$ are positive real constants for $m=0,1,2$ and for $m'=1,2,3$). Note that the assumption $0<\alpha\le 1$ includes the decay rate of a Coulombian potential at infinity. Indeed for a Coulombian potential $V^l(x)={1\over |x|}$ (and no magnetic field $B ^l\equiv0$), estimates \eqref{1.4a} are satisfied uniformly for $|x|>\ep$ and $\alpha=1$ for any $\ep>0$. Although our electromagnetic fields are assumed to be smooth on the entire space, our study may provide interesting results even in presence of singularities.

For equation \eqref{1.1} the energy 
\begin{equation}
E=c^2\sqrt{1+{|p(t)|^2 \over c^2}}+V(x(t))\label{1.2}
\end{equation}
is an integral of motion.

For $\sigma\in[0,+\infty)$ set 
\begin{equation}
\mu(\sigma)=\sqrt{2\sigma\over {\sigma\over c^2}+\sqrt{{\sigma^2\over c^4}+4}}\label{1.2b}
\end{equation}
and $\mu^l=\mu\big(2^8\alpha^{-1}n^2\max(\beta_1^l,\beta_2^l)\big)$.
Then under conditions \eqref{1.4a} the following is valid (see Lemma \ref{lem_scatinit} given in the next Section): for any $v\in \B(0,c)$, $|v|\ge \mu^l$, there exists a unique solution $z_\pm(v,.)$ of the equation
\begin{eqnarray}
\dot p(t)&=&F^l(z(t),\dot z(t)),\label{5.3}\\
p(t)&=&{\dot z(t) \over \sqrt{1-{|\dot z(t)|^2 \over c^2}}},\ t\in \R,\nonumber
\end{eqnarray}
so that
\begin{equation*}
\dot z_\pm(v,t)-v=o(1),\textrm{ as }t\to \pm \infty,\ z_\pm(v,0)=0,
\end{equation*}
and 
\begin{equation*}
|z_\pm(v,t)-tv|\le {2^{9\over 2}n^{3\over 2}\beta_1^l\sqrt{1-{|v|^2\over c^2}}\over \alpha |v|}|t|\textrm{ for }t\in \R.
\end{equation*}
When $F^l\equiv0$ then  $\beta_1^l$, $\beta_2^l$ and $\mu^l$ can be arbitrary close to $0$, and we have $z_\pm(v,t)=tv$ for $(t,v)\in \R\times \B(0,c)$, $v\not=0$.

Then 
under conditions \eqref{1.4a} and \eqref{1.4b}, the following is valid: for any 
$(v_-,x_-)\in \B(0,c)\b\B(0,\mu^l)\times\R^n,$ 
the equation \eqref{1.1}  has a unique solution $x\in C^2(\R,\R^n)$ such that
\begin{equation}
{x(t)=z_-(v_-,t)+x_-+y_-(t),}\label{1.6}
\end{equation}
where $|\dot y_-(t)|+|y_-(t)|\to 0,$ as $t\to -\infty;$  in addition for almost any 
$(v_-,x_-)\in \B(0,c)\b\B(0,\mu^l)\times \R^n$,
\begin{equation}
{x(t)=z_+(v_+,t)+x_++y_+(t),}\label{1.7}
\end{equation}
for a unique $(v_+,x_+)\in \B(0,c)\times\R^n$, where $|v_+|=|v_-|\ge \mu^l$ by conservation of the energy \eqref{1.2}, and where $v_+=:a(v_-,x_-)$, $x_+=:b(v_-,x_-),$ and  
$|\dot y_+(t)|+|y_+(t)|\to 0,$ as $t \to +\infty$. A solution $x$ of \eqref{1.1} that satisfies \eqref{1.6} and \eqref{1.7} for some $(v_-,x_-)$, $v_-\not=0$, is called a scattering solution. 

We call the map 
$S: (\B(0,c)\b \B(0,\mu^l))\times\R^n \to (\B(0,c)\b\B(0,\mu^l))\times\R^n$
given by the formulas
\begin{equation}
{v_+=a(v_-,x_-),\ x_+=b(v_-,x_-)},\label{1.8}
\end{equation}
the scattering map for the equation \eqref{1.1}. In addition, $a(v_-,x_-),$ $b(v_-,x_-)$ are called the scattering data for
the equation \eqref{1.1}, and we define
\begin{equation}
a_{sc}(v_-,x_-)=a(v_-,x_-)-v_-,\ \ b_{sc}(v_-,x_-)=b(v_-,x_-)-x_-.\label{1.10}
\end{equation}
Our definition of the scattering map is derived from constructions given in \cite{He, DG}. We refer the reader to \cite{He, DG} and references therein for the forward classical scattering theory.

By ${\cal D}(S)$ we denote the set of definition of $S$. Under the conditions \eqref{1.4a} and \eqref{1.4b} the map  $S:{\cal D}(S)\to (\B(0,c)\b \B(0,\mu^l))\times\R^n$ is continuous, 
and ${\rm Mes}(((\B(0,c)\b \B(0,\mu^l))\times \R^n) \b {\cal D}(S))=0$ for the Lebesgue
measure on $\B(0,c) \times \R^n$. 
In addition the map $S$ is uniquely determined by its restriction to ${\cal M}(S)={\cal
D}(S)\cap {\cal M}$ and by $F^l$, where
${\cal M}=\{(v_-,x_-)\in \B(0,c) \times \R^n\ |\ v_-\neq 0, v_-\cdot x_-=0\}$.
(Indeed if $x(t)$ is a solution of \eqref{1.1} then $x(t+t_0)$ is also a solution of \eqref{1.1} for any $t_0\in \R$.)

One can imagine the following experimental setting that allows to measure the scattering data without knowing the electromagnetic field $(V,B)$ inside a (a priori bounded) region of interest.
First find an electromagnetic field $(V^l,B^l)$ that generates the same long range effects as $(V,B)$ does. Then compute the solutions $z_\pm(v,.)$ of equation \eqref{5.3}. Then for a fixed $(v_-,x_-)\in (\B(0,c)\b \B(0,\mu^l))\times \R^n$ send a particle far away from the region of interest  with a trajectory asymptotic to $x_-+z_-(v_-,.)$ at large and negative times.  When the particle escapes any bounded region of the space at finite time, then detect the particle  and find $S(v_-,x_-)=(v_+,x_+)$ so that  the trajectory of the particle is  asymptotic to  $x_++z_+(v_+,.)$ at large and positive times far away from the bounded region of interest.

In this paper we consider the following inverse scattering problem for equation \eqref{1.1}:
\begin{equation}
\textrm{Given }S\textrm{ and given the long range tail }F^l\textrm{ of the force }F,\textrm{ find }F^s. \label{P}
\end{equation}

The main results of the present work consist in estimates and asymptotics for the scattering data $(a_{sc},b_{sc})$ 
and scattering solutions for the equation \eqref{1.1} and in application of these asymptotics and estimates to the inverse scattering problem \eqref{P} at high energies. Our main results include, in particular, Theorem \ref{thm} given below that provides the high energies asymptotics of the scattering data and the Born approximation of the scattering data at fixed energy.

Consider
\begin{equation*}
\label{1.9}
T\S^{n-1}:=\{(\theta,x)\in \S^{n-1}\times\R^n\ |\ \theta\cdot x=0\},
\end{equation*}
and for any $m\in \N$ consider the x-ray transform $P$ defined by
\begin{equation*}
Pf(\theta,x):=\int_{-\infty}^{+\infty}f(t\theta+x)dt
\end{equation*}
for any function $f\in C(\R^n,\R^m)$ so that $|f(x)|=O(|x|^{-\tilde \beta})$ as $|x|\to +\infty$ for some $\tilde \beta>1$. 
For $(\sigma,\tilde \beta,r,\tilde \alpha)\in (0,+\infty)^2\times(0,\min(1,2^{-{3\over 2}}c))\times (0,1]$, let $\rho_0=\rho_0(\sigma,r,\tilde \beta,\tilde \alpha)$ be defined as the root of the equation
\begin{equation}
1={32n^2\sqrt{1-{\rho_0^2\over c^2}}\tilde \beta(1+\sigma+{1\over c})(1+{1\over {\rho_0\over 2^{3\over 2}}-r})\over \tilde \alpha({\rho_0\over 2^{3\over 2}}-r)r(1-r)^{\tilde \alpha+2}},\ \rho_0\in(2^{3\over 2}r,c).\label{1.30}
\end{equation}
Set
\begin{eqnarray}
W(v,x)&:=&\int_{-\infty}^0\Big(g\big(g^{-1}(v)+\int_{-\infty}^\sigma F^l(z_-(v,\tau)+x,\dot z_-(v,\tau))d\tau\big)\label{10.10}\\
&&\hskip -1cm -g\big(g^{-1}(v)+\int_{-\infty}^\sigma F^l(z_-(v,\tau),\dot z_-(v,\tau))d\tau\big)\Big)d\sigma\nonumber\\
&&\hskip -1cm +\int_0^{+\infty}\Big(g\big(g^{-1}(a(v,x))-\int_\sigma^{+\infty} F^l(z_+(a(v,x),\tau)+x,\dot z_+(a(v,x),\tau))d\tau\big)\nonumber\\
&&\hskip -1cm -g\big(g^{-1}(a(v,x))-\int_\sigma^{+\infty} F^l(z_+(a(v,x),\tau),\dot z_+(a(v,x),\tau))d\tau\big)\Big)d\sigma,\nonumber
\end{eqnarray}
for $(v,x)\in {\cal D}(S)$, where 
\begin{equation}
g(x):={x\over \sqrt{1+{|x|^2\over c^2}}}\textrm{ for } x\in \R^n,\textrm{ and }
g^{-1}(x'):={x'\over \sqrt{1-{|x'|^2\over c^2}}}\textrm{ for }x'\in \B(0,c).\label{01}
\end{equation}
Then we have the following results.

\begin{theorem}
\label{thm}
Let $(\theta,x)\in T\S^{n-1}$. Under conditions \eqref{1.4a} and \eqref{1.4b} the following limits are valid
\begin{eqnarray}
\lim_{\rho\to c\atop \rho<c} {\rho\over \sqrt{1-{\rho^2\over c^2}}} a_{sc}(\rho\theta,x)&=&\int_{-\infty}^{+\infty}F(\tau \theta+x,c\theta)d\tau,\label{t1}\\
\lim_{\rho\to c\atop\rho<c}{\rho^2\over \sqrt{1-{\rho^2\over c^2}}}(b_{sc}(\rho\theta,x)-W(\rho\theta,x))&=&\int_{-\infty}^0\int_{-\infty}^\sigma F^s(\tau \theta+x,c\theta)d\tau d\sigma
\nonumber\\
&&\hskip -4cm-\int_0^{+\infty}\int_\sigma^{+\infty} F^s(\tau \theta+x,c\theta)d\tau d\sigma+PV^s(\theta,x)\theta.\label{t2}
\end{eqnarray}
In addition
\begin{eqnarray}
&&\hskip-1.5cm|{\rho\over\sqrt{1-{\rho^2\over c^2}}}a_{sc}(\rho\theta,x)-\int_{-\infty}^{+\infty}F(\tau \theta+x,\rho\theta)d\tau|\nonumber\\
&\le& \beta^2\sqrt{1-{\rho^2\over c^2}}{640n^4\rho
\big({r\over c}+|x|+1\big)({1\over c}+1)\big(1+{1\over {\rho\over 2\sqrt{2}}-r}\big)^2\over \alpha^2({\rho\over 2\sqrt{2}}-r)^2(1-r)^{2\alpha+3}}
,\label{t3b}
\end{eqnarray}
\begin{eqnarray}
&&\hskip-1.5cm\Big|{\rho^2\over\sqrt{1-{\rho^2\over c^2}}}\big(b_{sc}(\rho\theta,x)-W(\rho\theta,x)\big)-\int_{-\infty}^0\int_{-\infty}^\sigma F^s(\tau \theta+x,\rho\theta)d\tau d\sigma\nonumber\\
&&+\int_0^{+\infty}\int_\sigma^{+\infty} F^s(\tau \theta+x,\rho\theta)d\tau d\sigma-\int_{-\infty}^{+\infty}V^s(\sigma \theta+x)d\sigma {\rho^2\theta\over c^2}\Big|\nonumber\\
&\le&\beta^2\sqrt{1-{\rho^2\over c^2}}{464n^4\rho^2\big({r\over c}+|x|+1\big)({1\over c}+1)\big(1+{1\over  
{\rho\over 2\sqrt{2}}-r}\big)^2\over \alpha^2(\alpha+1)({\rho\over 2\sqrt{2}}-r)^3(1-r)^{2\alpha+2}},\label{t4b}
\end{eqnarray}
for $r\in (0,\min(1,2^{-{3\over 2}}c))$ and for $\rho\in(\rho_0(|x|,r,\beta,\alpha),c)$, 
where $\beta=\max(\beta_1^l,\beta_2^l,$ $\beta_2^s,\beta_3^s)$.
\end{theorem}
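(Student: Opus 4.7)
The strategy is to write $a_{sc}$ and $b_{sc}-W$ as time integrals along the actual scattering trajectory and then replace that trajectory by the straight line $\sigma\mapsto \sigma\rho\theta+x$ travelled at velocity $\rho\theta$. All errors will be controlled via the explicit bounds on $z_\pm(v,\cdot)$ and on the remainders $y_\pm$ furnished by the preceding lemmas, together with the decay hypotheses \eqref{1.4a}--\eqref{1.4b}.

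For $a_{sc}$, integrating $\dot p=F$ over $\R$ yields $g^{-1}(v_+)-g^{-1}(v_-)=\int_{-\infty}^{+\infty}F(x(\sigma),\dot x(\sigma))\,d\sigma$. Since energy conservation forces $|v_+|=|v_-|=\rho$, the denominators in the definition of $g^{-1}$ coincide and one gets the clean identity
\[
\frac{a_{sc}(\rho\theta,x)}{\sqrt{1-\rho^2/c^2}}=\int_{-\infty}^{+\infty}F(x(\sigma),\dot x(\sigma))\,d\sigma.
\]
Rescaling $\tau=\rho\sigma$ and applying the mean value theorem to $F$, together with $|x(\sigma)-(\sigma\rho\theta+x)|\le|y_-(\sigma)|+|z_-(\rho\theta,\sigma)-\sigma\rho\theta|$ and the analogous velocity estimate from the preceding lemmas, produces \eqref{t3b}; dominated convergence as $\rho\to c$ then yields \eqref{t1}.

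For $b_{sc}$, combining \eqref{1.6}--\eqref{1.7} with $z_\pm(v,0)=0$ and $y_\pm(\pm\infty)=0$ gives
\[
b_{sc}=\int_{-\infty}^0[\dot x(\sigma)-\dot z_-(v_-,\sigma)]\,d\sigma+\int_0^{+\infty}[\dot x(\sigma)-\dot z_+(v_+,\sigma)]\,d\sigma.
\]
Writing $\dot x=g(p)$ and $\dot z_\pm=g(q_\pm)$ with $p(\sigma)=g^{-1}(v_\mp)\pm\int F$ and $q_\pm(\sigma)=g^{-1}(v_\mp)\pm\int F^l(z_\pm,\dot z_\pm)$, every integrand has the form $g(A)-g(B)$. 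By construction, $W(\rho\theta,x)$ is exactly what this integral would be if $F$ were replaced throughout by $F^l$ evaluated along the shifted free trajectory $z_\pm(v,\cdot)+x$; subtracting $W$ and approximating the true trajectory by the shifted free one therefore reduces $b_{sc}-W$ at leading order to
\[
\int_{-\infty}^0 g'(g^{-1}(\rho\theta))\Bigl[\int_{-\infty}^\sigma F^s(\tau\rho\theta+x,\rho\theta)\,d\tau\Bigr]d\sigma
\]
plus the symmetric $+$-time contribution. A direct computation gives $g'(g^{-1}(\rho\theta))h=\sqrt{1-\rho^2/c^2}\bigl[h-(\rho^2/c^2)(\theta\cdot h)\theta\bigr]$. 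Multiplying by $\rho^2/\sqrt{1-\rho^2/c^2}$ and rescaling both time variables by $\rho$, the $h$-part produces the two double integrals of $F^s(\tau\theta+x,\rho\theta)$ appearing in \eqref{t4b}. The $\theta$-correction carries the coefficient $-(\rho^2/c^2)\theta\int\int\theta\cdot F^s$; the antisymmetry of $B^s$ gives $\theta\cdot F^s(\tau\theta+x,\rho\theta)=-\theta\cdot\nabla V^s(\tau\theta+x)=-\frac{d}{d\tau}V^s(\tau\theta+x)$, so integration by parts in $\tau$ collapses the double integral to $(\rho^2/c^2)PV^s(\theta,x)\theta$, which is the $V^s$-term of \eqref{t4b} (and produces $PV^s(\theta,x)\theta$ in the limit \eqref{t2}). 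The quantitative bound \eqref{t4b} is obtained, exactly as for \eqref{t3b}, by estimating $x(\sigma)-(\sigma\rho\theta+x)$, $\dot x(\sigma)-\rho\theta$, $v_+-\rho\theta$, and the Taylor remainder of $g$; \eqref{t2} then follows by dominated convergence.

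The main obstacle will be the bookkeeping for $b_{sc}-W$: one must simultaneously control the second-order Taylor remainder of $g$, the coupled errors from $y_\pm$ and from replacing $F^l(x,\dot x)$ by $F^l(z_\pm+x,\dot z_\pm)$ inside the momentum integrals, and the fact that $v_+$ differs from $\rho\theta$, while verifying that all residual terms are dominated by the explicit right-hand side of \eqref{t4b}. The conceptual crux is the observation that along the straight-line trajectory the $\theta$-component of $F^s$ is a total $\tau$-derivative of $V^s$, which is what collapses the double time-integral into the x-ray transform $PV^s(\theta,x)$.
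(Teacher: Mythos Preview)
Your outline is correct and matches the paper's strategy. In the paper, Theorem~\ref{thm} is proved by a ten-line reduction to Theorem~\ref{thm_y}: one checks that $\rho>\rho_0(|x|,r,\beta,\alpha)$ forces $\max(\lambda_0,\lambda_1/r,\lambda_2,\lambda_3)<1$, so the contraction machinery (Lemma~\ref{lem_cont}) applies and the estimates \eqref{t3}--\eqref{t4} of Theorem~\ref{thm_y} can be invoked directly; \eqref{t3b}--\eqref{t4b} are those estimates rescaled by $\rho/\sqrt{1-\rho^2/c^2}$ and $\rho^2/\sqrt{1-\rho^2/c^2}$. What you have sketched is essentially the proof of Theorem~\ref{thm_y} itself (Section~\ref{sec_thm_y}), where the paper writes $a_{sc}-\sqrt{1-|v_-|^2/c^2}\int F=\Delta_{1,1}+\Delta_{1,2}$ and $b_{sc}-W-(\textrm{leading term})=\sum_{j=1}^{8}\Delta_{2,j}$, and bounds each piece separately.

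One genuine simplification in your approach: for $a_{sc}$ you use the \emph{exact} identity $a_{sc}/\sqrt{1-\rho^2/c^2}=\int_{-\infty}^{+\infty}F(x(\sigma),\dot x(\sigma))\,d\sigma$, which follows immediately from $|v_+|=|v_-|$. The paper instead Taylor-expands $g$ about $g^{-1}(v_-)$ and uses the orthogonality $\langle v_-,\int F(\cdot\,v_-+x_-,v_-)\rangle=0$ (see \eqref{3.3}) to recover the factor $\sqrt{1-|v_-|^2/c^2}$, incurring the second-order remainder $\Delta_{1,2}$. Your route is cleaner here and avoids one Taylor remainder entirely. For $b_{sc}$ the two approaches coincide: your characterization of $W$ as ``the $b_{sc}$-integral with $F$ replaced by $F^l$ along $z_\pm+x$'' is exactly what produces the long-range cancellation in the paper's decomposition \eqref{3.90}, and your total-derivative observation $\theta\cdot F^s(\tau\theta+x,\rho\theta)=-\tfrac{d}{d\tau}V^s(\tau\theta+x)$ is precisely the computation (implicit in the paper between \eqref{3.90} and \eqref{t4}) that converts the $\theta$-projection of the double integral into the $PV^s$ term.
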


The vector $W$ defined by \eqref{10.10} is known from the scattering data and from $F^l$.
Then from \eqref{t1}  and \cite[Proposition 1.1]{Jo1} and inversion of the x-ray transform (see \cite{R, GGG, Na, No})  it follows that $F^s$ can be  reconstructed from  $a_{sc}$. 
From \eqref{t2} one can prove the following statements (see \cite[Proposition 1.2]{Jo1} and subsequent comments therein):
The potential $V^s$ is uniquely determined up to its radial part by $b_{sc}$; The magnetic field $B^s$ can be reconstructed from $b_{sc}$ when $n\ge 3$, and up to its radial part when $n=2$.

The estimates \eqref{t3b} and \eqref{t4b} also
give the asymptotics of  $a_{sc},$ $b_{sc}$, when the parameters $\alpha,$ $n,$ $\rho$, $\theta$ and $x$ are fixed and
$\beta$ decreases to $0$. In that regime the leading term of $a_{sc}(\rho\theta,x)$ and
$b_{sc}(\rho\theta,x)-W(\rho\theta,x)$
for $(\theta,x)\in T\S^{n-1}$ and for $\rho\in (\rho_0(|x|,r,\beta,\alpha),c)$ are given by 
\begin{eqnarray}
&&{\sqrt{1-{\rho^2\over c^2}}\over \rho}\int_{-\infty}^{+\infty}F(\tau \theta+x,\rho\theta)d\tau,\label{t3c}\\
&&{\sqrt{1-{\rho^2\over c^2}}\over \rho^2}\Big(\int_{-\infty}^0\int_{-\infty}^\sigma F^s(\tau \theta+x,\rho\theta)d\tau d\sigma\nonumber\\
&&-\int_0^{+\infty}\int_\sigma^{+\infty} F^s(\tau \theta+x,\rho\theta)d\tau d\sigma+\int_{-\infty}^{+\infty}V^s(\sigma \theta+x)d\sigma {\rho^2\theta\over c^2}\Big),\label{t4c}
\end{eqnarray}
respectively. 
Therefore Theorem \ref{thm} gives the Born approximation for the scattering data at fixed energy when
the electromagnetic field is sufficiently weak, and one can prove the following statements (see \cite[Remark 1.1]{Jo1}):
The force $F^s$ can be reconstructed from the Born approximation \eqref{t3c} of $a_{sc}$ at fixed energy;
$V^s$ can be reconstructed from the Born approximation \eqref{t4c} of $b_{sc}$ at fixed energy; 
$B^s$ can be reconstructed from \eqref{t4c} when $n\ge 3$, and up to its radial part when $n=2$.

Theorem \ref{thm} is a generalization of \cite[formulas (1.7a), (1.7b), (1.8a) and (1.8b)]{Jo1} where inverse scattering for the relativistic multidimensional Newton equation was studied in the short range case ($F^l\equiv 0$). 
The formulas \cite[(1.7b) and (1.8b)]{Jo1}  also provide the approximation of the scattering data
$(a_{sc}(v_-,x_-),b_{sc}(v_-,x_-))$ for the short range case ($F^l\equiv 0$) when the parameters $\alpha,$ $n,$ $v_-$ and $\beta$ are fixed and
$|x_-|\to +\infty$. Such an asymptotic regime is not covered by Theorem \ref{thm}. Therefore we shall modify in Section 3 the definition of the scattering map 
to study these modified scattering data in the following three asymptotic regimes: at high energies, Born approximation at fixed energy, and when the parameters $\alpha,$ $n$, $v_-$ and $\beta$ are fixed and
$|x_-|\to +\infty$.

Inverse scattering at high energies for the nonrelativistic multidimensional Newton equation in a short range potential $V$ was first studied by \cite{No}. Then inverse scattering at high energies for this latter equation in a long range potential $V$ was studied by \cite{Jo2}. We develop the approach of \cite{No, Jo2} to obtain our results.

For inverse scattering at fixed energy for the multidimensional Newton equation, see for example \cite{Jo3} and references therein.
For the inverse scattering problem in quantum mechanics, see references given in \cite{Jo1}.

Our paper is organized as follows. In Section 2 we transform the differential equation \eqref{1.1} with initial conditions \eqref{1.6} in an integral equation which takes the form $y_-=A(y_-).$ Then we study the operator $A$ on a
suitable space  (Lemma \ref{lem_cont}) and we give estimates for the deflection $y_-(t)$ in \eqref{1.6} and for  
the scattering data $a_{sc}(v_-,x_-),b_{sc}(v_-,x_-)$  (Theorem \ref{thm_y}). We prove Theorem \ref{thm}.
Note that we work with small angle scattering compared to the dynamics generated by $F^l$ through the ``free" solutions $z_-(v_-,t)$: In particular, the angle between the vectors $\dot x(t)=\dot z_-(v_-,t)+\dot y_-(t)$ and $\dot z_-(v_-,t)$ goes to zero
when the parameters $\beta$, $\alpha$, $n$, $v_-/|v_-|$, $x_-$ are fixed and $|v_-|$ increases.
In Section 3 we change the definition of the scattering map so that one can obtain for the modified scattering data $(\tilde a_{sc}(v_-,x_-),\tilde b_{sc}(v_-,x_-))$ their approximation  at high energies, or their Born approximation at fixed energy, or their approximation  when the parameters $\alpha,$ $n,$ $v_-$ and $\beta$ are fixed and
$|x_-|\to +\infty$ (Theorem \ref{thm_y2}, Corollary \ref{high_energy}). 
Sections 4, 5, 6 and 7 are devoted to proofs of our Theorems and Lemmas.

\section{Scattering solutions}
\subsection{Integral equation}
First we need the following Lemma \ref{lem_scatinit} that generalizes the statements given in the Introduction on the existence of peculiar solutions $z_\pm$ of the equation \eqref{5.3}.

\begin{lemma}
\label{lem_scatinit}
Assume conditions \eqref{1.4a}. Let $v\in \B(0,c)$, $v\not=0$, and $x\in \R^n$  so that $v\cdot x=0$. Let $(w,q)\in \B(0,c)\times\B(0,1)$ so that
\begin{equation}
|v|=|w|,\textrm{ and } |w-v|\le {|v|\over 2^{5\over 2}}.\label{5.1}
\end{equation}
Assume that
\begin{equation}
{2^8n^2\max(\beta_1^l,\beta_2^l)\sqrt{1-{|v|^2\over c^2}}
\over \alpha|v|^2(1+{|x|\over \sqrt{2}}-|q|)^{\alpha}}\le 1. \label{5.1a}
\end{equation}
Then there exists a unique solution $z_\pm(w,x+q,.)$ of the equation \eqref{5.3}
so that
\begin{equation}
\dot z_\pm(w,x+q,t)-w=o(1)\textrm{ as }t\to \pm\infty,\  z_\pm(w,x+q,0)=x+q,\label{5.3a}
\end{equation}
and
\begin{equation}
\sup_\R |\dot z_\pm(w,x+q,.)-w|
\le {2^{9\over 2}n^{3\over 2}\beta_1^l\sqrt{1-{|v|^2\over c^2}}\over \alpha |v|(1+{|x|\over \sqrt{2}}-|q|)^\alpha}.\label{5.3b}
\end{equation}
\end{lemma}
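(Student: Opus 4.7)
The plan is to convert \eqref{5.3} together with the asymptotic condition \eqref{5.3a} into a fixed point problem for the velocity $u:=\dot z$. Integrating $\dot p=F^l(z,\dot z)$ from $-\infty$ and inverting via $g$ from \eqref{01}, any solution of the $-$ case must satisfy
\begin{equation*}
u(t)=g\Big(g^{-1}(w)+\int_{-\infty}^t F^l(z(s),u(s))\,ds\Big),\qquad z(t)=x+q+\int_0^t u(s)\,ds.
\end{equation*}
I formulate this as the fixed point of an operator $A$ on the closed ball $\overline{\B(w,R)}$ in $C_b(\R,\R^n)$, where
\begin{equation*}
R:=\frac{2^{9/2}n^{3/2}\beta_1^l\sqrt{1-|v|^2/c^2}}{\alpha|v|(1+|x|/\sqrt 2-|q|)^\alpha},
\end{equation*}
so that the target bound \eqref{5.3b} reads off as the ball radius; \eqref{5.1a} ensures $R$ is comfortably less than $|v|/2^{5/2}$, keeping $u$ bounded away from the light cone and $g$ well behaved.

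The core estimate is geometric. For $u\in\overline{\B(w,R)}$ one has $|z(s)-(x+q)-sw|\le R|s|$, so for $R$ small, $1+|z(s)|\ge\tfrac12(1+|x+q+sw|)$. Since $v\cdot x=0$ and $|w-v|\le|v|/2^{5/2}$, the projection of $x$ onto $w^\perp$ has norm at least $|x|\sqrt{1-1/32}\ge|x|/\sqrt 2$, and adding $q$ with $|q|<1$ shows the line $\{x+q+\sigma w:\sigma\in\R\}$ has distance $\ge|x|/\sqrt 2-|q|$ from the origin. A one-dimensional calculation then yields
\begin{equation*}
\int_{-\infty}^{+\infty}(1+|z(s)|)^{-\alpha-1}\,ds\le\frac{c_{n,\alpha}}{|v|(1+|x|/\sqrt 2-|q|)^\alpha},
\end{equation*}
which is the source of the denominators in \eqref{5.1a} and \eqref{5.3b}. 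Combined with the pointwise bound $|F^l(y,u)|\le 2n\beta_1^l(1+|y|)^{-\alpha-1}$ on $|u|\le c$ coming from \eqref{1.4a}, and the fact that $g$ is $1$-Lipschitz on $\R^n$, this yields $\|A(u)-w\|_\infty\le R$, so $A$ maps the ball into itself.

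For the contraction property, the Lipschitz bounds on $F^l$ in both arguments (from \eqref{1.4a} at $|j_1|=2$, $|j_2|=1$) produce the same type of integral, controlled again by the left hand side of \eqref{5.1a}; one arranges the constants so that $A$ is a contraction of ratio $<1/2$. Banach's theorem then gives a unique fixed point $u$ in the ball, from which $z_-(w,x+q,\cdot)$ is reconstructed via the formula above. Uniqueness among \emph{all} solutions of \eqref{5.3} satisfying \eqref{5.3a}---not only those with $\|\dot z-w\|_\infty$ globally small---follows because \eqref{5.3a} forces $\dot z-w$ into the ball past some time $T$, so any such solution agrees with the fixed point for $t\le T$, and ODE uniqueness propagates the equality to all $t$. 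The $+$ case is treated symmetrically by integrating from $+\infty$. The main obstacle is the geometric decay estimate above: tracking the explicit constants so that exactly the factor $(1+|x|/\sqrt 2-|q|)^\alpha$ emerges, given the mismatch between $v$ (orthogonal to $x$) and the actual asymptotic direction $w$, together with the translation by $q$.
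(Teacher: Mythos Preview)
Your overall scheme---integral reformulation, contraction on a ball of radius $R$ around $w$, geometric lower bound on $|z(s)|$---is the same as the paper's.  There is, however, a genuine gap in the step ``$g$ is $1$-Lipschitz on $\R^n$ $\Rightarrow\|A(u)-w\|_\infty\le R$''.  The quantity $R$ carries the factor $\sqrt{1-|v|^2/c^2}$, while the global $1$-Lipschitz bound on $g$ gives only
\[
|A(u)(t)-w|\ \le\ \int_{-\infty}^t|F^l(z,u)|\,ds\ \lesssim\ \frac{n\beta_1^l}{\alpha|v|(1+|x|/\sqrt2-|q|)^\alpha},
\]
with no relativistic factor.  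When $|v|$ is close to $c$ this is much larger than $R$, so $A$ is not seen to map the ball of radius $R$ into itself; worse, under \eqref{5.1a} alone this crude bound need not even be small compared to $|v|$, so your geometric estimate $|z(s)|\gtrsim |x|/\sqrt2-|q|+|v||s|$ (which needs $R\ll|v|$) is not justified either, and the whole bootstrap collapses.

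The missing ingredient is the \emph{local} Lipschitz behaviour of $g$ near $g^{-1}(w)$.  One must first check (this is where \eqref{5.1a} is actually used) that
\[
\Big|g^{-1}(w)+\!\int_{-\infty}^tF^l\Big|\ \ge\ |g^{-1}(v)|-\!\int_\R|F^l|\ \ge\ \tfrac12|g^{-1}(v)|\ =\ \frac{|v|}{2\sqrt{1-|v|^2/c^2}},
\]
and then use the sharper bound $|\nabla g_i(p)|\le(1+|p|^2/c^2)^{-1/2}$ (equivalently, \eqref{02b}) at such $p$.  This produces precisely the factor $2\sqrt{n}\,\sqrt{1-|v|^2/c^2}$, and only then does the integral estimate yield $\|A(u)-w\|_\infty\le R$.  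The same refinement is needed in the contraction step, where the Lipschitz constant of $A$ must be controlled by the left side of \eqref{5.1a}.  Once you insert these two points, your sketch matches the paper's proof; your geometric argument via the projection of $x$ onto $w^\perp$ is a correct (if slightly more elaborate) variant of the paper's direct inequality $|x+\tau v|\ge(|x|+|v||\tau|)/\sqrt2$ followed by subtraction of $|w-v||\tau|$, $|q|$ and $|f(\tau)|$.
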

A proof of Lemma \ref{lem_scatinit} is given in Section \ref{sec_scatinit}.
For the rest of this Section we set
\begin{eqnarray}
\mu^l&:=&\mu\big(2^8\alpha^{-1}n^2\max(\beta_1^l,\beta_2^l)\big),\label{5.3dd}\\
z_\pm(v,t)&:=&z_\pm(v,0,t)\textrm{ for }t\in \R,\textrm{ when }|v| \ge \mu^l,\label{5.3ee}\\
\beta_2&:=&\max(\beta_2^l,\beta_2^s),\label{5.3f}
\end{eqnarray} 
where the function $\mu$ is defined by \eqref{1.2b}. 

For the rest of the text  $H(f(\tau),\dot f(\tau))$ is shortened to $H(f)(\tau)$ for any $(f,\tau)\in C^1(\R,\R^n)\times\R$, where $H$ stands for $F$, $F^s$ or $F^l$.

Let $(v_-,x_-)\in \B(0,c)\times \R^n$, $v_-\cdot x_-=0$ and $|v_-|\ge \mu^l$.
Then the function $y_-$ in \eqref{1.6} satisfies the integral equation
$y_-=A(y_-)$
where
\begin{eqnarray}
A(f)(t)&=&\int_{-\infty}^t \dot A(f)(\sigma)d\sigma,\label{2.1a}\\
\dot A(f)(t)&=&g\big(g^{-1}(v_-)+\int_{-\infty}^t F(z_-(v_-,.)+x_-+f)(\tau)d\tau\big)\nonumber\\
&&-g\big(g^{-1}(v_-)+\int_{-\infty}^t F^l(z_-(v_-,.))(\tau)d\tau\big),\label{2.6}
\end{eqnarray}
for $t\in \R$ and for $f\in C^1(\R,\R^n)$, $\sup_{(-\infty,0]}(|f|+|\dot f|)<\infty$.
We have $A(f)\in C^2(\R,\R^n)$ for $f\in C^1(\R,\R^n)$ so that $\sup_{(-\infty,0]}(|f|+|\dot f|)<\infty$ (see \eqref{02b}, \eqref{03b} and \eqref{03c}). 

For $r\in (0,1)$ and for $|v_-|\ge \mu^l$, $|v_-|\ge 2^{3\over 2}r$, we introduce the following complete metric space $M_{r,v_-}$ endowed with the following norm $\|.\|$
\begin{equation}
M_{r,v_-}=
\lbrace
f\in C^1(\R,\R^n)\ |\ \sup_{\R}|\dot z_-(v_-,.)+\dot f|\le c,\ \|f\|\le r\},\label{2.3}
\end{equation}
\begin{equation}
\|f\|=\max\big(\sup_{t\in (-\infty,0)}\max\big(1,\big(1-r+({|v_-|\over 2\sqrt{2}}-r)|t|\big)\big)|\dot f(t)|,
\sup_{(0,+\infty)}|\dot f|,\sup_{(-\infty, 0)}|f|\Big).\label{2.3b}
\end{equation}
The space $M_{r,v_-}$ is a convex subset of $C^1(\R,\R^n)$.
Then we have the following estimate and contraction estimate for the map $A$ restricted to $M_{r,v_-}$.

\begin{lemma}
\label{lem_cont}
Let $(v_-,x_-)\in \big(\B(0,c)\b\B(0,\mu^l)\big)\times\R^n$, $v_-\cdot x_-=0$,  and let $r\in\big(0,\min({|v_-|\over 2^{3\over 2}},1)\big)$.
When
\begin{equation}
{2^{5\over 2}n\max(\beta_1^l,\beta_2)\sqrt{1-{|v_-|^2\over c^2}}\over \alpha\big({|v_-|\over 2\sqrt{2}}-r\big)^2\big(1-r\big)^{\alpha+1}}\le 1,\label{lc0}
\end{equation}
then the following estimates are valid
\begin{eqnarray}
\|A(f)\|&\le&\lambda_1(n,\alpha,\beta_1^l,\beta_2,|x_-|,|v_-|,r)\label{lc1}\\
&:=&{4n^{3\over 2}\sqrt{1-{|v_-|^2\over c^2}}\big({r\beta_1^l\over c}+2\beta_2(n^{1\over 2}(|x_-|+r)+1)\big)\big(1+{1\over {|v_-|\over 2\sqrt{2}}-r}\big)\over \alpha({|v_-|\over 2\sqrt{2}}-r)(1-r)^{\alpha+1}}
,\nonumber
\end{eqnarray}
and
\begin{equation}
\|A(f_1)-A(f_2)\|\le\lambda_2(n,\alpha,\beta_2,\beta_3^s,|v_-|,r)\|f_1-f_2\|,\label{lc2}
\end{equation}
\begin{equation*}
\lambda_2(n,\alpha,\beta_2,\beta_3^s,|v_-|,r):={4n^{3\over 2}\sqrt{1-{|v_-|^2\over c^2}}\big({\beta_1^l+\beta_2\over c}+2n^{1\over 2}(\beta_2+\beta_3^s)\big)\big(1+{1\over {|v_-|\over 2\sqrt{2}}-r}\big)\over \alpha({|v_-|\over 2\sqrt{2}}-r)(1-r)^{\alpha+2}}
,
\end{equation*}
for $(f,f_1,f_2)\in M_{r,v_-}^3$.
\end{lemma}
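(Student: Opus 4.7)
The plan is to estimate $\dot A(f)(t)$ and $\dot A(f_1)(t)-\dot A(f_2)(t)$ by applying the mean value theorem to $g$, which turns each expression into the product of a Jacobian factor (whose norm is $\le\sqrt{1-|v_-|^2/c^2}$ along the relevant segment) and an integral $\int_{-\infty}^t(\ldots)(\tau)\,d\tau$ of a source term. Concretely, I would write
\[
\dot A(f)(t)=\Bigl(\int_0^1 Dg\bigl(p_s(t)\bigr)\,ds\Bigr)\int_{-\infty}^t\bigl(F(z_-+x_-+f)-F^l(z_-)\bigr)(\tau)\,d\tau,
\]
where $p_s(t)$ is the convex combination of the two arguments of $g$ appearing in \eqref{2.6}; since both arguments give velocities close to $v_-$ (thanks to Lemma \ref{lem_scatinit} and hypothesis \eqref{lc0}), every such factor is absorbed into a gain of $\sqrt{1-|v_-|^2/c^2}$. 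A second integration in $\sigma\in(-\infty,t)$ recovers $A(f)(t)$ itself.

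For the bound on $\|A(f)\|$, I would decompose the source term as
\[
F(z_-+x_-+f)-F^l(z_-)=\bigl(F^l(z_-+x_-+f)-F^l(z_-)\bigr)+F^s(z_-+x_-+f),
\]
rewrite the first piece via another mean value application acting on $(x_-+f,\dot f)$, and invoke \eqref{1.4a}--\eqref{1.4b}. The crucial geometric input, coming from Lemma \ref{lem_scatinit} together with $v_-\cdot x_-=0$ (so that $|\tau v_-+x_-|\ge(|v_-||\tau|+|x_-|)/\sqrt 2$) and $\|f\|\le r$, is a lower bound of the form
\[
1+|z_-(v_-,\tau)+x_-+f(\tau)|\ \ge\ (1-r)+\bigl(\tfrac{|v_-|}{2\sqrt 2}-r\bigr)|\tau|,
\]
with an analogous bound for $1+|z_-(v_-,\tau)|$. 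Plugging this into \eqref{1.4a}--\eqref{1.4b} makes the $\tau$--integrand decay like $\bigl((1-r)+(\tfrac{|v_-|}{2\sqrt 2}-r)|\tau|\bigr)^{-\alpha-1}$; a first integration produces a factor $\bigl(\tfrac{|v_-|}{2\sqrt 2}-r\bigr)^{-1}(1-r)^{-\alpha}$, a second one provides the remaining $\bigl(\tfrac{|v_-|}{2\sqrt 2}-r\bigr)^{-1}(1-r)^{-\alpha-1}/\alpha$ that appears explicitly in $\lambda_1$. The pointwise terms $r\beta_1^l/c$ and $2\beta_2(n^{1/2}(|x_-|+r)+1)$ separate naturally from the two parts of the decomposition above.

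The contraction estimate follows the same mechanism: writing $\dot A(f_1)-\dot A(f_2)$ as a Jacobian of $g$ times $\int_{-\infty}^t\bigl(F(z_-+x_-+f_1)-F(z_-+x_-+f_2)\bigr)d\tau$, and then expanding this latter difference through one more mean value application, gives an integrand bounded by $|DF|\cdot(|f_1-f_2|+|\dot f_1-\dot f_2|)$. The extra derivative of $F$ costs one more power of the position weight, which is why $\lambda_2$ carries $(1-r)^{-\alpha-2}$ instead of $(1-r)^{-\alpha-1}$; the factors $\beta_2+\beta_3^s$ come from $\pa^2V$ and $\pa B$. Redistributing $|f_1-f_2|$ and $|\dot f_1-\dot f_2|$ against the weights in \eqref{2.3b} and integrating twice exactly as before yields \eqref{lc2}.

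The main obstacle is the bookkeeping imposed by the three separate sups in \eqref{2.3b}: the time-weighted $|\dot A(f)(t)|$ for $t<0$, the unweighted $|\dot A(f)(t)|$ for $t>0$, and $|A(f)(t)|$ on $(-\infty,0)$ all require slightly different $\tau$-integration regions, and in each of them one must check that \eqref{lc0} forces $p_s(t)$ to give velocities in $\overline{\B(0,c)}$ so that Lemma \ref{lem_scatinit} and the mean value chain are legitimate. Careful tracking of the powers of $n$, of $\sqrt{1-|v_-|^2/c^2}$, of $\bigl(\tfrac{|v_-|}{2\sqrt 2}-r\bigr)^{-1}$ and of $(1-r)^{-1}$ through these cases, while using $v_-\cdot x_-=0$ to exploit the orthogonal splitting, is what produces exactly the constants $\lambda_1$ and $\lambda_2$ stated in the lemma.
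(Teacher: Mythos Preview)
Your proposal is correct and follows essentially the same route as the paper's proof: the paper also applies the Lipschitz bound \eqref{02b} for $g$ (your mean value step) together with the lower bound \eqref{2.9c} on the argument of $g$ forced by \eqref{lc0} to extract the factor $2n^{1/2}\sqrt{1-|v_-|^2/c^2}$, decomposes the source as $F^s+(F^l(z_-+x_-+f)-F^l(z_-))$, uses the geometric lower bound \eqref{2.5b} (your displayed inequality) to reduce to explicit $\tau$-integrals, and then treats $t\le 0$ and $t\ge 0$ separately before assembling the three parts of the norm \eqref{2.3b}. The contraction estimate is handled identically via \eqref{03c}--\eqref{03d}.
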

A proof of Lemma \ref{lem_cont} is given in Section \ref{sec_cont}.

We also need the following result.

\begin{lemma}
\label{lem_decomp}
Let $(v_-,x_-)\in (\B(0,c)\b \B(0,\mu^l))\times\R^n$, $v_-\cdot x_-=0$, and let $r\in\big(0,\min({|v_-|\over 2^{3\over 2}},1)\big)$.
When $y_-\in M_{r,v_-}$ is a fixed point for the map $A$ then $x:=z_-(v_-,.)+x_-+y_-$ is a scattering solution for equation \eqref{1.1} and
\begin{equation}
x(t)=z_+(a(v_-,x_-),t)+b(v_-,x_-)+y_+(t),\label{3.13}
\end{equation} 
for $t\ge0$, where
\begin{equation}
a(v_-,x_-):=g\Big(g^{-1}(v_-)+\int_{-\infty}^{+\infty}F(x)(\tau)d\tau\Big),\label{300a}
\end{equation}
\begin{equation}
b(v_-,x_-):=x_-+A(y_-)(0)-y_+(0),\label{3.13a}
\end{equation}
\begin{eqnarray}
y_+(t)&=&-\int_t^{+\infty}\Big(g\big(g^{-1}(a(v_-,x_-))-\int_\sigma^{+\infty} F(x)(\tau)d\tau\big)
\label{3.13b}\\
&&-g\big(g^{-1}(a(v_-,x_-))-\int_\sigma^{+\infty} F^l(z_+(a(v_-,x_-),.))(\tau)d\tau\big)\Big)d\sigma,\nonumber
\end{eqnarray}
for $t\ge 0$. 
\end{lemma}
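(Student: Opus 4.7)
The plan is to extract Newton's equation from the fixed-point identity $y_- = A(y_-)$ and then use the group-like property $g^{-1}(\dot x(t)) - g^{-1}(\dot x(s)) = \int_s^t F(x)(\tau) d\tau$, reinterpreted with base point $+\infty$ instead of $-\infty$ in order to expose the free-like dynamics at the outgoing end.

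First, differentiating $y_- = A(y_-)$ via \eqref{2.6} and noting that integrating \eqref{5.3} from $-\infty$ gives $\dot z_-(v_-, t) = g(g^{-1}(v_-) + \int_{-\infty}^t F^l(z_-(v_-, \cdot))(\tau) d\tau)$, the second $g$-term in $\dot A(y_-)$ coincides with $\dot z_-(v_-, t)$; since $\dot x = \dot z_- + \dot y_-$ this collapses to $\dot x(t) = g(g^{-1}(v_-) + \int_{-\infty}^t F(x)(\tau) d\tau)$. Applying $g^{-1}$ and differentiating in $t$ yields $\dot p(t) = F(x(t), \dot x(t))$, so $x$ solves \eqref{1.1}. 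The decay $|\dot y_-(t)| + |y_-(t)| \to 0$ as $t \to -\infty$ required for \eqref{1.6} follows from the norm \eqref{2.3b} (for $\dot y_-$) together with $y_-(t) = \int_{-\infty}^t \dot A(y_-)(\sigma) d\sigma$ and integrability of $\dot A(y_-)$ near $-\infty$ implied by \eqref{1.4a}--\eqref{1.4b}.

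Next, energy \eqref{1.2} is conserved (by antisymmetry of $B$), and $|x(t)| \to +\infty$ as $|t| \to +\infty$ since $|v_-| > 0$ while $y_-$ is bounded and $|z_-(v_-, t) - t v_-|$ grows sublinearly by Lemma \ref{lem_scatinit}; hence $V(x(t)) \to 0$ and $|v_+| = |v_-| \ge \mu^l$ for $v_+ := a(v_-, x_-)$ from \eqref{300a}. Lemma \ref{lem_scatinit} then provides $z_+(v_+, \cdot)$. Rewriting $\int_{-\infty}^t = \int_{-\infty}^{+\infty} - \int_t^{+\infty}$ in the identity for $\dot x$ and using \eqref{300a} gives $\dot x(t) = g(g^{-1}(v_+) - \int_t^{+\infty} F(x)(\tau) d\tau)$, while the analogous identity $\dot z_+(v_+, t) = g(g^{-1}(v_+) - \int_t^{+\infty} F^l(z_+(v_+, \cdot))(\tau) d\tau)$ holds for $z_+$. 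Subtracting identifies the integrand of \eqref{3.13b} with $\dot x(\sigma) - \dot z_+(v_+, \sigma)$, so $\dot y_+(t) = \dot x(t) - \dot z_+(v_+, t)$; integrating from $0$ to $t$ with $z_+(v_+, 0) = 0$ and $x(0) = x_- + A(y_-)(0)$ produces \eqref{3.13} with $b$ as in \eqref{3.13a}.

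The main obstacle is verifying convergence of the improper integral in \eqref{3.13b} and the decay $|y_+(t)| + |\dot y_+(t)| \to 0$ as $t \to +\infty$ needed to place $x$ in the scattering class \eqref{1.7}. Since $\dot x(t) \to v_+$ with $|v_+| > 0$, the trajectory $x(t)$ grows linearly, so $F(x)(\tau), F^l(z_+(v_+, \cdot))(\tau) = O(\tau^{-\alpha-1})$; the Lipschitz property of $g$ alone gives only $|\dot y_+(\sigma)| = O(\sigma^{-\alpha})$, insufficient for outer integrability when $\alpha \le 1/2$. The saving cancellation is $F(x) - F^l(z_+) = F^s(x) + [F^l(x) - F^l(z_+)]$: the short-range term is $O(\tau^{-\alpha-2})$ by \eqref{1.4b}, while the long-range difference is bounded by a mean value estimate against the bounded position displacement $x - z_+ = b(v_-, x_-) + y_+$ and the velocity difference $\dot y_+$, exploiting the $O(\tau^{-\alpha-2})$ decay of $\nabla V^l, \nabla B^l$ from \eqref{1.4a}. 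This yields the self-referential estimate $|\dot y_+(\sigma)| \le C \sigma^{-\alpha-1} + C \int_\sigma^{+\infty} \tau^{-\alpha-1} |\dot y_+(\tau)| d\tau$, which by a finite bootstrap upgrades $|\dot y_+|$ to rate $\sigma^{-\alpha-1}$, yielding integrability in \eqref{3.13b} and vanishing of $y_+$ at $+\infty$, completing \eqref{1.7}.
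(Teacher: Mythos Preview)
Your argument follows the same architecture as the paper's proof: extract \eqref{1.1} from the fixed-point identity, identify $a(v_-,x_-)$ as the limiting velocity, rewrite $\dot x$ with base point $+\infty$, and recognize the integrand of \eqref{3.13b} as $\dot x-\dot z_+(a,\cdot)$. The paper isolates the convergence step as a separate result, Lemma~\ref{lem:comp}, which proves directly that $\sup_{(0,\infty)}|x-z_+|<\infty$ and $\sup_{t>0}(1+t)^{-1}|\dot x-\dot z_+|(t)<\infty$; once those are in hand, absolute convergence of \eqref{3.13b} and the decay of $y_+$ are immediate.

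There is, however, a circularity in your inline bootstrap. You write that the long-range difference $F^l(x)-F^l(z_+)$ is controlled ``against the bounded position displacement $x-z_+=b+y_+$''. But the boundedness of $x-z_+$ is exactly what the bootstrap is meant to establish; a priori you only have $|\dot y_+|=O(\sigma^{-\alpha})$, hence $|y_+(\tau)|=O(\tau^{1-\alpha})$, not $O(1)$. Consequently your self-referential inequality is missing a term: from \eqref{03c} the position part contributes $|x-z_+|(\tau)\,\tau^{-\alpha-2}$, and with $|x-z_+|(\tau)\lesssim \tau^{1-\alpha}$ this integrates to $O(\sigma^{-2\alpha})$, not $O(\sigma^{-\alpha-1})$. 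The correct induction (this is precisely the content of Lemma~\ref{lem:comp}) tracks both quantities simultaneously: at step $n$ one has $|\dot y_+|=O(\sigma^{-n\alpha})$ and $|y_+|=O(\sigma^{1-n\alpha})$, and feeding these back in yields the rate $(n+1)\alpha$; after $\lfloor 1/\alpha\rfloor+1$ steps one reaches $|\dot y_+|=O(\sigma^{-\gamma})$ with $\gamma>1$, which gives both integrability of \eqref{3.13b} and boundedness of $y_+$. Your sketch has the right mechanism but the stated inequality does not close without this correction.
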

Lemma \ref{lem_decomp} is proved in Section \ref{sec_scatinit}.

\subsection{Estimates on the scattering solutions}
In this Section our main results consist in estimates and asymptotics for the scattering data $(a_{sc}, b_{sc})$ and scattering solutions for the equation \eqref{1.1}.
\begin{theorem}
\label{thm_y}
Under the assumptions of Lemma \ref{lem_decomp} and when 
\begin{equation}
{24n^2\sqrt{1-{|v_-|^2\over c^2}}\max(\beta_1^l,\beta_2)(1+{1\over c})(1+{1\over {|v_-|\over 2^{3\over 2}}-r})\over \alpha({|v_-|\over 2^{3\over 2}}-r)(1-r)^{\alpha+1}}\le 1,\label{t1h}
\end{equation}
then the following estimates are valid:
\begin{equation}
|\dot y_-(t)|\le 
{2n^{3\over 2}\left(1-{|v_-|^2\over c^2}\right)^{1\over 2}\big({r\beta_1^l\over c}+2\beta_2(n^{1\over 2}(|x_-|+r)+1)\big)\over (\alpha+1)\big({|v_-|\over 2\sqrt{2}}-r\big)\big(1-r-\big({|v_-|\over 2\sqrt{2}}-r\big)t\big)^{\alpha+1}},\label{t1a}
\end{equation}
for $t\le 0$. In addition
\begin{equation}
|a_{sc}(v_-,x_-)|\le {8n^{3\over 2}\sqrt{1- {|v_-|^2\over c^2}}\over ({|v_-|\over 2^{3\over 2}}-r)(1+{|x_-|\over \sqrt{2}}-r)^\alpha}\big({\beta_1^l\over \alpha}
+{\beta_2\over (\alpha+1)(1+{|x_-|\over \sqrt{2}}-r)}\big),\label{t1c}
\end{equation}
\begin{equation}
|b_{sc}(v_-,x_-)|\le{4n^{3\over 2}\sqrt{1-{|v_-|^2\over c^2}}\big({r\beta_1^l\over c}+\beta_2(n^{1\over 2}(4|x_-|+r)+4)\big)
\over \alpha (\alpha+1)({|v_-|\over 2\sqrt{2}}-r)^2(1-r)^\alpha},\label{t1d}
\end{equation}
\begin{equation}
|\dot y_+(t)|\le{2n^{3\over 2}\big({r\beta_1^l\over c}+2\beta_2(n^{1\over 2}(3|x_-|+r)+3)\big)\sqrt{1-{|v_-|^2\over c^2}}\over (\alpha+1)\big({|v_-|\over 2\sqrt{2}}-r\big)\big(1-r+t({|v_-|\over 2^{3\over 2}}-r)\big)^{\alpha+1}},\label{t1e}
\end{equation}
for $t\ge 0$, and 
\begin{eqnarray}
&&|a_{sc}(v_-,x_-)-\sqrt{1-{|v_-|^2\over c^2}}\int_{-\infty}^{+\infty}F(\tau v_-+x_-,v_-)d\tau|\nonumber\\
&\le& {520n^4\beta^2\big(1-{|v_-|^2\over c^2}\big)
\big({r\over c}+|x_-|+1\big)({1\over c}+1)\big(1+{1\over {|v_-|\over 2\sqrt{2}}-r}\big)^2\over \alpha^2({|v_-|\over 2\sqrt{2}}-r)^2(1-r)^{2\alpha+3}}
,\label{t3}
\end{eqnarray}
\begin{eqnarray}
&&\Big|b_{sc}(v_-,x_-)-W(v_-,x_-)-\sqrt{1-{|v_-|^2\over c^2}}\Big(\int_{-\infty}^0\int_{-\infty}^\sigma F^s(\tau v_-+x_-,v_-)d\tau d\sigma\nonumber\\
&&-\int_0^{+\infty}\int_\sigma^{+\infty} F^s(\tau v_-+x_-,v_-)d\tau d\sigma+\int_{-\infty}^{+\infty}V^s(\sigma v_-+x_-)d\sigma {v_-\over c^2}\Big)\Big|\nonumber\\
&\le&{468n^4\beta^2\big(1-{|v_-|^2\over c^2}\big)\big({r\over c}+|x_-|+1\big)({1\over c}+1)\big(1+{1\over  
{|v_-|\over 2\sqrt{2}}-r}\big)^2\over \alpha^2(\alpha+1)({|v_-|\over 2\sqrt{2}}-r)^3(1-r)^{2\alpha+2}},\label{t4}
\end{eqnarray}
where $\beta=\max(\beta_1^l,\beta_2,\beta_3^s)$.
\end{theorem}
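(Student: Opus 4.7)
The plan is to combine Banach's fixed-point theorem applied to $A$ on the space $M_{r,v_-}$, via Lemma~\ref{lem_cont}, with a pointwise bootstrap of the resulting norm bound into the sharp decay of $\dot y_\pm$, and with a first-order Taylor expansion of $g$ at $g^{-1}(v_-)$ for the Born-type statements \eqref{t3}--\eqref{t4}.

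\emph{Existence and norm bound.} The hypothesis \eqref{t1h} strengthens the smallness condition \eqref{lc0} of Lemma~\ref{lem_cont} and is calibrated so that in addition $\lambda_2\le\tfrac12$; hence $A$ stabilises $M_{r,v_-}$ and is a strict contraction there, so Banach's theorem produces a unique fixed point $y_-\in M_{r,v_-}$ satisfying $\|y_-\|\le\lambda_1/(1-\lambda_2)\le 2\lambda_1$. Lemma~\ref{lem_decomp} then identifies $x=z_-(v_-,\cdot)+x_-+y_-$ as the scattering solution, with $a$, $b$, and $y_+$ explicitly given by \eqref{300a}--\eqref{3.13b}.

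\emph{Pointwise decay and bounds on $a_{sc}$, $b_{sc}$.} The norm bound alone only yields $|\dot y_-(t)|=O(|t|^{-1})$, which is $\alpha$ powers short of \eqref{t1a}. To recover the extra decay I would substitute $y_-$ back into the identity $\dot y_-(t)=\dot A(y_-)(t)$ given by \eqref{2.6}: the subtraction of the $F^l(z_-)$ integral kills the slowest contribution, and using the Lipschitz character of $g$ together with the mean value theorem on $F^l$ via \eqref{1.4a} and the direct bound \eqref{1.4b} on $F^s$, the net integrand decays like $(1+|\tau|)^{-\alpha-2}$, so its $(-\infty,t]$-integral produces the rate in \eqref{t1a}. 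The same scheme applied to \eqref{3.13b} gives \eqref{t1e}. Bound \eqref{t1c} follows from \eqref{300a} together with the Lipschitz bound on $g$ and direct integration of $F$ along the full trajectory; bound \eqref{t1d} comes from $b-x_-=A(y_-)(0)-y_+(0)$ combined with Lemma~\ref{lem_cont} for $A(y_-)(0)$ and the time integral of \eqref{t1e} for $y_+(0)$.

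\emph{Born-type expansions.} For \eqref{t3} I would Taylor-expand $g$ at $g^{-1}(v_-)$: since $Dg(g^{-1}(v_-))=\sqrt{1-|v_-|^2/c^2}\,(I-c^{-2}v_-v_-^T)$, $a_{sc}=g(g^{-1}(v_-)+\int_\R F(x)\,d\tau)-g(g^{-1}(v_-))$ splits into its linear part $Dg(g^{-1}(v_-))\!\int_\R F(x)\,d\tau$ plus a quadratic remainder in $\int F$. Replacing the perturbed trajectory by the free one $(\tau v_-+x_-,v_-)$ costs an error controlled by $|\nabla_x F|\cdot(|z_-(v_-,\tau)-\tau v_-|+|y_-(\tau)|)+|\nabla_v F|\cdot(|\dot z_--v_-|+|\dot y_-|)$, which decays integrably in $\tau$ by Lemma~\ref{lem_scatinit} and the bounds just obtained. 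The projector term $c^{-2}v_-v_-^T\!\int F(\tau v_-+x_-,v_-)\,d\tau$ vanishes identically because $v_-\cdot Bv_-=0$ by antisymmetry of $B$ and $v_-\cdot\int\nabla V(\tau v_-+x_-)\,d\tau=\int\frac{d}{d\tau}V(\tau v_-+x_-)\,d\tau=0$; this proves \eqref{t3}. For \eqref{t4}, the vector $W$ in \eqref{10.10} is defined precisely so that, when subtracted from $b-x_-=A(y_-)(0)-y_+(0)$, all $F^l$ contributions cancel, and the same Taylor expansion then produces the leading $F^s$ double-integrals. In contrast to \eqref{t3}, the projector $c^{-2}v_-v_-^T$ does \emph{not} annihilate these double-integrals: $v_-\cdot\int_{-\infty}^\sigma F^s(\tau v_-+x_-,v_-)\,d\tau=-V^s(\sigma v_-+x_-)$ by antisymmetry of $B^s$ and the total-derivative identity for $V^s$, so integrating in $\sigma$ contributes exactly the residual $c^{-2}v_-\int_\R V^s(\sigma v_-+x_-)\,d\sigma$ present in \eqref{t4}. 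The principal bookkeeping difficulty is in this last step: verifying the exact cancellation of $F^l$ between $W$ and $A(y_-)(0)-y_+(0)$, pinning down the longitudinal $V^s$ residue, and controlling the quadratic remainders of $g$ together with the substitution errors with the precise numerical constants shown in \eqref{t3}--\eqref{t4}.
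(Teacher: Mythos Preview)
Your overall strategy matches the paper's: use the fixed-point identity $y_-=A(y_-)$ to upgrade the norm bound to the pointwise decay \eqref{t1a}, read off \eqref{t1c} from \eqref{300a}, assemble \eqref{t1d} from $b_{sc}=A(y_-)(0)-y_+(0)$, and obtain \eqref{t3}--\eqref{t4} by a first-order Taylor expansion of $g$ at $g^{-1}(v_-)$ together with the orthogonality $\langle v_-,\int_\R F(\tau v_-+x_-,v_-)\,d\tau\rangle=0$ and, for \eqref{t4}, the residue $\langle v_-,\int_{-\infty}^\sigma F^s(\tau v_-+x_-,v_-)\,d\tau\rangle=-V^s(\sigma v_-+x_-)$. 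All of this is exactly what the paper does, only the paper organises \eqref{t3} into two pieces $\Delta_{1,1}$, $\Delta_{1,2}$ and \eqref{t4} into eight pieces $\Delta_{2,1},\dots,\Delta_{2,8}$.

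There is, however, a genuine gap in your treatment of $\dot y_+$. You write that ``the same scheme applied to \eqref{3.13b} gives \eqref{t1e}'', but it does not. Differentiating \eqref{3.13b} and applying the Lipschitz bound on $g$ leaves you with $\int_t^\infty |F^s(x)|$ plus $\int_t^\infty |F^l(x)-F^l(z_+(a,\cdot))|$. The first term is fine via the lower bound on $|x(\tau)|$ coming from $y_-\in M_{r,v_-}$. But the second term requires $|x(\tau)-z_+(a,\tau)|$ and $|\dot x(\tau)-\dot z_+(a,\tau)|$, which by \eqref{3.13} are $|b_{sc}+y_+(\tau)|$ and $|\dot y_+(\tau)|$ themselves; there is no a-priori analogue of the norm $\|\cdot\|$ on the $t\ge 0$ side to feed in. The paper closes this circularity by introducing
\[
\delta_r:=\max\Big(\sup_{(0,+\infty)}|b_{sc}+y_+|,\ \sup_{t>0}\big(1-r+({\textstyle\frac{|v_-|}{2^{3/2}}}-r)t\big)^{-1}|\dot y_+(t)|\Big),
\]
splitting $y_+=h_0+h_1$ so that the $F^l$-difference sits in $h_1$, deriving an inequality of the form $\delta_r\le C_0+C_1\delta_r$ with $C_1\le\tfrac12$ under \eqref{t1h}, and then solving for $\delta_r$. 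Only after this bootstrap can one substitute back and obtain \eqref{t1e} (and hence \eqref{t1d}). Your sketch mentions ``bootstrap'' in the opening line but the mechanism you describe for $y_+$ is not the one that actually works; you should make the self-referential $\delta_r$ step explicit.
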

Theorem \ref{thm_y} is proved in Section \ref{sec_thm_y}.

\begin{proof}[Proof of Theorem \ref{thm}]
Let $(\theta,x)\in T\S^{n-1}$ and let $(r,\rho)\in (0,\min(1,2^{-{3\over 2}}c))\times (0,+\infty)$, $\rho>\rho_0(|x|,r,\beta,\alpha)$, where $\rho_0$ is defined in \eqref{1.30}. Set $(v_-,x_-)=(\rho\theta,x)$. Then 
note that
\begin{equation}
\max\big(\lambda_0,{\lambda_1\over r},\lambda_2,\lambda_3\big)
\le{32n^2\sqrt{1-{|v_-|^2\over c^2}}\beta(1+|x_-|+{1\over c})(1+{1\over {|v_-|\over 2^{3\over 2}}-r})\over \alpha({|v_-|\over 2^{3\over 2}}-r)r(1-r)^{\alpha+2}}<1,\label{801}
\end{equation}
where $\lambda_1$ and 
$\lambda_2$ are defined in \eqref{lc1} and \eqref{lc2} respectively, and where 
$\lambda_0:=
{2^8n^2\max(\beta_1^l,\beta_2)\sqrt{1-{|v_-|^2\over c^2}}
\over \alpha|v_-|^2}$ and $\lambda_3$ is the left-hand side of \eqref{t1h} (we also used \eqref{1.30}).
From estimate \eqref{801} and Lemma \ref{lem_cont} we obtain:  $|v_-|\ge \mu^l$ (see \eqref{5.3dd}), $A$ is a contraction in $M_{r,v_-}$, and Theorem \ref{thm_y} holds for the unique fixed point $y_-\in M_{r,v_-}$ of $A$. The estimate \eqref{t3} and \eqref{t4} hold and they provide the estimates \eqref{t3b} and \eqref{t4b}, which proves Theorem \ref{thm}.
\end{proof}

\subsection{Motivations for changing the definition of the scattering map}
\label{com}
For a solution  $x$ at a nonzero energy for equation \eqref{1.1} we say that it is a scattering solution when there exists $\ep>0$ so that $1+|x(t)|\ge \ep(1+|t|)$ for $t\in \R$ (see \cite{DG}).
In the Introduction and in the previous subsections we choose to parametrize the scattering solutions of equation \eqref{1.1} by the solutions $z_\pm(v,.)$ of the equation \eqref{5.3} (see the asymptotic behaviors \eqref{1.6} and \eqref{1.7}), and then to formulate the inverse scattering problem \eqref{P} using this parametrization. 
We obtain the estimates \eqref{t3b} and \eqref{t4b} that provide the high energies asymptotics and the Born approximation at fixed energy of the scattering data. However these estimates 
do not provide the asymptotics of the scattering data
$(a_{sc},b_{sc})$ when the parameters $\alpha,$ $n,$ $\rho$ and $\beta$ are fixed and $|x|\to +\infty$.
Motivated by this disadvantage, in the next section we modify the definition of the scattering map given in the Introduction so that one can obtain a result on this asymptotic regime. 

\section{A modified scattering map}
\subsection{Changing the parametrization of the scattering solutions}
We set
\begin{equation}
\mu^l_\sigma:=\mu\big({2^8\alpha^{-1}n^2\max(\beta_1^l,\beta_2^l)(1+{\sigma\over \sqrt{2}})^{-\alpha}}\big),\label{0.0}
\end{equation}
for $\sigma\ge 0$, where the function $\mu$ is defined by \eqref{1.2b}.

Under conditions \eqref{1.4a} and \eqref{1.4b}, the following is valid: for any 
$(v_-,x_-)\in \B(0,c)\times\R^n$ so that $|v_-|> \mu^l_{|x_-|}$, then
the equation \eqref{1.1}  has a unique solution $x\in C^2(\R,\R^n)$ such that
\begin{equation}
{x(t)=z_-(v_-,x_-,t)+y_-(t),}\label{4.6}
\end{equation}
where $\dot y_-(t)\to 0,\ y_-(t)\to 0,\ {\rm as}\ t\to -\infty;$

In addition the function $y_-$ in \eqref{4.6} satisfies the integral equation
$y_-=\A(y_-)$
where
\begin{eqnarray}
\A(f)(t)&=&\int_{-\infty}^t\dot\A(f)(\sigma)d\sigma,\label{4.1a}\\
\dot\A(f)(t)&=&g\big(g^{-1}(v_-)+\int_{-\infty}^t F(z_-(v_-,x_-,.)+f)(\tau)d\tau\big)\nonumber\\
&&-g\big(g^{-1}(v_-)+\int_{-\infty}^t F^l(z_-(v_-,x_-,.))(\tau)d\tau\big),\label{4.7}
\end{eqnarray}
for $t\in \R$ and for $f\in C^1(\R,\R^n)$, $\sup_{(-\infty,0]}(|f|+|\dot f|)<\infty$.
We remind that  $H(f(\tau),\dot f(\tau))$ is shortened to $H(f)(\tau)$ for any $(f,\tau)\in C^1(\R,\R^n)\times\R$ above and in the rest of the text, where $H$ stands for $F$, $F^s$ or $F^l$. 

We have $\A(f)\in C^2(\R,\R^n)$ for $f\in C^1(\R,\R^n)$ so that $\sup_{(-\infty,0]}(|f|+|\dot f|)<\infty$ (see \eqref{02b}, \eqref{03b} and \eqref{03c}). 

For $r\in\big(0,\min\big(1,{|v_-|\over 2^{3\over 2}}\big)\big)$, we introduce the following metric space $M_{r,v_-,x_-}$ endowed with the following norm $\|.\|_*$
\begin{equation}
M_{r,v_-,x_-}=
\lbrace
f\in C^1(\R,\R^n)\ |\ \sup_{\R}|\dot z_-(v_-,x_-,.)+\dot f|\le c,\ \|f\|_*\le r
\rbrace,\label{4.3}
\end{equation}
\begin{equation}
\|f\|_*=\max\big(\sup_{t\in (-\infty,0)}\max\big(1,\big(1-r+{|x_-|\over \sqrt{2}}+({|v|\over 2^{3\over 2}}-r)|t|\big)\big)
|\dot f(t)|, \sup_{(0,+\infty)}|\dot f|,\sup_{(-\infty,0)}|f|\Big).\label{4.3b}
\end{equation}
The space $M_{r,v_-,x_-}$ is a convex subset of $C^1(\R,\R^n)$.
We study the map $\A$ defined by \eqref{4.1a} and \eqref{4.7} on the metric space $M_{r,v_-,x_-}$.
Set
\begin{equation}
\tilde k(v_-,x_-,f):=g\big(g^{-1}(v_-)+\int_{-\infty}^{+\infty}F(z_-(v_-,x_-,.)+f)(t)dt\big),\label{5.18aa}
\end{equation}
for $f\in M_{r,v_-,x_-}$.
For the rest of the section we also set $\beta_2=\max(\beta_2^l,\beta_2^s)$.

The following Lemma \ref{lem_cont2} is the analog of Lemma \ref{lem_cont}.

\begin{lemma}
\label{lem_cont2}
Let $(v_-,x_-)\in \B(0,c)\times\R^n$, $v_-\cdot x_-=0$, $|v_-|>\mu^l_{|x_-|}$,  and let $r\in \big(0,\min(1,{|v_-|\over 2^{3\over 2}})\big)$.
When
\begin{equation}
{2^{3\over 2}n\max(\beta_1^l,\beta_2)\sqrt{1-{|v_-|^2\over c^2}}
\big(1+{1\over 1+{|x_-|\over \sqrt{2}}-r}\big)\over \alpha\big({|v_-|\over 2\sqrt{2}}-r\big)^2\big(1+{|x_-|\over \sqrt{2}}-r\big)^\alpha}\le 1,\label{lc2_0}
\end{equation}
then the following estimates are valid
\begin{equation}
\|\A(f)\|_*\le\tilde \lambda_1(n,\alpha,\beta_1^l,\beta_2,|x_-|,|v_-|,r),\label{lc3}
\end{equation}
\begin{equation*}
\tilde \lambda_1:={2n^{3\over 2}\big(1-{|v_-|^2\over c^2}\big)^{1\over 2}({\beta_1^l r\over c}+4\beta_2(n^{1\over 2}r+1))\over \alpha({|v_-|\over 2\sqrt{2}}-r)(1-r+{|x_-|\over\sqrt{2}})^\alpha}\big(1+{1\over{|v_-|\over 2\sqrt{2}}-r}+{1\over 1-r+{|x_-|\over\sqrt{2}}}\big),
\end{equation*}
and
\begin{equation}
\|\A(f_1)-\A(f_2)\|_*\le\tilde\lambda_2(n,\alpha,\beta_2,\beta_3^s,|x_-|,|v_-|,r)\|f_1-f_2\|_*,\label{lc4}
\end{equation} 
\begin{equation*}
\tilde\lambda_2:={4n^{3\over 2}\big(1-{|v_-|^2\over c^2}\big)^{1\over 2}({\beta_1^l \over c}+2\beta_2n^{1\over 2}+{{\beta_2 \over c}+2\beta_3^sn^{1\over 2}\over 1-r+{|x_-|\over\sqrt{2}}})\over \alpha({|v_-|\over 2\sqrt{2}}-r)(1-r+{|x_-|\over\sqrt{2}})^\alpha}\big(1+{1\over{|v_-|\over 2\sqrt{2}}-r}+{1\over 1-r+{|x_-|\over\sqrt{2}}}\big).
\end{equation*}
for $(f,f_1,f_2)\in M_{r,v_-,x_-}^3$.
In addition we have
\begin{equation}
|\tilde k(v_-,x_-,f)-v_-|\le {8n^{3\over 2}\sqrt{1-{|v_-|^2\over c^2}}\over ({|v_-|\over 2\sqrt{2}}-r) (1+{|x_-|\over \sqrt{2}}-r)^\alpha}\big({\beta_1^l\over \alpha}+{\beta_2\over (\alpha+1)(1+{|x_-|\over \sqrt{2}}-r)}\big),\label{lc7}
\end{equation}
for $f\in M_{r,v_-,x_-}$.
\end{lemma}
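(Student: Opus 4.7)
The plan is to mirror the proof of Lemma \ref{lem_cont}, with the straight-line free trajectory $tv_-+x_-$ replaced by the true long-range flow $z_-(v_-,x_-,\cdot)$ and with the modified weight \eqref{4.3b} in the norm. The condition $|v_-|>\mu^l_{|x_-|}$ is exactly \eqref{5.1a} at $q=0$, so Lemma \ref{lem_scatinit} gives the existence of $z_-(v_-,x_-,\cdot)$ together with the bound
\begin{equation*}
\sup_{\R}|\dot z_-(v_-,x_-,\cdot)-v_-|\le {2^{9/2}n^{3/2}\beta_1^l\sqrt{1-|v_-|^2/c^2}\over \alpha|v_-|(1+|x_-|/\sqrt{2})^{\alpha}}\le {|v_-|\over 2^{5/2}}.
\end{equation*}
Integrating this bound from $0$ and using $z_-(v_-,x_-,0)=x_-$ gives $|z_-(v_-,x_-,t)-tv_--x_-|\le |t||v_-|/2^{5/2}$ for all $t\in\R$. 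Since $v_-\perp x_-$, $|tv_-+x_-|\ge(|t||v_-|+|x_-|)/\sqrt{2}$; combining with $|f(t)|\le r$ for $t\le 0$ and $|f(t)|\le r(1+t)$ for $t\ge 0$ (both from $\|f\|_*\le r$) would yield the key uniform lower bound
\begin{equation*}
1+|z_-(v_-,x_-,t)+f(t)|\;\ge\; 1-r+{|x_-|\over\sqrt{2}}+\Big({|v_-|\over 2\sqrt{2}}-r\Big)|t|,\qquad t\in\R,\ f\in M_{r,v_-,x_-},
\end{equation*}
which is exactly the weight appearing in \eqref{4.3b} and in the denominators of \eqref{lc3}--\eqref{lc4}.

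Next I would insert this lower bound into the decay hypotheses \eqref{1.4a}--\eqref{1.4b} to estimate pointwise $|\nabla V^l|$, $|B^l|$, $|\nabla V^s|$, $|B^s|$ and their derivatives evaluated at $z_-(v_-,x_-,t)+f(t)$. Writing $\dot\A(f)(t)=g(A(t))-g(B(t))$ with $A,B$ as in \eqref{4.7}, I would split
\begin{equation*}
F(z_-+f)-F^l(z_-)=\big[F^l(z_-+f)-F^l(z_-)\big]+F^s(z_-+f),
\end{equation*}
bound the first bracket by the mean value theorem in terms of $|f|+|\dot f|$ times the second derivatives of $V^l$ and first derivatives of $B^l$ (both controlled by the lower bound above), and bound the second term directly by the $F^s$-estimates. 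The smallness condition \eqref{lc2_0} is the exact hypothesis needed to ensure that $A(t)-g^{-1}(v_-)$ stays in a region where the Lipschitz constant of $g$ is $\le C\sqrt{1-|v_-|^2/c^2}$, producing the overall prefactor $\sqrt{1-|v_-|^2/c^2}$ in \eqref{lc3}, \eqref{lc4} and \eqref{lc7}. Integrating against the weight $(1-r+|x_-|/\sqrt{2}+(|v_-|/2\sqrt{2}-r)|t|)^{-\alpha-1}$ using $\int_a^\infty(1+bs)^{-\alpha-1}ds\le(\alpha b)^{-1}(1+ba)^{-\alpha}$ yields the three suprema defining $\|\A(f)\|_*$ and delivers \eqref{lc3}.

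The contraction estimate \eqref{lc4} is obtained by the same scheme after replacing each pointwise bound $|F^l(z_-+f)-F^l(z_-)|$ by $|F^l(z_-+f_1)-F^l(z_-+f_2)|$ and $|F^s(z_-+f)|$ by $|F^s(z_-+f_1)-F^s(z_-+f_2)|$, both controlled by the mean value theorem times $|f_1-f_2|+|\dot f_1-\dot f_2|\le C\|f_1-f_2\|_*/(\text{weight})$. Finally, \eqref{lc7} would follow by the same Lipschitz-of-$g$ argument applied to the limit $t\to+\infty$: writing $\tilde k(v_-,x_-,f)-v_-=g\big(g^{-1}(v_-)+\int_{-\infty}^{+\infty}F(z_-(v_-,x_-,\cdot)+f)(\tau)d\tau\big)-g(g^{-1}(v_-))$, the Lipschitz bound gives a factor $\sqrt{1-|v_-|^2/c^2}$ times $\int_\R|F(z_-+f)|d\tau$, which by the pointwise bounds splits into two integrals of the envelope $(1-r+|x_-|/\sqrt{2}+(|v_-|/2\sqrt{2}-r)|t|)^{-\alpha-1}$ and $(\cdots)^{-\alpha-2}$, giving exactly the two terms in \eqref{lc7}.

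The main technical obstacle is not conceptual but bookkeeping: one must track the precise powers of $n$, the constants $2^{3/2},2^{5/2},2^{9/2}$, and the competing powers $(1+|x_-|/\sqrt{2}-r)^{-\alpha}$ versus $(|v_-|/2\sqrt{2}-r)^{-1}$ so that the three separate suprema in $\|\A(f)\|_*$ combine into the single bound \eqref{lc3}, and in particular so that the factor $(1+1/(|v_-|/2\sqrt{2}-r)+1/(1-r+|x_-|/\sqrt{2}))$ emerges cleanly. Verifying that the hypothesis \eqref{lc2_0} is strong enough to absorb both the $F^l$-induced deflection of $z_-$ from the straight line and the error from expanding $g$ around $g^{-1}(v_-)$ is the one point where care is required; everything else is routine once the uniform lower bound on $1+|z_-(v_-,x_-,t)+f(t)|$ above is in hand.
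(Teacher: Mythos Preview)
Your proposal is correct and follows essentially the same route as the paper: the paper establishes the same key lower bound $|z_-(v_-,x_-,\tau)+f(\tau)|\ge {|x_-|\over\sqrt{2}}-r+|\tau|\big({|v_-|\over 2\sqrt{2}}-r\big)$, uses it with \eqref{03a}--\eqref{03d} and condition \eqref{lc2_0} to control the argument of $g$, and then explicitly states that the resulting estimates are proved ``similarly'' to those in Lemma~\ref{lem_cont}, exactly as you outline. Your identification of \eqref{lc2_0} as the condition guaranteeing the Lipschitz factor $\sqrt{1-|v_-|^2/c^2}$ for $g$, and of \eqref{lc7} as the $t\to+\infty$ limit estimate, matches the paper's argument precisely.
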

Lemma \ref{lem_cont2} is proved in Section \ref{sec_cont}.

Let $(v_-,x_-)\in \B(0,c)\times\R^n$, $v_-\cdot x_-=0$  and let $r\in\big(0,\min({1\over 2},{|v_-|\over 2^{3\over 2}})\big)$. 
Under the following condition  
\begin{equation}
{48n^2(c^{-1}+1)\max(\beta_1^l,\beta_2)\sqrt{1-{|v_-|^2\over c^2}}\over \alpha({|v_-|\over 2^{3\over 2}}-r)({1\over 2}+{|x_-|\over\sqrt{2}})^\alpha}
(1+{1\over {|v_-|\over 2^{3\over 2}}-r})\le 1,\label{10.1}
\end{equation}
then condition \eqref{5.1a} is satisfied for any $q\in\overline{\B(0,{1\over 2})}$, and condition \eqref{lc2_0} is also satisfied. In particular, $|v_-|\ge \mu_{|x_-|}^l$, and 
when $y_-\in M_{r,v_-,x_-}$ is a fixed point of the operator $\A$ then $x:=z_-(v_-,x_-,.)+y_-$ is a scattering solution of \eqref{1.1} (in the sense given in Section \ref{com}). We set
\begin{equation}
\tilde a(v_-,x_-):=\tilde k(v_-,x_-,y_-).\label{5.18a}
\end{equation} 
By conservation of energy $|\tilde a(v_-,x_-)|=\lim_{t\to\infty}|\dot x(t)|=|v_-|$. 
From \eqref{lc7} and condition \eqref{10.1} it follows that $|\tilde a(v_-,x_-)-v_-|\le2^{-{5\over 2}}|v_-|$, and 
we can consider the free solution $z_+(\tilde a(v_-,x_-),x_-+q,.)$ for any $q\in \overline{\B(0,{1\over 2})}$ (see Lemma \ref{lem_scatinit}). Furthermore, with appropriate changes in the proof of Lemma \ref{lem_decomp}, one can prove that the following decomposition holds 
\begin{equation}
x(t)=z_+(\tilde a(v_-,x_-),x_-+q,t)+\G_{v_-,x_-}(q)-q+h(v_-,x_-,q,t),\label{6.6}
\end{equation}
where      
\begin{equation}
\G_{v_-,x_-}(q):=\A(y_-)(0)-h(v_-,x_-,q,0),\label{6.1b}
\end{equation}
and
\begin{eqnarray}
h(v_-,x_-,q,t)
&:=&-\int_t^{+\infty}\Big(g\big(g^{-1}(\tilde a(v_-,x_-))-\int_\sigma^{+\infty}F(x)(\tau))d\tau\big)\label{6.6c}\\
&&\hskip-2cm -g\big(g^{-1}(\tilde a(v_-,x_-))-\int_\sigma^{+\infty}F^l(z_+(\tilde a(v_-,x_-),x_-+q,.))(\tau)d\tau\big)\Big)d\sigma,\nonumber
\end{eqnarray}
for $t\ge 0$ and for $q\in \overline{\B(0,{1\over 2})}$. We need the following Lemma.

\begin{lemma}
\label{lem_cont3}
Let $(v_-,x_-)\in \B(0,c)\times\R^n$, $v_-\cdot x_-=0$, and let $r\in\big(0,\min({1\over 2},{|v_-|\over 2^{3\over 2}})\big)$. Under condition \eqref{10.1} and when $y_-\in M_{r,v_-,x_-}$ is a fixed point of $\A$, then
\begin{equation}
|\G_{v_-,x_-}(q)|\le{4n^{3\over 2}\sqrt{1-{|v_-|^2\over c^2}}({\beta_1^l\over c}+2n^{1\over 2}\beta_2+4\beta_2)\over \alpha(\alpha+1)\big({|v_-|\over 2^{3\over 2}}-r\big)^2\big({1\over 2}+{|x_-|\over \sqrt{2}}\big)^\alpha}
\le{1\over 2},\label{lc3a}
\end{equation}
for $|q|\le {1\over 2}$, and
\begin{equation}
|\G_{v_-,x_-}(q)-\G_{v_-,x_-}(q')|\le{4n^{3\over 2}\sqrt{1-{|v_-|^2\over c^2}}({\beta_1^l\over c}+2n^{1\over 2}\beta_2^l)|q-q'|\over \alpha(\alpha+1)({|v_-|\over 2^{3\over 2}})^2({1\over 2}+{|x_-|\over \sqrt{2}})^\alpha}\le {|q-q'|\over 6},\label{lc3b}
\end{equation}
for $(q,q')\in \overline{\B(0,{1\over 2})}^2$.
\end{lemma}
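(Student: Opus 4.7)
The plan is to estimate $\G_{v_-,x_-}(q) = \A(y_-)(0) - h(v_-,x_-,q,0)$ from \eqref{6.1b} by bounding each piece as an iterated integral. The starting point is the $1$-Lipschitz property of $g$ from \eqref{01} (a consequence of $|Dg|\le 1$), giving
\[
|\A(y_-)(0)| \le \int_{-\infty}^0\!\int_{-\infty}^t \bigl|F(x)(\tau) - F^l(z_-(v_-,x_-,\cdot))(\tau)\bigr|\,d\tau\,dt,
\]
together with the analogous bound on $|h(v_-,x_-,q,0)|$ over $(0,+\infty)$ in which $z_+(\tilde a(v_-,x_-), x_-+q,\cdot)$ replaces $z_-(v_-,x_-,\cdot)$; here $x = z_-(v_-,x_-,\cdot) + y_-$ denotes the scattering solution.

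In each integrand I split $F = F^l + F^s$. The short range part is bounded via \eqref{1.4b} by $Cn\beta_2(1+|x(\tau)|)^{-\alpha-2}$. The long range difference $F^l(x,\dot x) - F^l(z_\pm,\dot z_\pm)$ is linearized by the mean value theorem and \eqref{1.4a}, producing terms of the form $Cn\beta_2(1+|x|)^{-\alpha-2}|x-z_\pm| + Cn^{1/2}c^{-1}\beta_1^l(1+|x|)^{-\alpha-1}|\dot x - \dot z_\pm|$. For $\tau \le 0$, the identity $x - z_- = y_-$ and the norm $\|y_-\|_* \le r$ from \eqref{4.3b} suffice, with $|\dot y_-(\tau)|$ decaying like $(1-r+|x_-|/\sqrt{2}+(|v_-|/2^{3/2}-r)|\tau|)^{-1}$. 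For $\tau \ge 0$, I control $|x(\tau) - z_+(\tilde a, x_-+q,\tau)|$ uniformly in $\tau$ by a bootstrap, closing the estimate with $\|y_-\|_* \le r$, \eqref{lc7} (which gives $|\tilde a - v_-| \le |v_-|/2^{5/2}$ under \eqref{10.1}), and $|q|\le 1/2$. The lower bound $|x(\tau)| \ge (1-r+|x_-|/\sqrt{2}+(|v_-|/2^{3/2}-r)|\tau|)/\sqrt{2}$ follows from \eqref{5.3b}, the orthogonality $v_-\cdot x_- = 0$, and $|a+b|\ge(|a|+|b|)/\sqrt{2}$ for orthogonal $a,b$; it powers the decay in the double integration $\int\!\!\int(\mathrm{const}+a|\tau|)^{-\alpha-2}\,d\tau\,d\sigma$, producing the characteristic denominator $a^2\alpha(\alpha+1)(\mathrm{const})^\alpha$ of \eqref{lc3a}. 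The final bound $\le 1/2$ is then immediate from \eqref{10.1}.

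For \eqref{lc3b}, since $\A(y_-)$ is independent of $q$, the $F(x)(\tau)$ integrals cancel in $\G(q') - \G(q) = h(v_-,x_-,q,0) - h(v_-,x_-,q',0)$. Using the integrated form of the free equation \eqref{5.3} to identify $g\bigl(g^{-1}(\tilde a) - \int_\sigma^{+\infty} F^l(z_+(\tilde a, x_-+q,\cdot))(\tau)\,d\tau\bigr) = \dot z_+(\tilde a, x_-+q, \sigma)$, one obtains
\[
\G(q') - \G(q) = \int_0^{+\infty} \bigl[\dot z_+(\tilde a, x_-+q,\sigma) - \dot z_+(\tilde a, x_-+q',\sigma)\bigr]\,d\sigma.
\]
Applying the $1$-Lipschitz bound on $g$ once more and linearizing $F^l$ reduces the integrand to a double integral over $\tau \ge \sigma$ whose terms are controlled by the uniform Lipschitz dependence of $z_+$ on the initial position, namely $|z_+(\tilde a, x_-+q,\tau) - z_+(\tilde a, x_-+q',\tau)| \le C|q-q'|$ and the analogous velocity bound improved by a factor $(1+|\tau|)^{-1}$. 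This Lipschitz dependence comes from a fixed-point argument parallel to Lemma \ref{lem_scatinit} under \eqref{10.1}. Double integration produces \eqref{lc3b}, and the factor $1/6$ again arises from \eqref{10.1}.

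The main obstacle is the forward ($\tau \ge 0$) part of the first estimate: the naive approach via the a posteriori decomposition \eqref{6.6} would invoke $|x(\tau) - z_+(\tilde a, x_-+q,\tau)|$, whose limit as $\tau \to \infty$ equals $\G(q) - q$, so inserting this into the bound on $|h(v_-,x_-,q,0)|$ creates a circular dependence on the quantity we are trying to bound. The bootstrap sidesteps this by extracting an a priori uniform bound on the difference solely from $\|y_-\|_* \le r$, \eqref{lc7}, and Lemma \ref{lem_scatinit}.
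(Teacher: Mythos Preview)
Your overall architecture matches the paper's: decompose $\G_{v_-,x_-}(q)=\A(y_-)(0)-h(v_-,x_-,q,0)$, estimate $\A(y_-)(0)$ via the already-proven bound on the operator $\A$, handle the forward piece by a bootstrap on the difference $x-z_+(\tilde a,x_-+q,\cdot)$ (the paper splits this further as $\omega-\omega_{+,q,0}$, using the reference point $q=0$ and Lemma~\ref{lem_scatinit2}), and for \eqref{lc3b} reduce to the Lipschitz dependence of $z_+$ on its initial position. All of that is correct in spirit.

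The genuine gap is your repeated appeal to ``the $1$-Lipschitz property of $g$ (a consequence of $|Dg|\le 1$)''. The bounds \eqref{lc3a} and \eqref{lc3b} carry the relativistic factor $\sqrt{1-|v_-|^2/c^2}$ in the numerator, and this factor does \emph{not} come from $|Dg|\le 1$. It comes from the refined mean-value estimate \eqref{02b}, namely $|g(p)-g(p')|\le \sqrt{n}\,|p-p'|\sup_{\ep}(1+c^{-2}|\ep p+(1-\ep)p'|^2)^{-1/2}$, combined with a lower bound on the arguments of $g$ of the form $|g^{-1}(v_-)+\cdots|\ge \tfrac12|g^{-1}(v_-)|$ (see \eqref{4.9e}, \eqref{5.7}, \eqref{5.23}). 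This is what turns $\sqrt{n}$ into $2\sqrt{n}\,\sqrt{1-|v_-|^2/c^2}$. With only the crude bound $|Dg|\le 1$ you lose this factor entirely.

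That loss is fatal under the stated hypothesis. Condition~\eqref{10.1} has the shape $C\sqrt{1-|v_-|^2/c^2}\cdot(\text{decay terms})\le 1$; it becomes vacuously easy as $|v_-|\to c$ and gives no smallness of the decay terms alone. So a bound on $|\G_{v_-,x_-}(q)|$ missing the factor $\sqrt{1-|v_-|^2/c^2}$ cannot be made $\le\tfrac12$ from \eqref{10.1}, and likewise the contraction constant in \eqref{lc3b} cannot be made $\le\tfrac16$. You need to replace every use of ``$g$ is $1$-Lipschitz'' by \eqref{02b} together with an explicit lower bound on the momentum-type argument, exactly as the paper does in \eqref{5.7}, \eqref{4.9e}, \eqref{5.23}, and in the derivation of \eqref{5.17a} and \eqref{5.25g}.
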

Lemma \ref{lem_cont3} is proved in Section \ref{sec_cont3}.

Under the assumptions of Lemma \ref{lem_cont3} the map $\G_{v_-,x_-}$ is a ${1\over 6}$-contraction map from $\overline{\B(0,{1\over 2})}$ to $\overline{\B(0,{1\over 2})}$. We denote by $\tilde b_{sc}(v_-,x_-)$ its unique fixed point in $\overline{\B(0,{1\over 2})}$, and we set $\tilde b(v_-,x_-):=x_-+\tilde b_{sc}(v_-,x_-)$ and $\tilde a_{sc}(v_-,x_-):= \tilde a(v_-,x_-)-v_-$. The decomposition \eqref{6.6} becomes
\begin{eqnarray}
&&z_-(v_-,x_-,t)+y_-(t)
=z_+(\tilde a(v_-,x_-),\tilde b(v_-,x_-),t)+y_+(t),\label{6.6a}\\
&&y_+(t)=h(v_-,x_-,\tilde b_{sc}(v_-,x_-),t),\label{6.6b}
\end{eqnarray}
for $t\ge 0$. The map $(\tilde a_{sc}, \tilde b_{sc})$ are our modified scattering data. The inverse scattering problem for equation \eqref{1.1} can now be formulated as follows 
\begin{equation}
\textrm{Given }(\tilde a_{sc},\tilde b_{sc})\textrm{ and }F^l,\textrm{ find }F^s.\label{P2}
\end{equation}

\subsection{Estimates and asymptotics of the modified scattering data}
Let $r\in\big(0,\min({1\over 2},2^{-{3\over 2}}c)\big)$ and let $(v_-,x_-)\in \B(0,c)\times\R^n$, $v_- \cdot x_-=0$ so that $|v_-|>\tilde \rho_0(|x_-|,r,\beta,\alpha)$, where $\beta=\max(\beta_1^l,\beta_2,\beta_3^s)$ and $\tilde \rho_0$ is defined as the root of the following equation
\begin{equation}
1={72n^2(c^{-1}+1)\beta\sqrt{1-{\tilde\rho_0^2\over c^2}}(1+{1\over {\tilde\rho_0\over 2^{3\over 2}}-r})\over \alpha r({\tilde\rho_0\over 2^{3\over 2}}-r)({1\over 2}+{|x_-|\over\sqrt{2}})^\alpha}
,\ \tilde \rho_0\in(2^{3\over 2}r,c).\label{17.30}
\end{equation}
Then note that
\begin{equation}
\max\big({\tilde\lambda_1\over r},\tilde\lambda_2,\tilde \lambda_3\big)
\le{72n^2(c^{-1}+1)\beta\sqrt{1-{\rho^2\over c^2}}(1+{1\over {\rho\over 2^{3\over 2}}-r})\over \alpha r({\rho\over 2^{3\over 2}}-r)({1\over 2}+{|x_-|\over\sqrt{2}})^\alpha},\ \rho \in(2^{3\over 2}r,c),\label{802}
\end{equation}
where $\tilde \lambda_1(n,\alpha,\beta_1^l,\beta_2,|x_-|,\rho,r)$ and $\tilde \lambda_2(n,\alpha,\beta_2,\beta_3^s,|x_-|,\rho,r)$ are defined in \eqref{lc3} and \eqref{lc4} respectively, and where 
$\tilde \lambda_3$ is the left-hand side of \eqref{10.1} (for $|v_-|$ replaced by $\rho$ in \eqref{10.1}).
From estimate \eqref{802}, \eqref{17.30} and Lemma \ref{lem_cont2} we obtain:  $|v_-|\ge \mu^l_{|x_-|}$ (see \eqref{0.0}), $\A$ is a contraction in $M_{r,v_-,x_-}$, and $\A$ has a unique fixed point $y_-\in M_{r,v_-,x_-}$ of $\A$. 

Then set
\begin{eqnarray}
\tilde W(v_-,x_-)&=&g\big(g^{-1}(v_-)+\int_{-\infty}^0 F^l(z_-(v_-,x_-,.))(\tau)d\tau\label{t2.0}\\
&&+\int_0^{+\infty}F^l(z_+(\tilde a(v_-,x_-),x_-,.))(\tau)d\tau\big)-v_-.\nonumber
\end{eqnarray}
Note that $\tilde W$ is known from the modified scattering data and from $F^l$.
We obtain the following analog of Theorem \ref{thm_y}.
  
\begin{theorem}
\label{thm_y2}
Let $r\in\big(0,\min({1\over 2},2^{-{3\over 2}}c)\big)$ and let $(v_-,x_-)\in \B(0,c)\times\R^n$, $v_- \cdot x_-=0$ so that $|v_-|\ge \tilde \rho_0(|x_-|,r,\beta,\alpha)$ where $\beta=\max(\beta_1^l,\beta_2,\beta_3^s)$ and $\tilde\rho_0$ is defined by \eqref{17.30}. 
Under conditions \eqref{1.4a}, \eqref{1.4b} the following estimates are valid:
\begin{eqnarray}
|\dot y_-(t)|&\le&
{2n^{3\over 2}\left({1-{|v_-|^2\over c^2}}\right)^{1\over 2}\big({r\beta_1^l\over c}+2\beta_2(n^{1\over 2}r+1)\big)\over (\alpha+1)\big({|v_-|\over 2\sqrt{2}}-r\big)\big(1-r+{|x_-|\over \sqrt{2}}-\big({|v_-|\over 2\sqrt{2}}-r\big)t\big)^{\alpha+1}},\label{t2a}
\end{eqnarray}
for $t \le 0$; and 
\begin{equation}
|\tilde a_{sc}(v_-,x_-)|\le {24n^{3\over 2}\left({1-{|v_-|^2\over c^2}}\right)^{1\over 2}\max(\beta_1^l,\beta_2)\over \alpha\big({|v_-|\over 2\sqrt{2}}-r\big)\big({1\over 2}+{|x_-|\over \sqrt{2}}\big)^\alpha},\label{t2c}
\end{equation}
\begin{equation}
|\tilde b_{sc}(v_-,x_-)|\le{24n^2\left({1-{|v_-|^2\over c^2}}\right)^{1\over 2}\max(\beta_1^l,\beta_2)({1\over c}+1)
\over \alpha(\alpha+1)\big({|v_-|\over 2\sqrt{2}}-r\big)^2\big({1\over 2}+{|x_-|\over\sqrt{2}}\big)^\alpha},\label{t2d}
\end{equation}
\begin{equation}
|\dot y_+(t)|\le{20n^4\left({1-{|v_-|^2\over c^2}}\right)^{1\over 2}\max(\beta_1^l,\beta_2)({1\over c}+1)
\over (\alpha+1)\big({|v_-|\over 2\sqrt{2}}-r\big)\big({1\over 2}+{|x_-|\over \sqrt{2}}+t{|v_-|\over 2\sqrt{2}}\big)^{\alpha+1}}.\label{t2e}
\end{equation}
for $t\ge 0$. In addition
\begin{eqnarray}
&&|\tilde a_{sc}(v_-,x_-)-\tilde W(v_-,x_-)-\sqrt{1-{|v_-|^2\over c^2}}\int_{-\infty}^{+\infty}F^s(\tau v_-+x_-,v_-)d\tau|\nonumber\\
&\le&{944n^4\beta^2\big(1-{|v_-|^2\over c^2}\big)({r\over c}+1)(1+{1\over c})(1+{1\over ({|v_-|\over 2^{3\over 2}}-r)})^2\over \alpha^2({|v_-|\over 2^{3\over 2}}-r)^2
({1\over 2}+{|x_-|\over \sqrt{2}})^{2\alpha+1}}.\label{t2f}
\end{eqnarray}
\begin{eqnarray}
&&\Big|\tilde b_{sc}(v_-,x_-)-\sqrt{1-{|v_-|^2\over c^2}}\Big(\int_{-\infty}^0\int_{-\infty}^\sigma F^s(\tau v_-+x_-,v_-)d\tau d\sigma\nonumber\\
&&-\int_0^{+\infty}\int_\sigma^{+\infty} F^s(\tau v_-+x_-,v_-)d\tau d\sigma+\int_{-\infty}^{+\infty}V^s(\tau v_-+x_-)d\tau {v_-\over c^2}\Big)\Big|\nonumber\\
&\le&{808n^4\beta^2\big(1-{|v_-|^2\over c^2}\big)({1\over c}+1)^2\big(1+{1\over {|v_-|\over 2\sqrt{2}}-r}\big)^2
\over \alpha^2(\alpha+1)\big({|v_-|\over 2\sqrt{2}}-r\big)^3\big({1\over 2}+{|x_-|\over\sqrt{2}}\big)^{2\alpha}}.
\label{t2g}
\end{eqnarray}
\end{theorem}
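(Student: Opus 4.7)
The proof follows the same scheme as Theorem \ref{thm_y}, now in the modified framework of Section 3. First I would verify, using the bound \eqref{802} together with the hypothesis $|v_-|\ge\tilde\rho_0(|x_-|,r,\beta,\alpha)$ and the definition \eqref{17.30}, that the preconditions of Lemmas \ref{lem_cont2} and \ref{lem_cont3} are met. This simultaneously yields the existence of a unique fixed point $y_-\in M_{r,v_-,x_-}$ for $\A$ (with $\A$ a $\tilde\lambda_2$-contraction, $\tilde\lambda_2<1$) and of a unique fixed point $\tilde b_{sc}(v_-,x_-)\in \overline{\B(0,1/2)}$ for $\G_{v_-,x_-}$ (a $1/6$-contraction). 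The consequence $\|y_-\|_*\le\tilde\lambda_1$ will be the starting point for the pointwise estimates.

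Next I would derive the four pointwise bounds. For \eqref{t2a} I write $\dot y_-(t)=\dot\A(y_-)(t)$ and bound $|\dot\A(y_-)(t)|$ for $t\le 0$ via a mean-value estimate on $g$ (whose derivative is controlled by $\sqrt{1-|v|^2/c^2}$ at $v=g(p)$) together with direct integration of the decay assumptions \eqref{1.4a}, \eqref{1.4b} along $z_-(v_-,x_-,\cdot)+y_-$. The key geometric input is that on $M_{r,v_-,x_-}$ this curve stays within a tube of radius $r$ around $z_-(v_-,x_-,\cdot)$, which itself stays close to the line $\tau\mapsto \tau v_-+x_-$ thanks to \eqref{5.3b}, so that $|z_-(v_-,x_-,\tau)+y_-(\tau)|\ge 1-r+|x_-|/\sqrt{2}+(|v_-|/(2\sqrt{2})-r)|\tau|$ for $\tau\le 0$. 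The same calculation, evaluated at $t=-\infty$, produces \eqref{t2c} directly from \eqref{lc7}; \eqref{t2d} follows from $\tilde b_{sc}=\G_{v_-,x_-}(\tilde b_{sc})$ together with \eqref{lc3a}; and \eqref{t2e} comes from differentiating \eqref{6.6c} in $t$ at $q=\tilde b_{sc}$ and estimating analogously, using that the $F^l(z_+(\tilde a,x_-+q,\cdot))$ subtraction in \eqref{6.6c} ensures only differences of $F^l$ along slightly shifted trajectories, together with the short-range $F^s$ contribution, survive.

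The main technical work lies in the asymptotic expansions \eqref{t2f} and \eqref{t2g}. For \eqref{t2f}, I would start from
\begin{equation*}
\tilde a(v_-,x_-)=g\bigl(g^{-1}(v_-)+\int_{-\infty}^{+\infty}F(z_-(v_-,x_-,\cdot)+y_-)(\tau)\,d\tau\bigr),
\end{equation*}
subtract $\tilde W(v_-,x_-)+v_-$ as given by \eqref{t2.0}, and split $F=F^l+F^s$. The $F^l$-difference reduces to an integral of $F^l(z_-+y_-)-F^l(z_-)$ over $\R$ together with an analogous contribution involving $z_+(\tilde a,x_-,\cdot)$; both are quadratically small in the deflections $y_\pm$ by a first-order Taylor expansion in space combined with the bound $\|y_-\|_*\le \tilde\lambda_1$ (and its analog for $y_+$ obtained in \eqref{t2e}). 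The $F^s$ part is linearized around the free line $\tau v_-+x_-$ with velocity $v_-$, using $|z_-(v_-,x_-,\tau)-(\tau v_-+x_-)|+|\dot z_-(v_-,x_-,\tau)-v_-|$ controlled by \eqref{5.3b}, and a first-order expansion of $g$ at $g^{-1}(v_-)$ (whose Jacobian contributes the factor $\sqrt{1-|v_-|^2/c^2}$) yields the claimed leading term $\sqrt{1-|v_-|^2/c^2}\int F^s(\tau v_-+x_-,v_-)\,d\tau$ up to remainders of the prescribed order.

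For \eqref{t2g} the argument is one nesting deeper: $\tilde b_{sc}(v_-,x_-)=\G_{v_-,x_-}(\tilde b_{sc})$ with $\G_{v_-,x_-}$ built from \eqref{6.1b} and \eqref{6.6c}. I would expand $g$ to first order in both the $t\le 0$ piece coming from $\A(y_-)(0)$ and the $t\ge 0$ piece coming from $h$, peel off the $F^l$ contributions which reassemble and cancel, and linearize the remaining $F^s$ double integral around the line $\tau v_-+x_-$. The term with factor $v_-/c^2$ appears because the Jacobian of $g$ at $g^{-1}(v_-)$ has a rank-one part proportional to $v_-\otimes v_-/c^2$; composing this with $\int(-\nabla V^s)\cdot v_-$ and integrating by parts in the inner variable yields $-\int_{-\infty}^{+\infty}V^s(\tau v_-+x_-)\,d\tau$. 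The hardest step will be collecting all Taylor remainders (on $g$, on $F^s$ and on $F^l$) and verifying that their aggregate is bounded by the numerical constants $944$ and $808$ in \eqref{t2f}, \eqref{t2g}; this is a careful but essentially mechanical bookkeeping exercise parallel to the one for Theorem \ref{thm_y} in Section \ref{sec_thm_y}.
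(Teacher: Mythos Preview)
Your proposal is correct and follows essentially the same approach as the paper's proof in Section~7.3: the four pointwise bounds \eqref{t2a}--\eqref{t2e} are read off from \eqref{4.14}, \eqref{lc7}, \eqref{lc3a}, and the derivative of \eqref{6.6c}, and the expansions \eqref{t2f}, \eqref{t2g} are obtained by Taylor-expanding $g$ and decomposing the remainder into pieces bounded via the Lipschitz estimates \eqref{03c}--\eqref{03d}. One organizational point: where you invoke the deflection $y_+$ to control $F^l(z_-+y_-)-F^l(z_+(\tilde a,x_-,\cdot))$ on $(0,\infty)$, the paper instead uses the auxiliary quantity $\delta$ attached to $\omega:=z_-+y_--z_+(\tilde a,x_-,\cdot)$ (see \eqref{5.18}--\eqref{5.25e}, established during the proof of Lemma~\ref{lem_cont3}); since $y_+$ and $\omega$ differ only by $\omega_{+,\tilde b_{sc},0}$, which is controlled by Lemma~\ref{lem_scatinit2}, this is a cosmetic difference rather than a different argument.
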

Theorem \ref{thm_y2} is proved in Section \ref{sec_cont3}.

Estimates \eqref{t2f} and \eqref{t2g} provide the high energies asymptotics of the modified scattering data.
The analog of formulas \eqref{t1} and \eqref{t2} are given in the following Corollary.

\begin{corollary}
\label{high_energy} 
Let $(\theta,x)\in TS^{n-1}$. We have 
\begin{eqnarray}
\lim_{\rho\to +\infty} {\rho\over \sqrt{1-{\rho^2\over c^2}}} (\tilde a_{sc}-\tilde W)(\rho\theta,x)&=&\int_{-\infty}^{+\infty}F(\tau \theta+x,c\theta)d\tau,\label{t1m}\\
\lim_{s\to +\infty}{\rho^2\over \sqrt{1-{\rho^2\over c^2}}}\tilde b_{sc}(s\theta,x)&=&\int_{-\infty}^0\int_{-\infty}^\sigma F^s(\tau \theta+x,c\theta)d\tau d\sigma
\nonumber\\
&&\hskip -4cm-\int_0^{+\infty}\int_\sigma^{+\infty} F^s(\tau \theta+x,c\theta)d\tau d\sigma+PV^s(\theta,x)\theta.\label{t2m}
\end{eqnarray}
\end{corollary}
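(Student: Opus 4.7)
My plan is to derive both limits of Corollary \ref{high_energy} directly from the quantitative estimates \eqref{t2f} and \eqref{t2g} of Theorem \ref{thm_y2}, by passing to the high-energy limit. I interpret the stated limits ``$\rho\to+\infty$'' and ``$s\to+\infty$'' as $\rho\to c^-$, which is the only sense in which ``high energy'' makes sense when $v_-\in\mathcal B(0,c)$. Fix $(\theta,x)\in T\S^{n-1}$ and set $(v_-,x_-)=(\rho\theta,x)$. Choosing any $r\in(0,\min(1/2,2^{-3/2}c))$, the right-hand side of \eqref{17.30} vanishes as $\tilde\rho_0\to c^-$, so the root $\tilde\rho_0(|x|,r,\beta,\alpha)$ satisfies $\tilde\rho_0<c$; hence Theorem \ref{thm_y2} applies for all $\rho$ close enough to $c$.

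For \eqref{t1m}, I multiply \eqref{t2f} by $\rho/\sqrt{1-\rho^2/c^2}$. The right-hand side becomes $O(\rho\sqrt{1-\rho^2/c^2})$ because the factors $1/(\rho/2^{3/2}-r)$ remain bounded in the range $\rho\in(\tilde\rho_0,c)$, and therefore vanishes. After the substitution $\sigma=\rho\tau$, the main term reads $\int_{-\infty}^{+\infty}F^s(\sigma\theta+x,\rho\theta)\,d\sigma$, and Lebesgue dominated convergence (using the bound $|F^s(y,v)|\le C(1+|y|)^{-\alpha-2}$ from \eqref{1.4b}, uniform in $|v|<c$, together with $|\sigma\theta+x|\ge|\sigma|$ since $\theta\cdot x=0$, and the pointwise convergence $F^s(\sigma\theta+x,\rho\theta)\to F^s(\sigma\theta+x,c\theta)$) identifies the limit as $\int_{-\infty}^{+\infty}F^s(\sigma\theta+x,c\theta)\,d\sigma$. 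For \eqref{t2m} I multiply \eqref{t2g} by $\rho^2/\sqrt{1-\rho^2/c^2}$; the error again vanishes in the limit, and the substitutions $\tau'=\rho\tau$, $\sigma'=\rho\sigma$ absorb the $\rho^2$-prefactor in each double integral exactly, e.g.
\[
\rho^2\int_{-\infty}^0\int_{-\infty}^\sigma F^s(\tau\rho\theta+x,\rho\theta)\,d\tau\,d\sigma = \int_{-\infty}^0\int_{-\infty}^{\sigma'} F^s(\tau'\theta+x,\rho\theta)\,d\tau'\,d\sigma',
\]
which converges to its $\rho\to c$ counterpart by dominated convergence with the iterated majorant $\int_{-\infty}^\sigma C(1+|\tau|)^{-\alpha-2}\,d\tau\le C(1+|\sigma|)^{-\alpha-1}$, integrable in $\sigma$. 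The potential contribution is handled by the identity $\int_{-\infty}^{+\infty} V^s(\tau\rho\theta+x)\,d\tau=\rho^{-1}PV^s(\theta,x)$, giving $\rho^2\cdot\rho^{-1}PV^s(\theta,x)\cdot\rho\theta/c^2=(\rho^2/c^2)PV^s(\theta,x)\theta\to PV^s(\theta,x)\theta$ as $\rho\to c^-$, which completes \eqref{t2m}.

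The only genuine technical point is bookkeeping: one must verify that the $\rho$-dependent factors inside the error bounds of \eqref{t2f} and \eqref{t2g} stay bounded as $\rho\to c^-$, so the errors are truly $O(\sqrt{1-\rho^2/c^2})$ after rescaling. This is transparent from their explicit forms since $r$ is fixed strictly below $c/2^{3/2}$ and $|x|$ is fixed. I also remark that by this derivation the integrand on the right of \eqref{t1m} should naturally read $F^s$ rather than $F$, which is consistent with $\tilde W$ having been designed precisely to absorb the $F^l$-contribution at leading order; beyond this minor point, no analytical obstacle remains once Theorem \ref{thm_y2} is in hand.
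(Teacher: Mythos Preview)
Your approach is correct and is exactly what the paper intends: Corollary \ref{high_energy} is stated without a separate proof because it follows immediately from the quantitative bounds \eqref{t2f} and \eqref{t2g} of Theorem \ref{thm_y2}, precisely by the rescaling and dominated-convergence argument you outline (compare the parallel passage from \eqref{t3b}--\eqref{t4b} to \eqref{t1}--\eqref{t2} in the proof of Theorem \ref{thm}). Your reading of ``$\rho\to+\infty$'' and ``$s\to+\infty$'' as $\rho\to c^-$ is the right one, and your observation that the right-hand side of \eqref{t1m} should carry $F^s$ rather than $F$ is well taken, since \eqref{t2f} compares $\tilde a_{sc}-\tilde W$ to the $F^s$-integral only.
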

Then $F^s$ can be reconstructed from the high energies asymptotics of $\tilde a_{sc}$, 
and from \eqref{t2m} one can prove the following statements:
The potential $V^s$ is uniquely determined up to its radial part by $\tilde b_{sc}$; The magnetic field $B^s$ can be reconstructed from $\tilde b_{sc}$ when $n\ge 3$, and up to its radial part when $n=2$.

The estimates \eqref{t2f} and \eqref{t2g} also gives the Born approximation for the modified scattering data at fixed energy when
the electromagnetic field is sufficiently weak.
The Born approximation at fixed energy of $\tilde a_{sc}(\rho\theta,x)-\tilde W(\rho\theta,x)$ and
$\tilde b_{sc}(\rho\theta,x)$
for $(\theta,x)\in T\S^{n-1}$ and for $\tilde \rho\in (\tilde \rho_0(|x|,r,\beta,\alpha),c)$ are given by \eqref{t3c} and \eqref{t4c}
respectively, and we have:
The force $F^s$ can be reconstructed from the Born approximation \eqref{t3c} of $\tilde a_{sc}$ at fixed energy;
$V^s$ can be reconstructed from the Born approximation \eqref{t4c} of $\tilde b_{sc}$ at fixed energy; 
$B^s$ can be reconstructed from \eqref{t4c} when $n\ge 3$, and up to its radial part when $n=2$.

Estimates \eqref{t2f} and \eqref{t2g} also provide the first leading term in the asymptotics of the modified scattering data $(\tilde a_{sc}(\rho\theta,x),\tilde b_{sc}(\rho\theta,x))$ when the parameters $\alpha$, $n$, $s$, $\theta$ and $\beta$ are fixed and $|x|$ increases to $+\infty$.

\section{Preliminary estimates and proof of Lemmas \ref{lem_scatinit} and \ref{lem_decomp}}
\label{sec_scatinit}
For the rest of the text we use the following properties of the function $g:\R^n\to \B(0,c)$ defined by \eqref{01} (see \cite[Section 2.5]{JoT}):
\begin{eqnarray}
|\nabla g_i(x)|^2&\le&{1\over 1+ {|x|^2\over c^2}},\label{02a}\\
|g(x)-g(y)|&\le & \sqrt{n}|x-y|\sup_{\ep\in (0,1)}{1\over \sqrt{1+{|\ep x+(1-\ep)y|^2\over c^2}}},\label{02b}
\end{eqnarray}
\begin{eqnarray}
|\nabla g_i(x)-\nabla g_i(y)|&\le &{3\sqrt{n}\over c}|x-y|\sup_{\ep\in (0,1)}{1\over 1+{|\ep x+(1-\ep)y|^2\over c^2}},\label{02c}\\
\nabla g_i(x)&=&{1\over (1+{|x|^2\over c^2})^{1\over 2}}e_i-{x_ix\over c^2(1+{|x|^2\over c^2})^{3\over 2}},\label{02d}
\end{eqnarray}
for $(x,y)\in \R^n\times \R^n$, $x=(x_1,\ldots, x_n)$, and for $i=1\ldots n$ where $g=(g_1,\ldots, g_n)$, and where the  $i^{\rm th}$ component of the vector $e_i$ is equal to 1 and all others components of $e_i$ are equal to zero.

We also used the following properties of the forces $(F^l,F^s)$ (see \cite[Section 2.5]{JoT}):
\begin{eqnarray}
|F^l(x,v)|&\le&2n\beta_1^l(1+|x|)^{-\alpha-1},\label{03a}\\
|F^s(x,v)|&\le&2n\beta_2^s(1+|x|)^{-\alpha-2},\label{03b}
\end{eqnarray}
\begin{eqnarray}
|F^l(x,v)-F^l(x',v')|&\le& {n\beta_1^l\over c}|v-v'|\sup_{\ep\in (0,1)}(1+|(1-\ep)x+\ep x'|)^{-\alpha-1}\label{03c}\\
&&+2n^{3\over 2}\beta_2^l|x-x'|\sup_{\ep\in (0,1)}(1+|(1-\ep)x+\ep x'|)^{-\alpha-2},\nonumber
\end{eqnarray}
\begin{eqnarray}
|F^s(x,v)-F^s(x',v')|&\le& {n\beta_2^s\over c}|v-v'|\sup_{\ep\in (0,1)}(1+|(1-\ep)x+\ep x'|)^{-\alpha-2}\label{03d}\\
&&+2n^{3\over 2}\beta_3^s|x-x'|\sup_{\ep\in (0,1)}(1+|(1-\ep)x+\ep x'|)^{-\alpha-3},\nonumber
\end{eqnarray}
for $(x,x',v,v')\in (\R^n)^4$, $\max(|v|,|v'|)\le c$.

\begin{proof}[Proof of Lemma \ref{lem_scatinit}]
We prove the existence and uniqueness of the solution $z_+$ (similarly one can prove the existence and uniqueness of $z_-$).
Set $C={2^{9\over 2}n^{3\over 2}\beta_1^l\sqrt{1-{|v|^2\over c^2}}\over \alpha |v|(1+{|x|\over \sqrt{2}}-|q|)^\alpha}$.
Let $\V_{w,C}$ be the complete metric space defined by 
$$
\V_{w,C}:=\{f\in C^1(\R,\R^n)\ |\ f(0)=0\textrm{ and }\sup_\R|\dot f|\le C,\ \sup_{\R}|w+\dot f|\le c\},
$$
endowed with the following norm $\|f\|_{\V}:=\sup_{\R}|\dot f|$. Note that $\V_{w,C}$ is a convex subset of $C^1(\R,\R^n)$.
We consider the integral equations
\begin{equation}
G_w(f)(t):=\int_0^t \dot G_w(f)(s) ds,\label{5.4a}
\end{equation}
\begin{equation}
\dot G_w(f)(t):=g\Big(g^{-1}(w)-\int_t^{+\infty}F^l(x+q+. w+f)(\tau)d\tau\Big)-w,\label{5.4b}
\end{equation}
for $f\in \V_{w,C}$ and for $t\in\R$. Then  we have (see also \eqref{01})
\begin{equation}
G_w(f)(0)=0,\ |w+\dot G_w(f)(t)|< c\textrm{ for }t\in  \R.\label{5.4c}
\end{equation}
We use the following estimate \eqref{5.5a}
\begin{eqnarray}
&&|x+q+\tau w+f(\tau)|\ge |x+\tau v|-|f(\tau)|-\tau|w-v|-|q|\nonumber\\
&\ge&{|x|\over \sqrt{2}}-|q|+\big({|v|\over \sqrt{2}}-C-|w-v|\big)|\tau|\ge{|x|\over \sqrt{2}}-|q|+{|v|\over 2\sqrt{2}}|\tau|,\label{5.5a}
\end{eqnarray}
for $\tau\in \R$ and $f\in \V_{w,C}$ (we used the estimate $|f(t)|\le C|t|$ for $t\in \R$ and for $f\in \V_{w,C}$, and we used that $x\cdot v=0$, and we used \eqref{5.1} and \eqref{5.1a}).
Using \eqref{03a} and \eqref{5.5a} we obtain that
\begin{eqnarray}
\int_t^{+\infty}|F^l(.w+x+q+f)(\tau)|d\tau &\le& 2n\beta_1^l\int_t^{+\infty}\big(1+{|x|\over \sqrt{2}}-|q|+{|v|\over 2\sqrt{2}}|\tau|\big)^{-\alpha-1}d\tau \nonumber\\
&\le&{2^{7\over 2}n\beta_1^l\over \alpha |v|(1+{|x|\over \sqrt{2}}-|q|)^\alpha},\label{5.6}
\end{eqnarray}
for $t\in \R$ and for $f\in \V_{w,C}$. We used the integral value
$\int_{-\infty}^{+\infty}(a+b\tau)^{-\alpha-1}d\tau={2\over b\alpha a^\alpha}$ for $a>0$ and $b>0$.
Then we also use the identity $|g^{-1}(v)|=|g^{-1}(w)|$ (see \eqref{5.1}) and we use \eqref{5.1a}, and we obtain
\begin{eqnarray}
&&\big|g^{-1}(w)-\int_t^{+\infty}(\ep F^l(. w+x+q+f_1)(\tau)d\tau+\eta F^l(.w+x+q+f_2)(\tau))d\tau\big|\nonumber\\
&\ge& |g^{-1}(v)|-{2^{7\over 2}n\beta_1^l\over \alpha |v|(1+{|x|\over \sqrt{2}}-|q|)^\alpha}\ge {|v|\over 2\sqrt{1-{|v|^2\over c^2}}},\label{5.7}
\end{eqnarray}
for $(f_1,f_2)\in \V_{w,C}^2$, and for $(\ep,\eta,t)\in (0,1)^2\times \R$, $\ep+\eta\le 1$.
Therefore combining \eqref{5.4b}, \eqref{5.6}, \eqref{5.7} and \eqref{02b} we obtain 
\begin{eqnarray}
|\dot G_w(f)(t)|&\le &{2^{7\over 2}n^{3\over 2}\beta_1^l\over \alpha |v|(1+{|x|\over \sqrt{2}}-|q|)^\alpha}\sup_{\ep\in (0,1)}{1\over \sqrt{1+{\big|g^{-1}(w)-\ep\int_t^{+\infty}F^l(. w+x+q+f)(\tau)d\tau\big|^2\over c^2}}}\nonumber\\
&\le &{2^{9\over 2}n^{3\over 2}\beta_1^l\sqrt{1-{|v|^2\over c^2}}\over \alpha |v|(1+{|x|\over \sqrt{2}}-|q|)^\alpha}.\label{5.8}
\end{eqnarray}
for $t\in \R$ and $f\in \V_{w,C}$.

Now let $(f_1,f_2)\in \V_{w,C}^2$. Then  using \eqref{02b}, \eqref{5.4b} and \eqref{5.7} we have
\begin{eqnarray}
|\dot G_w(f_1)(t)-\dot G_w(f_2)(t)|&\le&2n^{1\over 2}\sqrt{1-{|v|^2\over c^2}}\int_t^{+\infty}\big|F^l(. w+x+q+f_1)
\nonumber\\
&&-F^l(. w+x+q+f_2)\big|(\tau)d\tau\label{5.10}
\end{eqnarray}
for $t\in \R$. Using \eqref{03c} and  \eqref{5.5a} we have
\begin{eqnarray}
&&|F^l(.w+x+q+f_1)-F^l(. w+x+q+f_2)|(\tau)\nonumber\\
&\le& {n\beta_1^l\over c}(1+{|x|\over \sqrt{2}}-|q|+{|v|\over 2^{3\over 2}}|\tau|)^{-\alpha-1}\sup_{(0,+\infty)}|\dot f_1-\dot f_2|
\nonumber\\
&&+2n^{3\over 2}\beta_2^l(1+{|x|\over \sqrt{2}}-|q|+{|v|\over 2^{3\over 2}}|\tau|)^{-\alpha-2}|\tau|\sup_{s\in\R\b\{0\}}{|(f_1-f_2)(s)|\over |s|}\nonumber\\
&\le& \big({n\beta_1^l\over c}+{2^{5\over 2}n^{3\over 2}\over|v|}\beta_2^l\big)(1+{|x|\over \sqrt{2}}-|q|+{|v|\over 2^{3\over 2}}|\tau|)^{-\alpha-1}\|f_1-f_2\|_{\V}
\label{5.11}
\end{eqnarray}
for $\tau\in\R$.
Therefore we obtain
\begin{equation}
|\dot G_w(f_1)(t)-\dot G_w(f_2)(t)|
\le
{n^{3\over 2}2^{7\over 2}\sqrt{1-{|v|^2\over c^2}}\big({\beta_1^l\over c}+{2^{5\over 2}n^{1\over 2}\over|v|}\beta_2^l\big)\over \alpha|v|(1+{|x|\over \sqrt{2}}-|q|)^{\alpha}}\|f_1-f_2\|_{\V},\label{5.12a}
\end{equation}
for $t\in \R$.

From \eqref{5.4c}, \eqref{5.8}, \eqref{5.12a} and \eqref{5.1a} it follows that the operator $G_{w}$ is a ${1\over 2}$-contraction map from $\V_{w,C}$ to $\V_{w,C}$. Set $z_+(w,x+q,t)=x+q+tw+f_{w,x+q}(t)$ for $t\in \R$, where $f_{w,x+q}$ denotes the unique fixed point of $G_w$ in $\V_{w,C}$. Then $z_+(w,x+q,.)$ satisfies \eqref{5.3}, \eqref{5.3a}, \eqref{5.3b}.
\end{proof}

Before proving Lemma \ref{lem_decomp} we recall the following standard result. For sake of consistency we provide a proof of Lemma \ref{lem:comp} at the end of this Section.

\begin{lemma}
\label{lem:comp}
Let $x\in C^2(\R,\R^n)$ satisfy equation \eqref{1.1} and let $z\in C^2(\R,\R^n)$ satisfy equation \eqref{5.3}. Assume that there exists $v\in \B(0,c)$, $v\not=0$, so that $\dot x(t)\to v$ and $\dot z(t)\to v$ as $t\to +\infty$. Then 
\begin{equation}
\sup_{(0,+\infty)}|x-z|<\infty\textrm{ and }\sup_{t\in (0,+\infty)}(1+t)^{-1}|\dot x-\dot z|(t)<\infty.\label{3.11}
\end{equation}
\end{lemma}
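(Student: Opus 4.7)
The second estimate is trivial, since the relativistic constraint gives $|\dot x|,|\dot z|<c$, whence $(1+t)^{-1}|\dot x-\dot z|(t)\le 2c$. The substantive content of the lemma is the boundedness of $|x-z|$, which I would obtain by a bootstrap Gronwall argument on the impulse equations.

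The first step is to observe that, since $\dot x(t),\dot z(t)\to v\neq 0$, the identity $x(t)=x(0)+\int_0^t\dot x(s)ds$ (and the analogous one for $z$) gives $|x(t)|,|z(t)|\ge c_1 t$ for some $c_1>0$ and $t\ge T_0$. The force bounds \eqref{03a}--\eqref{03b} then yield $|F(x,\dot x)(\tau)|,|F^l(z,\dot z)(\tau)|=O(\tau^{-\alpha-1})$ along the two trajectories, so integrating $\dot p_x=F(x,\dot x)$ and $\dot p_z=F^l(z,\dot z)$ from $t$ to $+\infty$ and using $p_x(t),p_z(t)\to g^{-1}(v)$ gives absolutely convergent expressions whose subtraction is
\begin{equation*}
p_x(t)-p_z(t)=-\int_t^{+\infty}F^s(x,\dot x)(\tau)d\tau-\int_t^{+\infty}[F^l(x,\dot x)-F^l(z,\dot z)](\tau)d\tau.
\end{equation*}
Combining \eqref{02b}, \eqref{03b} and \eqref{03c} with the lower bound $|x|,|z|\ge c_1\tau$, and setting $u(t):=|x(t)-z(t)|$ and $w(t):=|\dot x(t)-\dot z(t)|$, I would derive the Volterra-type inequality
\begin{equation*}
w(t)\le \frac{K_1}{t^{\alpha+1}}+K_2\int_t^{+\infty}\frac{w(\tau)}{\tau^{\alpha+1}}d\tau+K_3\int_t^{+\infty}\frac{u(\tau)}{\tau^{\alpha+2}}d\tau,
\end{equation*}
valid for $t\ge T_0$, coupled with $u(t)\le u(T_0)+\int_{T_0}^t w(s)ds$.

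The heart of the argument is a bootstrap on this system. Starting from the trivial bounds $w\le 2c$ and $u(t)\le 2ct$, substitution in the Volterra inequality gives $w(t)=O(t^{-\alpha})$, hence by integration $u(t)=O(t^{\max(0,1-\alpha)})$. Re-injecting these estimates improves $w$ to $O(t^{-\min(\alpha+1,2\alpha)})$ and $u$ to $O(t^{\max(0,1-2\alpha)})$, and so on. After at most $\lceil 1/\alpha\rceil+1$ iterations, $w\in L^1((T_0,+\infty))$, so $u$ is bounded on $(T_0,+\infty)$; continuity then extends the bound to all of $(0,+\infty)$. The main technical nuisance is the borderline cases $k\alpha=1$ where logarithmic factors appear; these can be absorbed by performing one additional iteration, which upgrades any polylog bound to polynomial decay strictly faster than $1/t$ and so delivers the needed integrability of $w$.
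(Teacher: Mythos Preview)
Your proposal is correct and follows essentially the same bootstrap argument as the paper: write the impulse difference as an integral from $t$ to $+\infty$, use the linear growth of the trajectories to get the initial bound $|\dot x-\dot z|=O(t^{-\alpha})$, and then iterate the Volterra-type inequality until the exponent exceeds $1$, forcing $|x-z|$ bounded. Two cosmetic differences: the paper handles the convex lower bound $|(1-\eta)x(t)+\eta z(t)|\ge \epsilon(1+t)$ explicitly (needed for \eqref{03c}), which you only use implicitly; and for the resonant exponents $k\alpha=1$ the paper simply replaces $\alpha$ by a nearby $\alpha'\notin\{1/m\}$ rather than carrying logarithms through an extra step. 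Your observation that the second claim in \eqref{3.11} is trivially implied by $|\dot x|,|\dot z|<c$ is correct.
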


\begin{proof}[Proof of Lemma \ref{lem_decomp}]
We need the following preliminary estimate \eqref{2.5b}.
From the formula $g(\tau)=g(0)+\int_0^\tau \dot g(s)ds$ for $g\in C^1(\R,\R^n)$ it follows that ,   
\begin{equation}
|(f_1-f_2)(\tau)|\le \sup_{(-\infty,0)}|f_1-f_2|+|\tau|\sup_{\R}|\dot f_1-\dot f_2|\textrm{ for }\tau\in \R, \label{2.5a}
\end{equation}
and for $(f_1,f_2)\in M_{r,v_-}^2$.
Hence
\begin{eqnarray}
|z_-(v_-,\tau)+x+f(\tau)|&\ge& |x+\tau v_-|-|z_-(v_-,\tau)-\tau v_-|-|f(\tau)|\nonumber\\
&\ge&{|x|\over \sqrt{2}}-r+|\tau|\Big({|v_-|\over \sqrt{2}}-{2^{9\over 2}n^{3\over 2}\beta_1^l\sqrt{1-{|v_-|^2\over c^2}}\over \alpha |v_-|}-r\Big)\nonumber\\
&\ge & {|x|\over \sqrt{2}}-r+|\tau|\Big({|v_-|\over 2\sqrt{2}}-r\Big),\label{2.5b}
\end{eqnarray}
for $(f,\tau)\in M_{r,v_-}\times \R$ and for any $x\in \R^n$ so that $x\cdot v_-=0$.  
We used \eqref{5.3b} (for "$(x,w,v,q)=(0,v_-,v_-,0)$"), the inequality $|x+\tau v_-|\ge {|x|\over \sqrt{2}}+|\tau|{|v_-|\over \sqrt{2}}$ ($x\cdot v_-=0$) and \eqref{2.5a} (for $(f_1,f_2)=(f,0)$ and $\|f\|\le r$) and the condition $|v_-|\ge \mu^l$.

Hence the integral $\int_{-\infty}^{+\infty}F(z_-(v_-,.)+x_-+f)(\tau)d\tau$ is absolutely convergent for any $f\in M_{r,v_-}$. And when $y_-\in M_{r,v_-}$ is a fixed point for $A$ then
$z_-(v_-,.)+x_-+y_-$ satisfies equation \eqref{1.1} (see \eqref{2.1a} and \eqref{2.6}) and 
$\dot z_-(v_-,t)+\dot y_-(t)=g\big(g^{-1}(v_-)+\int_{-\infty}^tF(z_-(v_-,.)+x_-+y_-)(\tau)d\tau\big)\to a(v_-,x_-)\textrm{ as }t\to +\infty,$
where $a(v_-,x_-)$ is defined in \eqref{300a}. Then from Lemma \ref{lem:comp} it follows that $\sup_{(0,+\infty)}|z_-(v_-,.)+x_-+y_--z_+(a(v_-,x_-),.)|<+\infty$ and $\sup_{t\in(0,+\infty)}(1+t)^{-1}|\dot z_-(v_-,t)+\dot y_-(t)-\dot z_+(a(v_-,x_-),t)|<\infty$. Using these latter estimates and $y_-\in M_{r,v_-}$, and using \eqref{03b}, \eqref{03c} and \eqref{2.5b} we obtain that the integral on the right hand side of \eqref{3.13b} is absolutely convergent. 
Then the decomposition \eqref{3.13} follows from the equality $A(y_-)=y_-$ and \eqref{2.1a} and \eqref{2.6} and straightforward computations. 
\end{proof}

\begin{proof}[Proof of Lemma \ref{lem:comp}]
Note that from \eqref{1.1}, \eqref{5.3} and the property $\lim_{t\to +\infty}\dot x(t)=\lim_{t\to +\infty}\dot z(t)=v\not=0$ it follows that
\begin{equation}
\dot x(t)=g\big(g^{-1}(v)-\int_t^{+\infty}F(x)(s)ds\big),\ \dot z(t)=g\big(g^{-1}(v)-\int_t^{+\infty}F^l(z)(s)ds\big),\label{3.11a}
\end{equation}
for $t\ge 0$, and that there exists $\ep>0$ so that
\begin{equation}
1+|\eta x(t)+(1-\eta)z(t)|\ge \ep(1+t), \textrm{ for }(t, \eta)\in [0,+\infty)\times [0,1].\label{3.11c}
\end{equation}
Using \eqref{3.11c} for $\eta=1$ and using \eqref{03b} we obtain 
\begin{equation}
\int_t^{+\infty}|F^s(x)(\tau)|d\tau\le 2n\beta_2\ep^{-\alpha-2}\int_t^{+\infty}(1+\tau)^{-\alpha-2}d\tau
={2n\beta_2\ep^{-\alpha-2}\over (\alpha+1)(1+t)^{\alpha+1}},\label{3.11d}
\end{equation}
for $t\ge 0$.
Thus using \eqref{3.11a}, \eqref{02b} and \eqref{3.11d} we have 
\begin{equation}
|\dot x(t)-\dot z(t)|\le
n^{1\over 2}\int_t^{+\infty}|F^l(x)-F^l(z)|(\tau)ds+{2n^{3\over 2}\beta_2\ep^{-\alpha-2}\over (\alpha+1)(1+t)^{\alpha+1}},\label{3.11e}
\end{equation}
for $t\ge 0$.
Using \eqref{3.11c} for $\eta=0,1,$ and using \eqref{03a} we obtain 
\begin{equation}
\int_t^{+\infty}(|F^l(x)|+|F^l(z)|)(\tau)d\tau\le {4\ep^{-\alpha-1}n\beta_1^l\over \alpha(1+t)^\alpha},\label{3.11ff}
\end{equation}
for $t\ge 0$,
and using \eqref{3.11e} we obtain
\begin{equation}
|\dot x(t)-\dot z(t)|
\le{4n^{3\over 2}\ep^{-\alpha-1}\beta_1^l+2n^{3\over 2}\beta_2\ep^{-\alpha-2}\over \alpha(1+t)^\alpha},\textrm{ for }t\ge 0.\label{3.11gg}
\end{equation}

We assume without loss of generality that $\alpha\not={1\over m}$ for $m\in \N$, $m\not=0$. If $\alpha={1\over m}$ for some $m\in \N$, $m\not=0$, then 
just replace $\alpha$ by $\alpha'\in ({1\over m+1}, {1\over m})$. We denote by $\lfloor x\rfloor$ the integer part of any real number $x$.
We prove by induction that for $n=1\ldots \lfloor {1\over \alpha}\rfloor$ there exists a positive constant $C_n$ so that
\begin{equation}
|\dot x(t)-\dot z(t)|
\le C_n(1+t)^{-n\alpha},\textrm{ for }t\ge 0.\label{3.11f}
\end{equation}
The estimate \eqref{3.11f} is proved for $n=1$ by \eqref{3.11gg}. 
Let $n=1\ldots \lfloor {1\over \alpha}\rfloor$. Integrating \eqref{3.11f} over $(0,t)$ we have 
\begin{equation}
|x(t)-z(t)|
\le C_n'+{C_n\over |1-n\alpha|}(1+t)^{1-n\alpha},\textrm{ for }t\ge 0,\label{3.11g}
\end{equation}
and for some constant $C_n'$.
Using \eqref{3.11e}, \eqref{3.11f}, \eqref{3.11g}, \eqref{03c} and \eqref{3.11c} we have
\begin{eqnarray}
|\dot x(t)-\dot z(t)|
&\le&\int_t^{+\infty}\sup_{\eta\in (0,1)}{{n^{3\over 2}\beta_1^l\over c}|\dot x(s)-\dot z(s)|\over (1+|\eta x(s)+(1-\eta)z(s)|)^{\alpha+1}}ds\nonumber\\
&&+\int_t^{+\infty}\sup_{\eta\in (0,1)}{2n^2\beta_2|x(s)-z(s)|\over (1+|\eta x(s)+(1-\eta)z(s)|)^{\alpha+2}}ds\\
&&+{2n^{3\over 2}\beta_2\ep^{-\alpha-2}\over (\alpha+1)(1+t)^{\alpha+1}}\nonumber
\end{eqnarray}
\begin{eqnarray}
&\le&n^{3\over 2}\ep^{-\alpha-2}\big({\beta_1^lC_n\ep\over c}+2{n^{1\over 2}C_n\beta_2\over  |1-n\alpha|}\big)\int_t^{+\infty}(1+s)^{-1-(n+1)\alpha}ds\nonumber\\
&&+2n^2\beta_2C_n'\ep^{-\alpha-2}\int_t^{+\infty}(1+s)^{-\alpha-2}ds
+{2n^{3\over 2}\beta_2\ep^{-\alpha-2}\over (\alpha+1)(1+t)^{\alpha+1}}\nonumber\\
&\le&{n^{3\over 2}\ep^{-\alpha-2}\big({\beta_1^lC_n\ep\over c}+2{n^{1\over 2}C_n\beta_2\over  |1-n\alpha|}\big)\over (n+1)\alpha(1+t)^{(n+1)\alpha}}
+{2n^{3\over 2}\ep^{-\alpha-2}\beta_2(1+n^{1\over 2}C_n')\over (\alpha+1)(1+t)^{\alpha+1}},\label{3.11h}
\end{eqnarray}
for $t\ge 0$. This proves \eqref{3.11f} for "$n+1$''. The induction step is proved.
Then \eqref{3.11f}, \eqref{3.11g} also hold for $n=\lfloor {1\over \alpha}\rfloor+1$, and using $(\lfloor {1\over \alpha}\rfloor+1)\alpha>1$, we obtain \eqref{3.11}.
\end{proof}

\section{Proof of Lemmas \ref{lem_cont} and \ref{lem_cont2}}
\label{sec_cont}
\begin{proof}[Proof of Lemma \ref{lem_cont}]
We shorten $z_-(v_-,.)$ to $z_-$ in this paragraph.
We first prove the estimates \eqref{2.14}, \eqref{2.15} and \eqref{2.18} given below.
Let $f\in M_{r,v_-}$.
Using \eqref{03b} and \eqref{2.5b} we have 
\begin{equation}
 |F^s(z_-+\ep x_-+f)(\tau)|
\le {2n\beta_2\over\big(1+{\ep|x_-|\over \sqrt{2}}-r+|\tau|\big({|v_-|\over 2\sqrt{2}}-r\big)\big)^{\alpha+2}},\label{2.8}
\end{equation}
for $\tau\in \R$ and for $\ep\in [0,1]$.
Integrating both sides of \eqref{2.8} over $(-\infty,t)$ we obtain
\begin{equation}
\int_{-\infty}^t \big|F^s(z_-+\ep x_-+f)(\tau)\big|d\tau
\le {4n\beta_2\over (\alpha+1)\big({|v_-|\over 2\sqrt{2}}-r\big)\big(1+{\ep|x_-|\over \sqrt{2}}-r\big)^{\alpha+1}},\label{2.8b}
\end{equation} 
for $t\in \R$ and for $\ep\in [0,1]$. 
Similarly using \eqref{03a} instead of \eqref{03b} we have 
\begin{equation}
\int_{-\infty}^t \big|F^l(z_-+\ep x_-+f)(\tau)\big|d\tau
\le {4n\beta_1^l\over \alpha\big({|v_-|\over 2\sqrt{2}}-r\big)\big(1+{\ep|x_-|\over \sqrt{2}}-r\big)^\alpha},\label{2.9b}
\end{equation} 
for $t\in \R$ and for $\ep\in [0,1]$. 
We combine \eqref{2.8b} and \eqref{2.9b}, and we use \eqref{lc0}, and we have 
\begin{eqnarray}
&&|g^{-1}(v_-)+\eta_1\int_{-\infty}^t\big(F(z_-+x_-+f_1)(\tau)
+(\eta_2F^l+\eta_2\mu F^s)(z_-+\ep x_-+f_2)(\tau)d\tau|\nonumber\\
&&\ge |g^{-1}(v_-)|-{8n\max(\beta_1^l,\beta_2)\over \alpha\big({|v_-|\over 2\sqrt{2}}-r\big)\big(1-r\big)^{\alpha+1}}\ge {1\over 2}|g^{-1}(v_-)|.\label{2.9c}
\end{eqnarray}
for $(f_1,f_2)\in M_{r,v_-}^2$ and for $(t,\eta_1,\eta_2,\mu,\ep)\in \R\times[0,1]^4$ so that $\eta_1+\eta_2\le 1$.
Then from \eqref{2.6}, \eqref{2.9c} and \eqref{02b} it follows that
\begin{equation}
|\dot A(f)(t)|\le 
2n^{1\over 2}\sqrt{1-{|v_-|^2\over c^2}}\Big|\int_{-\infty}^t\big(F(z_-+x_-+f)(\tau)-F^l(z_-)(\tau)\big)d\tau\Big|,\label{2.7}
\end{equation}
for $t\in \R$.
Using \eqref{2.8} we have
\begin{equation}
\int_{-\infty}^t \big|F^s(z_-+x_-+f)(\tau)\big|d\tau
\le {2n\beta_2\over (\alpha +1)\big({|v_-|\over 2\sqrt{2}}-r\big)\Big(1+{|x_-|\over \sqrt{2}}-r-t\big({|v_-|\over 2^{3\over 2}}-r\big)\Big)^{\alpha+1}},\label{2.8a}
\end{equation} 
for $t\le 0$.
Using \eqref{03c} and \eqref{2.5b} (with $''x''=(1-\ep +\ep\mu)x_-$, $(\ep,\mu) \in [0,1]^2$) we obtain
\begin{eqnarray}
\big|F^l(z_-+x_-+f_1)(\tau)-F^l(z_-+\mu x_-+f_2)(\tau)\big|\nonumber\\
\le{{n\beta_1^l\over c}|\dot f_1-\dot f_2|(\tau)\over\big(1-r+\big({|v_-|\over 2^{3\over 2}}-r\big)|\tau|)^{\alpha+1}}+{2n^{3\over 2}\beta_2\big((1-\mu)|x_-|+|f_1-f_2|(\tau)\big)\over \big(1-r+\big({|v_-|\over 2^{3\over 2}}-r\big)|\tau|)^{\alpha+2}}\label{2.9}
\end{eqnarray}
for $(f_1,f_2)\in M_{r,v_-}^2$ and for $(\tau,\mu) \in \R\times[0,1]$.
We integrate \eqref{2.9} over $(-\infty,t)$, and we use the estimates $|f(\tau)|\le r$ and 
$|\dot f(\tau)|\le r(1-r+({|v_-|\over 2\sqrt{2}}-r)|\tau|)^{-1}$ for $\tau\le 0$, and we obtain
\begin{eqnarray}
&&\int_{-\infty}^t\big|F^l(z_-+x_-+f)(\tau)-F^l(z_-)(\tau)\big|d\tau\nonumber\\
&\le &({n\beta_1^lr\over c}+2n^{3\over 2}\beta_2(|x_-|+r))\int_{-\infty}^t\big(1-r+\big({|v_-|\over 2\sqrt{2}}-r\big)|\tau|\big)^{-\alpha-2}d\tau\nonumber\\
&\le&{{nr\beta_1^l\over c}+2n^{3\over 2}\beta_2(|x_-|+r)\over (\alpha+1)\big({|v_-|\over 2\sqrt{2}}-r\big)\big(1-r-\big({|v_-|\over 2\sqrt{2}}-r\big)t\big)^{\alpha+1}},\label{2.13}
\end{eqnarray}
for $t\le 0$.

Combining \eqref{2.7}, \eqref{2.13}, \eqref{2.8a} we obtain
\begin{eqnarray}
|\dot A(f)(t)|
&\le& 
{2n^{3\over 2}\left({1-{|v_-|^2\over c^2}}\right)^{1\over 2}\big({r\beta_1^l\over c}+2\beta_2(n^{1\over 2}(|x_-|+r)+1)\big)\over (\alpha+1)\big({|v_-|\over 2\sqrt{2}}-r\big)\big(1-r-\big({|v_-|\over 2\sqrt{2}}-r\big)t\big)^{\alpha+1}},\label{2.14}\\
|A(f)(t)|&\le& {2n^{3\over 2}\left({1-{|v_-|^2\over c^2}}\right)^{1\over 2}\big({r\beta_1^l\over c}+2\beta_2(n^{1\over 2}(|x_-|+r)+1)\big)\over \alpha(\alpha+1)\big({|v_-|\over 2\sqrt{2}}-r\big)^2\big(1-r-\big({|v_-|\over 2\sqrt{2}}-r\big)t\big)^\alpha},\label{2.15}
\end{eqnarray}
for $t\le 0$.

Let $t\ge 0$ and $|v_-|> 2\sqrt{2}r$, $r<1$.
Integrating \eqref{2.9} over $(0,t)$ and using the estimate  $\sup_{(0,+\infty)}|\dot f|\le r$ and \eqref{2.5a} (for $(f_1,f_2)=(f,0)$ and $\|f\|\le r$), we obtain
\begin{eqnarray}
&&\int_0^t\big|F^l(z_-+x_-+f)(\tau)-F^l(z_-)(\tau)\big|d\tau\nonumber\\
&\le&{nr\beta_1^l\over c}\int_0^t\big(1-r+({|v_-|\over 2\sqrt{2}}-r)|\tau|\big)^{-\alpha-1}d\tau\nonumber\\
&&+2n^{3\over 2}\beta_2\int_0^t\big(1-r+({|v_-|\over 2\sqrt{2}}-r)|\tau|\big)^{-\alpha-2}(|x_-|+r+r|\tau|)d\tau\nonumber\\
&\le&{n\over ({|v_-|\over 2\sqrt{2}}-r) (1-r)^\alpha}\big({{r\beta_1^l\over c}\over \alpha}
+{2n^{1\over 2}\beta_2(|x_-|+r)\over (\alpha+1)(1-r)}
+{2n^{1\over 2}\beta_2r\over \alpha({|v_-|\over 2\sqrt{2}}-r)}\big).
\label{2.17}
\end{eqnarray}
Hence combining \eqref{2.7}, \eqref{2.8b}, \eqref{2.17} and \eqref{2.13} (for "$t=0$") we obtain
\begin{equation}
|\dot A(f)(t)|
\le{2n^{3\over 2}\left({1-{|v_-|^2\over c^2}}\right)^{1\over 2}\over({|v_-|\over 2\sqrt{2}}-r) (1-r)^\alpha}\Big({{r\beta_1^l\over c}+4\beta_2(n^{1\over 2}(|x_-|+r)+1)\over (\alpha+1)\big(1-r\big)}+{r\beta_1^l\over c\alpha}
+{2n^{1\over 2}\beta_2r\over \alpha({|v_-|\over 2\sqrt{2}}-r) }\Big).\label{2.18}
\end{equation}
Then estimate \eqref{lc1} follows from \eqref{2.14}, \eqref{2.15} and \eqref{2.18}.

Now we prove the estimates \eqref{2.26}, \eqref{2.27} and \eqref{2.29} given below. Estimate \eqref{lc2} follows from those latter estimates.
Let $|v_-|> 2\sqrt{2}r$, $r< 1$, and let $(f_1,h_1)$ and $(f_2,h_2)\in M_{r,v_-}$. From \eqref{03d} and \eqref{2.5b} it follows that
\begin{eqnarray}
\big|F^s(z_-+x_-+f_1)(\tau)-F^s(z_-+x_-+f_2)(\tau)\big|\nonumber\\
\le{{n\beta_2\over c}|\dot f_1-\dot f_2|(\tau)\over\big(1-r+{|x_-|\over \sqrt{2}}+\big({|v_-|\over 2^{3\over 2}}-r\big)|\tau|)^{\alpha+2}}
+{2n^{3\over 2}\beta_3^s|f_1-f_2|(\tau)\over \big(1-r+{|x_-|\over \sqrt{2}}+\big({|v_-|\over 2^{3\over 2}}-r\big)|\tau|)^{\alpha+3}}\label{2.9ab}
\end{eqnarray}
for $\tau \in \R$.
 Note that
\begin{eqnarray}
&&\dot A(f_1)(t)-\dot A(f_2)(t)
=g\Big(g^{-1}(v_-)+\int_{-\infty}^tF(z_-+x_-+f_1)(\tau)d\tau\Big)\nonumber\\
&&-g\Big(g^{-1}(v_-)+\int_{-\infty}^tF(z_-+x_-+f_2)(\tau)d\tau\Big).\label{2.21}
\end{eqnarray}
Hence using \eqref{02b} we obtain
\begin{equation}
|\dot A(f_1)(t)-\dot A(f_2)(t)|\le 2n^{1\over 2}\sqrt{1-{|v_-|^2\over c^2}}J(t),\label{2.22}
\end{equation}
\begin{equation}
J(t):=\int_{-\infty}^t|F(z_-+x_-+f_2)(\tau)
-F(z_-+x_-+f_1)(\tau)|d\tau,\label{2.23}
\end{equation} 
for $t\in \R$.
We integrate \eqref{2.9} and \eqref{2.9ab} over $(-\infty,t)$, and we use the estimates $|(f_1- f_2)(\tau)|\le \|f_1-f_2\|$ and 
$|(\dot f_1-\dot f_2)(\tau)|\le (1-r+({|v_-|\over 2\sqrt{2}}-r)|\tau|)^{-1} \|f_1-f_2\|$ for $\tau\le 0$, and  we have
\begin{equation}
J(t)\le\|f_1-f_2\|\int_{-\infty}^t\Big({\big({n\beta_1^l\over c}
+2n^{3\over 2}\beta_2\big)\over \big(1-r+({|v_-|\over 2\sqrt{2}}-r)|\tau|\big)^{\alpha+2}}
+{\big({n\beta_2\over c}
+2n^{3\over 2}\beta_3^s\big)\over \big(1-r+({|v_-|\over 2\sqrt{2}}-r)|s|\big)^{\alpha+3}}\Big)d\tau.\label{2.25}
\end{equation}
Let $t\le 0$. We also use \eqref{2.22} and we obtain
\begin{eqnarray}
&&|\dot A(f_1)(t)-\dot A(f_2)(t)|\le
{2n^{3\over 2}\Big({1-{|v_-|^2\over c^2}}\Big)^{1\over 2}\|f_1-f_2\|\over({|v_-|\over 2\sqrt{2}}-r)\big(1-r+({|v_-|\over 2\sqrt{2}}-r)|t|\big)^{\alpha+1}}\nonumber\\
&&\times\Big({{\beta_1^l\over c}
+2n^{1\over 2}\beta_2\over (\alpha+1)}+{{\beta_2\over c}
+2n^{1\over 2}\beta_3^s\over (\alpha+2)\big(1-r+({|v_-|\over 2\sqrt{2}}-r)|t|\big)}\Big),\label{2.26}
\end{eqnarray}
\begin{eqnarray}
&&|A(f_1)(t)-A(f_2)(t)|\le
{2n^{3\over 2}\Big({1-{|v_-|^2\over c^2}}\Big)^{1\over 2}\|f_1-f_2\|\over(\alpha+1)({|v_-|\over 2\sqrt{2}}-r)^2\big(1-r+({|v_-|\over 2\sqrt{2}}-r)|t|\big)^\alpha}\nonumber\\
&&\times\Big({{\beta_1^l\over c}
+2n^{1\over 2}\beta_2\over \alpha}+{{\beta_2\over c}
+2n^{1\over 2}\beta_3^s\over (\alpha+2)\big(1-r+({|v_-|\over 2\sqrt{2}}-r)|t|\big)}\Big).\label{2.27}
\end{eqnarray}

Let $t\ge 0$ and $|v_-|>2\sqrt{2}r$, $r<1$, and let $(f_1,f_2)\in M_{r, v_-}^2$.
We integrate \eqref{2.9} and \eqref{2.9ab} over $(0,t)$, and we use the estimates \eqref{2.5a} and $|\dot f_1(\tau)-\dot f_2(\tau)|\le \|f_1-f_2\|$ for $\tau \in \R$, and we have 
\begin{eqnarray}
J(t)-J(0)&=&\int_0^t|F(x_-+z_-+f_2)(\tau)-F(x_-+z_-+f_1)(\tau)|d\tau\nonumber\\
&\le&\int_0^t{{n\beta_1^l\over c}\|f_1-f_2\|d\tau\over \big(1-r+({|v_-|\over 2\sqrt{2}}-r)\tau\big)^{\alpha+1}}
+
\int_0^t{2n^{3\over 2}\beta_2(1+\tau)\|f_1-f_2\|d\tau\over \big(1-r+({|v_-|\over 2\sqrt{2}}-r)\tau\big)^{\alpha+2}}\nonumber\\
&&+\int_0^t{{n\beta_2\over c}\|f_1-f_2\|d\tau\over \big(1-r+({|v_-|\over 2\sqrt{2}}-r)\tau\big)^{\alpha+2}}
+
\int_0^t{2n^{3\over 2}\beta_3^s(1+\tau)\|f_1-f_2\|d\tau\over \big(1-r+({|v_-|\over 2\sqrt{2}}-r)\tau\big)^{\alpha+3}}\nonumber\\
&\le&{n\|f_2-f_1\|\over (1-r)^\alpha ({|v_-|\over 2\sqrt{2}}-r)}\Big({\beta_1^l\over c\alpha}+{\beta_2\over c(\alpha+1)(1-r)}
+{2n^{1\over 2}\over(1-r)}
\big({\beta_2\over \alpha+1}\nonumber\\
&&+{\beta_3^s\over (\alpha+2)(1-r)}\big)+{2n^{1\over 2}\over ({|v_-|\over 2\sqrt{2}}-r)}
\big({\beta_2\over \alpha}+{\beta_3^s\over (\alpha+1)(1-r)}\big)\Big).\label{2.28}
\end{eqnarray}
Using \eqref{2.22}, \eqref{2.25} (with ``$t=0$'') and \eqref{2.28} we obtain
\begin{eqnarray}
&&|\dot A(f_1)(t)-\dot A(f_2)(t)|\le
{2n^{3\over 2}\Big({1-{|v_-|^2\over c^2}}\Big)^{1\over 2}\|f_1-f_2\|\over ({|v_-|\over 2\sqrt{2}}-r)\big(1-r\big)^\alpha}\nonumber\\
&&\times\Big[{\beta_1^l
\over c(\alpha+1)\big(1-r\big)}+{\beta_2
\over c(\alpha+2)\big(1-r\big)^2}\nonumber\\
&&+\big({\beta_1^l\over c\alpha}+{\beta_2\over c(\alpha+1)(1-r)}\big)+{4n^{1\over 2}\over(1-r)}
\big({\beta_2\over \alpha+1}+{\beta_3^s\over (\alpha+2)(1-r)}\big)\nonumber\\
&&+{2n^{1\over 2}\over({|v_-|\over 2\sqrt{2}}-r)}
\big({\beta_2\over \alpha}+{\beta_3^s\over (\alpha+1)(1-r)}\big)
\Big].
\label{2.29}
\end{eqnarray}
\end{proof}

\begin{proof}[Proof of Lemma \ref{lem_cont2}]
In this paragraph we shorten $z_-(v_-,x_-,.)$ to $z_-$.

Similarly to \eqref{2.5b} we have
\begin{eqnarray}
|z_-(\tau)+f(\tau)|&\ge& {|x_-|\over \sqrt{2}}-r+|\tau|\Big({|v_-|\over \sqrt{2}}-{2^{9\over 2}n^{3\over 2}\beta_1^l\sqrt{1-{|v_-|^2\over c^2}}\over \alpha |v_-|(1+{|x_-|\over \sqrt{2}})^\alpha}-r\Big)\nonumber\\
&\ge & {|x_-|\over \sqrt{2}}-r+|\tau|\Big({|v_-|\over 2\sqrt{2}}-r\Big).\label{4.5b}
\end{eqnarray}
for $\tau\in \R$ and for $f\in M_{r,v_-,x_-}$. We used \eqref{5.3b} (for "$(x,w,v,q)=(x_-,v_-,v_-,0)$") and the condition \eqref{5.1a}.
Using \eqref{03b}, \eqref{03a} and \eqref{4.5b} we have 
\begin{eqnarray}
|F^s(z_-+f)(\tau)|
&\le& 2n\beta_2\Big(1+{|x_-|\over \sqrt{2}}-r+|\tau|\big({|v_-|\over 2\sqrt{2}}-r\big)\Big)^{-\alpha-2},\label{4.9}\\
|F^l(z_-+f)(\tau)|
&\le& 2n\beta_1^l\Big(1+{|x_-|\over \sqrt{2}}-r+|\tau|\big({|v_-|\over 2\sqrt{2}}-r\big)\Big)^{-\alpha-1},\label{4.9b}
\end{eqnarray}
for $\tau\in \R$.
Hence using also condition \eqref{lc2_0} we obtain 
\begin{eqnarray}
&&|g^{-1}(v_-)+\int_{-\infty}^t\big(\eta F(z_-+f_1)+(1-\eta)(F^l+\mu F^s)(z_-+f_2)\big)(\tau)d\tau|\nonumber\\
&\ge&|g^{-1}(v_-)|-{4n\over \big({|v_-|\over 2^{3\over 2}}-r\big)(1+{|x_-|\over \sqrt{2}}-r)^\alpha}\big({\beta_1^l\over \alpha}+{\beta_2\over (\alpha+1)(1+{|x_-|\over \sqrt{2}}-r)}\big)\nonumber\\
&\ge&{|g^{-1}(v_-)|\over 2},\label{4.9e}
\end{eqnarray}
for $(t,\eta,\mu)\in \R\times [0,1]^2$ and for $(f_1,f_2)\in M_{r,v_-,x_-}^2$.

Using \eqref{4.5b}, \eqref{03c} and \eqref{03d} we have
\begin{eqnarray}
&&|F^l(z_-+f_1)-F^l(z_-+f_2)|(\tau)
\le{{n\beta_1^l\over c}|(\dot f_1-\dot f_2)(\tau)|\over \big(1-r+{|x_-|\over \sqrt{2}}+({|v_-|\over 2^{3\over 2}}-r)|\tau|\big)^{\alpha+1}}\nonumber\\
&&+{2n^{3\over 2}\beta_2|(f_1-f_2)(\tau)|\over \big(1-r+{|x_-|\over \sqrt{2}}+({|v_-|\over 2^{3\over 2}}-r)|\tau|\big)^{\alpha+2}}
,\label{4.9c}
\end{eqnarray}
\begin{eqnarray}
&&\big|F^s(z_-+f_1)(\tau)-F^s(z_-+f_2)(\tau)\big|
\le{{n\beta_2\over c}|\dot f_1-\dot f_2|(\tau)\over\big(1-r+{|x_-|\over \sqrt{2}}+\big({|v_-|\over 2^{3\over 2}}-r\big)|\tau|)^{\alpha+2}}\nonumber\\
&&+{2n^{3\over 2}\beta_3^s|f_1-f_2|(\tau)\over \big(1-r+{|x_-|\over \sqrt{2}}+\big({|v_-|\over 2^{3\over 2}}-r\big)|\tau|)^{\alpha+3}},
\label{4.9ab}
\end{eqnarray}
for $\tau\in \R$. 

Then the proof of the following estimates \eqref{4.14}, \eqref{4.15}, \eqref{4.18},  \eqref{4.26}, \eqref{4.27} and \eqref{4.29}  is similar to the proof of the estimates \eqref{2.14}, \eqref{2.15}, \eqref{2.18},  \eqref{2.26}, \eqref{2.27} and \eqref{2.29}, and we have  
\begin{eqnarray}
|\dot\A(f)(t)|
&\le&{ 2n^{3\over 2}\left({1-{|v_-|^2\over c^2}}\right)^{1\over 2}
\big({r\beta_1^l\over c}+2\beta_2(n^{1\over 2}r+1)\big)\over (\alpha+1)\big({|v_-|\over 2\sqrt{2}}-r\big)\big(1-r+{|x_-|\over \sqrt{2}}-\big({|v_-|\over 2\sqrt{2}}-r\big)t\big)^{\alpha+1}},\label{4.14}\\
|\A(f)(t)|
&\le& {2n^{3\over 2}\left({1-{|v_-|^2\over c^2}}\right)^{1\over 2}
\big({r\beta_1^l\over c}+2\beta_2(n^{1\over 2}r+1)\big)\over \alpha(\alpha+1)\big({|v_-|\over 2\sqrt{2}}-r\big)^2\big(1-r+{|x_-|\over \sqrt{2}}-\big({|v_-|\over 2\sqrt{2}}-r\big)t\big)^\alpha},\label{4.15}
\end{eqnarray}
\begin{eqnarray}
&&|\dot\A(f_1)(t)-\dot\A(f_2)(t)|\le
{2n^{3\over 2}\Big({1-{|v_-|^2\over c^2}}\Big)^{1\over 2}\|f_1-f_2\|_*\over({|v_-|\over 2\sqrt{2}}-r)\big(1-r+{|x_-|\over \sqrt{2}}+({|v_-|\over 2\sqrt{2}}-r)|t|\big)^{\alpha+1}} \nonumber\\
&&\times\Big({{\beta_1^l\over c}
+2n^{1\over 2}\beta_2\over (\alpha+1)}+{{\beta_2\over c}
+2n^{1\over 2}\beta_3^s\over (\alpha+2)\big(1-r+{|x_-|\over \sqrt{2}}+({|v_-|\over 2\sqrt{2}}-r)|t|\big)}\Big),\label{4.26}\\
&&|\A(f_1)(t)-\A(f_2)(t)|\le
{2n^{3\over 2}\Big({1-{|v_-|^2\over c^2}}\Big)^{1\over 2}\|f_1-f_2\|_*\over(\alpha+1)({|v_-|\over 2\sqrt{2}}-r)^2\big(1-r+{|x_-|\over \sqrt{2}}+({|v_-|\over 2\sqrt{2}}-r)|t|\big)^\alpha} \nonumber\\
&&\times\Big({{\beta_1^l\over c}
+2n^{1\over 2}\beta_2\over \alpha}+{{\beta_2\over c}
+2n^{1\over 2}\beta_3^s\over (\alpha+2)\big(1-r+{|x_-|\over \sqrt{2}}+({|v_-|\over 2\sqrt{2}}-r)|t|\big)}\Big),\label{4.27}
\end{eqnarray}
for $t\le 0$, and
\begin{equation}
|\dot\A(f)(t)|
\le {2n^{3\over 2}\left({1-{|v_-|^2\over c^2}}\right)^{1\over 2}\over ({|v_-|\over 2\sqrt{2}}-r) (1-r+{|x_-|\over \sqrt{2}})^\alpha}
\Big({{r\beta_1^l\over c}+4\beta_2(n^{1\over 2}r+1)\over (\alpha+1)
\big(1-r+{|x_-|\over \sqrt{2}}\big)}+{r\beta_1^l\over c \alpha}
+{2n^{1\over 2}\beta_2r\over \alpha({|v_-|\over 2\sqrt{2}}-r)}\Big),\label{4.18}
\end{equation}
\begin{eqnarray}
&&|\dot\A(f_1)(t)-\dot\A(f_2)(t)|\le
{2n^{3\over 2}\Big({1-{|v_-|^2\over c^2}}\Big)^{1\over 2}\|f_1-f_2\|_*\over ({|v_-|\over 2\sqrt{2}}-r)(1-r+{|x_-|\over \sqrt{2}})^\alpha}\nonumber\\
&&\times\Big[{\beta_1^l
\over c(\alpha+1)\big(1-r+{|x_-|\over \sqrt{2}}\big)}+{\beta_2\over c(\alpha+2)\big(1-r+{|x_-|\over \sqrt{2}}\big)^2}\nonumber\\
&&+\big({\beta_1^l\over c\alpha}+{\beta_2\over c(\alpha+1)(1-r+{|x_-|\over \sqrt{2}})}\big)+{4n^{1\over 2}\over(1-r+{|x_-|\over \sqrt{2}})}
\big({\beta_2\over \alpha+1}+{\beta_3^s\over (\alpha+2)(1-r+{|x_-|\over \sqrt{2}})}\big)\nonumber
\end{eqnarray}
\begin{equation}
+{2n^{1\over 2}\over ({|v_-|\over 2\sqrt{2}}-r)}
\big({\beta_2\over \alpha}+{\beta_3^s\over (\alpha+1)(1-r+{|x_-|\over \sqrt{2}})}\big)
\Big],
\label{4.29}
\end{equation}
for $t\ge 0$.
Then estimate \eqref{lc3} follows from \eqref{4.14}, \eqref{4.15} and \eqref{4.18}, and estimate \eqref{lc4} follows from \eqref{4.26}, \eqref{4.27} and \eqref{4.29}.

Now we prove \eqref{lc7}.
From \eqref{5.18aa}, \eqref{4.9e}, \eqref{02b}, \eqref{4.9} and \eqref{4.9b} it follows that
\begin{eqnarray}
&&|\tilde k(v_-,x_-,f)-v_-|
\le 2n^{1\over 2}\sqrt{1-{|v_-|^2\over c^2}}\int_{-\infty}^{+\infty}|F(z_-+f)(\tau)|d\tau\nonumber\\
&\le&{8n^{3\over 2}\sqrt{1-{|v_-|^2\over c^2}}\over ({|v_-|\over 2\sqrt{2}}-r) (1+{|x_-|\over \sqrt{2}}-r)^\alpha}\big({\beta_1^l\over \alpha}+{\beta_2\over (\alpha+1)(1+{|x_-|\over \sqrt{2}}-r)}\big).\label{5.18b}
\end{eqnarray}
\end{proof}

\section{Proof of Theorem \ref{thm_y}}
\label{sec_thm_y}
Estimate \eqref{t1a} follows from estimate \eqref{2.14}.
The proof of \eqref{t1c} is similar to the proof of \eqref{lc7} given at the end of Section \ref{sec_cont}. 
In this Section we shorten $z_-(v_-,.)$, $a_{sc}(v_-,x_-)$, $a(v_-,x_-)$ and $b_{sc}(v_-,x_-)$ to $z_-$, $a_{sc}$, $a$ and $b_{sc}$.

We prove \eqref{t3}.
First note that
\begin{equation}
a_{sc}-\sqrt{1-{|v_-|^2\over c^2}}\int_{-\infty}^{+\infty}F(.v_-+x_-)(\tau)d\tau
=\Delta_{1,1}+\Delta_{1,2},\label{3.4aa}
\end{equation}
where
\begin{equation}
\Delta_{1,1}:=\lim_{t\to +\infty}\big(\dot A(y_-)-\dot A(0)\big)(t),\label{3.2a}
\end{equation}
\begin{eqnarray}
\Delta_{1,2}&:=&
\left(<\nabla g_j(g^{-1}(v_-)),\int_{-\infty}^{+\infty}\big(F(z_-+x_-)(\tau)-F(. v_-+x_-)(\tau)\big)d\tau >\right)_{j=1\ldots n}\nonumber\\
&&+\left(\int_0^1<\nabla g_j\big(g^{-1}(v_-)+\ep \int_{-\infty}^{+\infty}F(z_-+x_-)(\tau)d\tau\big)-\nabla g_j(g^{-1}(v_-)),\right.\nonumber\\
&&\left.\int_{-\infty}^{+\infty}F(z_-+x_-)(\tau)d\tau> d\ep \right)_{j=1\ldots n},
\label{3.4a}
\end{eqnarray}
and where $<.,.>$ denotes the scalar product in $\R^n$.
For the decomposition \eqref{3.4aa} we used \eqref{02d} and the identity $<v_-,\int_{-\infty}^{+\infty}F(. v_-+x_-)(\tau)d\tau>=0$ to obtain
\begin{equation}
\Big(<\nabla g_j(g^{-1}(v_-)),\int_{-\infty}^{+\infty}F(. v_-+x_-)(\tau)d\tau>\Big)_{j=1\ldots n}
=\sqrt{1-{|v_-|^2\over c^2}}\int_{-\infty}^{+\infty}F(. v_-+x_-)(\tau)d\tau.\label{3.3}
\end{equation} 
We prove the estimates \eqref{t1db}  and  \eqref{3.10a} given below that provide a bound for $\Delta_{1,i}$, $i=1,2$. Then adding those bounds and using the decomposition \eqref{3.4aa} we obtain \eqref{t3}.

We use \eqref{2.29}, and we have
\begin{equation}
|\Delta_{1,1}|
\le{8n^{3\over 2}\max(\beta_1^l,\beta_2,\beta_3^s)
\Big({1-{|v_-|^2\over c^2}}\Big)^{1\over 2}({1\over c}+2n^{1\over 2})\big(1+{1\over {|v_-|\over 2\sqrt{2}}-r}\big)\|y_-\|\over \alpha({|v_-|\over 2\sqrt{2}}-r)(1-r)^{\alpha+2}}.\label{3.2}
\end{equation}
Then $\|y_-\|=\|A(y_-)\|$ is bounded by the right hand side of \eqref{lc1}, and combining this upper bound with \eqref{3.2} we obtain 
\begin{equation}
|\Delta_{1,1}|\le {32n^3\max(\beta_1^l,\beta_2,\beta_3^s)^2\big(1-{|v_-|^2\over c^2}\big)
\big({r\over c}+2n^{1\over 2}(|x_-|+r)+2\big)({1\over c}+2n^{1\over 2})\big(1+{1\over {|v_-|\over 2\sqrt{2}}-r}\big)^2\over \alpha^2({|v_-|\over 2\sqrt{2}}-r)^2(1-r)^{2\alpha+3}}
.\label{t1db}
\end{equation}

We use \eqref{02a},  \eqref{02c} and \eqref{2.9c}, and we obtain
\begin{eqnarray}
|\Delta_{1,2}|
&\le&n^{1\over 2}\sqrt{1-{|v_-|^2\over c^2}}\int_{-\infty}^{+\infty}\big|F(z_-+x_-)(\tau)-F( .v_-+x_-)(\tau)\big|d\tau\nonumber\\
&&+{6n\over c}(1-{|v_-|^2\over c^2})
\Big|\int_{-\infty}^{+\infty}F(z_-+x_-)(\tau)d\tau\Big|^2.\label{3.6}
\end{eqnarray}
Using \eqref{2.8b} and \eqref{2.9b} (with ``$(f,r)$"$=(0,0)$) we obtain
\begin{equation}
\Big|\int_{-\infty}^{+\infty}F(z_-+x_-)(\tau)d\tau\Big|
\le{2^{7\over 2}n\over |v_-|(1+{|x_-|\over \sqrt{2}})^\alpha}\big({\beta_1^l\over \alpha}+{\beta_2\over (\alpha+1)(1+{|x_-|\over \sqrt{2}})}\big).\label{3.7}
\end{equation}
Then we use \eqref{03c} and \eqref{5.3b} (with ``$(w,v,x,q)$"$=(v_-,v_-,0,0)$), and we have 
\begin{eqnarray}
&&\int_{-\infty}^{+\infty}\big|F^l(z_-+x_-)(\tau)-F^l( .v_-+x_-)(\tau)\big|d\tau\nonumber\\
&\le&\int_{-\infty}^{+\infty}{{n\beta_1^l\over c}\sup_{\R}|\dot z_--v_-|d\tau\over \big(1+{|x_-|\over \sqrt{2}}+|\tau|{|v_-|\over 2^{3\over 2}}\big)^{\alpha+1}}
+\int_{-\infty}^{+\infty}{2n^{3\over 2}\beta_2|\tau|\sup_{\R}|\dot z_--v_-|d\tau\over \big(1+{|x_-|\over \sqrt{2}}+|\tau|{|v_-|\over 2^{3\over 2}}\big)^{\alpha+2}}\nonumber\\
&\le&{2^7n^{5\over 2}\beta_1^l\sqrt{1-{|v_-|^2\over c^2}}\over \alpha^2|v_-|^2(1+{|x_-|\over \sqrt{2}})^\alpha}\big({\beta_1^l\over c}
+{2^{5\over 2}n^{1\over 2}\beta_2\over |v_-|}\big).\label{3.9}
\end{eqnarray}
Similarly we use \eqref{03d} and \eqref{5.3b}, and we obtain
\begin{equation}
\hskip -2cm\int_{-\infty}^{+\infty}\big|F^s(z_-+x_-)-F^s( .v_-+x_-)\big|(\tau)d\tau
\le
{2^7n^{5\over 2}\beta_1^l(1-{|v_-|^2\over c^2})^{1\over 2}\big({\beta_2\over c}
+{2^{5\over 2}n^{1\over 2}\beta_3^s\over |v_-|}\big)\over \alpha(\alpha+1)|v_-|^2(1+{|x_-|\over \sqrt{2}})^{\alpha+1}}.\label{3.9b}
\end{equation}
Combining \eqref{3.6}, \eqref{3.7}, \eqref{3.9} and \eqref{3.9b}, we obtain
\begin{eqnarray}
|\Delta_{1,2}|&\le&{2^8\cdot 3n^3(1-{|v_-|^2\over c^2})\over c|v_-|^2(1+{|x_-|\over \sqrt{2}})^{2\alpha}}\big({\beta_1^l\over \alpha}+{\beta_2\over (\alpha+1)(1+{|x_-|\over \sqrt{2}})}\big)^2\nonumber\\
&&+{2^7n^3\beta_1^l\big(1-{|v_-|^2\over c^2}\big)\over \alpha |v_-|^2(1+{|x_-|\over \sqrt{2}})^\alpha}\Big({1\over c}\big({\beta_1^l\over \alpha}+{\beta_2\over (\alpha+1)(1+{|x_-|\over \sqrt{2}})}\big)\nonumber\\
&&+{2^{5\over 2}n^{1\over 2}\over |v_-|}\big({\beta_2 \over \alpha}+{\beta_3^s\over (\alpha+1)(1+{|x_-|\over \sqrt{2}})}\big)\Big).\label{3.10a}
\end{eqnarray}

Now we prove \eqref{t1d} and \eqref{t1e}.
We rewrite $y_+$ as follows 
\begin{equation}
y_+=h_0+h_1,\label{3.14}
\end{equation}
where
\begin{eqnarray}
&&h_0(t):=\int_t^{+\infty}\Big(<\int_0^1 \nabla g_j\big(g^{-1}(a)-\ep\int_\sigma^{+\infty}F(z_-+x_-+y_-)(\tau)d\tau
\label{3.31a}\\
&&-(1-\ep)\int_\sigma^{+\infty} F^l(z_+(a,.))(\tau)d\tau\big)\Big)d\tau\big)d\ep,
\int_\sigma^{+\infty}F^s(z_-+x_-+y_-)(\tau)d\tau>\Big)_{j=1\ldots n}d\sigma,\nonumber
\end{eqnarray}
\begin{eqnarray}
&&h_1(t) 
:=\int_t^{+\infty}\Big(<\int_0^1 \nabla g_j\big(g^{-1}(a)-\ep\int_\sigma^{+\infty}F(z_-+x_-+y_-)(\tau)d\tau
\nonumber\\
&&-(1-\ep)\int_\sigma^{+\infty} F^l(z_+(a,.))(\tau)d\tau\big)\Big)d\tau\big)d\ep,\nonumber\\
&&\int_\sigma^{+\infty}\big(F^l(z_-+x_-+y_-)
-F^l(z_+(a,.))\big)(\tau)d\tau>\Big)_{j=1\ldots n}d\sigma,\label{3.31b}
\end{eqnarray}
for $t\ge 0$.
We estimate $\dot h_0$. 
We also need the following estimate \eqref{3.36}.
For $\ep,\ \ep'\in (0,1)$ and $\tau \ge 0$ we have 
\begin{eqnarray}
&&|(1-\ep)(z_-(\tau)+y_-(\tau))+\ep z_+(a,\tau)+\ep'x_-|\nonumber\\
&\ge&|\ep'x_-+\tau v_-|-(1-\ep)|z_-(\tau)-\tau v_-|-r\tau-r\nonumber\\
&&-\ep|z_+(a,\tau)-\tau a|-\ep |a_{sc}|\tau\nonumber\\
&\ge&\ep'{|x_-|\over  \sqrt{2}}-r+\tau\big({|v_-|\over \sqrt{2}}-r-{2^{9\over 2}n^{3\over 2}\beta_1^l\sqrt{1-{|v_-|^2\over c^2}}\over \alpha |v_-|}\big)-\ep |a_{sc}|\tau.\label{3.33}
\end{eqnarray}
We used \eqref{5.3b} for ``$(v,w,x,q)$"$=(v_-,v_-,0,0)$ and for ``$(v,w,x,q)$"$=(a,a,0,0)$, and we used the identity $|v_-|=a$ that holds by conservation of energy.
Then we use \eqref{t1h} and \eqref{3.33}, and we have for $(\ep,\ep')\in (0,1)^2$ and $\tau \ge 0$
\begin{equation}
|(1-\ep)(z_-(\tau)+y_-(\tau))+\ep z_+(a,\tau)+\ep'x_-|
\ge\ep'{|x_-|\over  \sqrt{2}}-r+\tau\big({|v_-|\over 2\sqrt{2}}-r\big).\label{3.36}
\end{equation}
Then the following estimate that is similar to \eqref{2.9c} holds 
\begin{eqnarray}
&&|g^{-1}(a)-\eta_1\int_{-\infty}^{+\infty}F(z_-+x_-+y_-)(\tau) d\tau-\int_\sigma^{+\infty}\big(\eta_2F(z_-+x_-+y_-)\nonumber\\
&&+\eta_3F^l(z_+(a,.))+\eta_4 F^l(z_+(a,.)+x_-)\big)(\tau)d\tau|\ge  {1\over 2}|g^{-1}(v_-)|,\label{B9}
\end{eqnarray}
for $(\sigma,\eta_1,\eta_2,\eta_3,\eta_4)\in [0,+\infty)\times [0,1]^4$, $\eta_1+\eta_2+\eta_3+\eta_4\le 1$.

From \eqref{B9}, \eqref{02a}, \eqref{03b} and \eqref{3.31a} it follows that
\begin{eqnarray}
|\dot h_0( t)|
&\le&2n^{1\over 2}\big(1- {|v_-|^2\over c^2}\big)^{1\over 2}\int_t^{+\infty} |F^s(z_-+x_-+y_-)|(\tau)d\tau \nonumber\\
&\le &{4\beta_2n^{3\over 2}\sqrt{1- {|v_-|^2\over c^2}}\over (\alpha+1)({|v_-|\over 2^{3\over 2}}-r)
 (1+{|x_-|\over \sqrt{2}}-r+({|v_-|\over 2^{3\over 2}}-r)t)^{\alpha+1}},\label{3.45a}
\end{eqnarray}
for $t\ge 0$.
Now set
\begin{equation} 
\delta_r:=
\max\Big(\sup_{(0,+\infty)}|b_{sc}+y_+|,\sup_{(0,+\infty)}(1-r+({|v_-|\over 2^{3\over 2}}-r)s)^{-1}|\dot y_+(s)|\Big).\label{3.40}
\end{equation}
We remind that $\delta_r$ is finite by Lemma \ref{lem:comp} (for ``$(x,z)$"$=(z_-+x_-+y_-, z_+(a,.))$).
Then we use \eqref{02a}, \eqref{03c}, \eqref{B9} and \eqref{3.36}, and we obtain
\begin{eqnarray}
|\dot h_1(t)|
&\le &2n^{1\over 2}\sqrt{1-{|v_-|^2\over c^2}}\int_t^{+\infty}{2n^{3\over 2}\beta_2|x_-|+({n\beta_1^l\over c}+2n^{3\over 2}\beta_2)\delta_r\over (1-r+\tau\big({|v_-|\over 2\sqrt{2}}-r\big))^{\alpha+2}}d\tau d\sigma\nonumber\\
&\le&{2n^{3\over 2}\sqrt{1-{|v_-|^2\over c^2}}\big(2n^{1\over 2}\beta_2|x_-|+({\beta_1^l\over c}+2n^{1\over 2}\beta_2)\delta_r\big)\over (\alpha+1)\big({|v_-|\over 2\sqrt{2}}-r\big)(1-r+t\big({|v_-|\over 2\sqrt{2}}-r\big))^{\alpha+1}},\label{3.45b}
\end{eqnarray}
for $t\ge 0$.
Hence combining \eqref{3.14}, \eqref{3.45a} and \eqref{3.45b} we have 
\begin{equation}
|\dot y_+(t)|
\le {2n^{3\over 2}\big(2\beta_2+2n^{1\over 2}\beta_2|x_-|+({\beta_1^l\over c}+2n^{1\over 2}\beta_2)\delta_r\big)\sqrt{1-{|v_-|^2\over c^2}}\over (\alpha+1)\big({|v_-|\over 2\sqrt{2}}-r\big)\big(1-r+t({|v_-|\over 2^{3\over 2}}-r)\big)^{\alpha+1}},\label{3.46}
\end{equation}
for $t\ge 0$.
In addition from \eqref{3.13a} and \eqref{2.15} at $t=0$ it follows that
\begin{equation}
|b_{sc}+y_+(0)|=|A(y_-)(0)|\le{2n^{3\over 2}\left({1-{|v_-|^2\over c^2}}\right)^{1\over 2}\big({r\beta_1^l\over c}+2\beta_2(n^{1\over 2}(|x_-|+r)+1)\big)\over \alpha(\alpha+1)\big({|v_-|\over 2\sqrt{2}}-r\big)^2(1-r)^\alpha}.\label{B30}
\end{equation}
From \eqref{3.40}, \eqref{3.46} and \eqref{B30} it follows that
\begin{equation}
\delta_r\le{2n^{3\over 2}\sqrt{1- {|v_-|^2\over c^2}}\big({r\beta_1^l\over c}+2\beta_2(n^{1\over 2}(2|x_-|+r)+2)+({\beta_1^l\over c}+2n^{1\over 2}\beta_2)\delta_r\big)\max\big(1,{1\over \alpha({|v_-|\over 2^{3\over 2}}-r)}\big)\over (\alpha+1)({|v_-|\over 2^{3\over 2}}-r)(1-r)^\alpha}.\label{3.45}
\end{equation}
Then we use condition \eqref{t1h} and we obtain
\begin{equation}
{\delta_r\over 2}\le{2n^{3\over 2}\sqrt{1- {|v_-|^2\over c^2}}\big({r\beta_1^l\over c}+2\beta_2(n^{1\over 2}(2|x_-|+r)+2)\big)\over (\alpha+1)({|v_-|\over 2^{3\over 2}}-r)(1-r)^\alpha}\max\big(1,{1\over \alpha({|v_-|\over 2^{3\over 2}}-r)}\big).\label{B10}
\end{equation}
Estimate \eqref{B10} and condition \eqref{t1h} also provide the following estimate
\begin{equation}
({\beta_1^l\over c}+2n^{1\over 2}\beta_2)\delta_r\le {r\beta_1^l\over c}+2\beta_2(n^{1\over 2}(2|x_-|+r)+2).\label{B11}
\end{equation}
Estimate \eqref{t1e} follows from \eqref{3.46} and \eqref{B11}. 
Then we integrate over $(0,+\infty)$ both sides of \eqref{t1e} and we obtain a bound on $|y_+(0)|$. Then we add this bound with the bound given in \eqref{2.15} for $A(y_-)(0)$, and we use \eqref{3.13a} and we obtain \eqref{t1d}.

It remains to prove \eqref{t4}.
From \eqref{3.13a}, \eqref{3.14}, \eqref{3.31a} and \eqref{3.31b} at $t=0$ it follows by straightforward computations that 
\begin{eqnarray}
&& \sum_{j=1}^8 \Delta_{2,j}=b_{sc}(v_-,x_-)-\int_{-\infty}^0\Big(g\big(g^{-1}(v_-)+\int_{-\infty}^\sigma F^l(z_-+x_-)(\tau)d\tau\big)\nonumber\\
&&-g\big(g^{-1}(v_-)+\int_{-\infty}^\sigma F^l(z_-)(\tau)d\tau\big)\Big)d\sigma
-\int_0^{+\infty}\Big(g\big(g^{-1}(a)-\int_\sigma^{+\infty}F^l(z_+(a,.)+x_-)(\tau)d\tau\big)\nonumber\\
&&-g\big(g^{-1}(a)-\int_\sigma^{+\infty}F^l(z_+(a,.))(\tau)d\tau\big)\Big)d\sigma-\Big(<\nabla g_j(g^{-1}(v_-)),\nonumber\\
&&\int_{-\infty}^0\int_{-\infty}^\sigma F^s(. v_-+x_-)(\tau)d\tau d\sigma-\int_0^{+\infty}\int_\sigma^{+\infty} F^s(. v_-+x_-)(\tau)d\tau d\sigma>\Big)_{j=1\ldots n},\label{3.90}
\end{eqnarray}
where
\begin{equation}
\Delta_{2,1}:=A(y_-)(0)-A(0)(0),\label{B1}
\end{equation}
\begin{eqnarray}
&&\Delta_{2,2}:=
\int_{-\infty}^0\Big(<\nabla g_j\big(g^{-1}(v_-)+\ep\int_{-\infty}^\sigma F(z_-+x_-)(\tau)d\tau
\label{B2}\\
&&+(1-\ep)\int_{-\infty}^\sigma F^l(z_-+x_-)(\tau)d\tau\big)-\nabla g_j\big(g^{-1}(v_-)),
\int_{-\infty}^\sigma F^s(z_-+x_-)(\tau)d\tau>\Big)_{j=1\ldots n}d\sigma,\nonumber
\end{eqnarray}
\begin{equation}
\Delta_{2,3}:=\int_{-\infty}^0\Big(<\nabla g_j(g^{-1}(v_-)),\int_{-\infty}^\sigma(F^s(z_-+x_-)
-F^s(. v_-+x_-))(\tau)d\tau>\Big)_{j=1\ldots n}d\sigma,\label{B3}
\end{equation}
\begin{eqnarray}
\Delta_{2,4}&:=&-
\int_0^{+\infty}\Big(<\int_0^1\big(\nabla g_j\big(g^{-1}(a)-\ep\int_\sigma^{+\infty}F(z_-+x_-+y_-)(\tau)d\tau\nonumber\\
&&
-(1-\ep)\int_\sigma^{+\infty} F^l(z_+(a,.))(\tau)d\tau\big)-\nabla g_j(g^{-1}(v_-))\Big)d\ep,\nonumber\\
&&\int_\sigma^{+\infty} F^s(z_-+x_-+y_-)(\tau)d\tau>\Big)_{j=1\ldots n}d\sigma,\label{B4}
\end{eqnarray}
\begin{equation}
\Delta_{2,5}:=-\int_0^{+\infty}\Big(<\nabla g_j(g^{-1}(v_-)),\int_\sigma^{+\infty}(F^s(z_-+x_-+y_-)
-F^s(z_-+x_-))(\tau)d\tau>\Big)_{j=1\ldots n}d\sigma,\label{B5}
\end{equation}
\begin{equation}
\Delta_{2,6}:=-\int_0^{+\infty}\Big(<\nabla g_j(g^{-1}(v_-)),\int_\sigma^{+\infty}(F^s(z_-+x_-)
-F^s(. v_-+x_-))(\tau)d\tau>\Big)_{j=1\ldots n}d\sigma,\label{B6}
\end{equation}
\begin{eqnarray}
\Delta_{2,7}&:=&-\int_0^{+\infty}\Big(<\int_0^1 \nabla g_j\big(g^{-1}(a)-\ep\int_\sigma^{+\infty}F(z_-+x_-+y_-)(\tau)d\tau
\nonumber\\
&&-(1-\ep)\int_\sigma^{+\infty} F^l(z_+(a,.))(\tau)d\tau\big)d\ep,\label{B7}\\
&&\int_\sigma^{+\infty}\big(F^l(z_-+x_-+y_-)
-F^l(z_+(a,.)+x_-)\big)(\tau)d\tau>\Big)_{j=1\ldots n}d\sigma,\nonumber
\end{eqnarray}
\begin{eqnarray}
&&\Delta_{2,8}:=-\int_0^{+\infty}\Big(<\int_0^1 \Big(\nabla g_j\big(g^{-1}(a)-\ep\int_\sigma^{+\infty}F(z_-+x_-+y_-)(\tau)d\tau
\nonumber\\
&&-(1-\ep)\int_\sigma^{+\infty} F^l(z_+(a,.))(\tau)d\tau\big)\label{B8}\\
&&-\nabla g_j\big(g^{-1}(a)-\ep\int_\sigma^{+\infty}F^l(z_+(a,.)+x_-)(\tau)d\tau
-(1-\ep)\int_\sigma^{+\infty} F^l(z_+(a,.))(\tau)d\tau\big)\Big)d\ep,\nonumber\\
&&\int_\sigma^{+\infty}\big(F^l(z_+(a,.)+x_-)
-F^l(z_+(a,.))\big)(\tau)d\tau>\Big)_{j=1\ldots n}d\sigma.\nonumber
\end{eqnarray}
Using \eqref{2.27} (for $(f_1,f_2)=(y_-,0)$ at time $t=0$) and \eqref{lc1} we obtain
\begin{equation}
|\Delta_{2,1}|
\le{12n^3\big({1-{|v_-|^2\over c^2}}\big)\max(\beta_1^l,\beta_2,\beta_3^s)^2\big({r\over c}+2(n^{1\over 2}(|x_-|+r)+1)\big)({1\over c}
+2n^{1\over 2}))\big(1+{1\over {|v_-|\over 2\sqrt{2}}-r}\big)\over \alpha^2(\alpha+1)({|v_-|\over 2\sqrt{2}}-r)^3(1-r)^{2\alpha+2}}
.\label{3.15}
\end{equation}
Using \eqref{02c}, \eqref{2.9c} and then \eqref{03a}, \eqref{03b} and \eqref{2.5b} (with  ``$(f,r)$"$=(0,0)$) we have
\begin{eqnarray}
&&|\Delta_{2,2}|
\le {6n\over c}(1-{|v_-|^2\over c^2})
\int_{-\infty}^0\max\Big(\int_{-\infty}^\sigma|F(z_-(v_-,.)+x_-)(\tau)|d\tau,\nonumber\\
&&\int_{-\infty}^\sigma|F^l(z_-(v_-,.)+x_-)(\tau)|d\tau\Big)
\int_{-\infty}^\sigma|F^s(z_-(v_-,.)+x_-)(\tau)|d\tau d\sigma\nonumber\\
&&\le 
\int_{-\infty}^0\Big({2^{5\over 2}\beta_1^l n\over \alpha |v_-|(1+{|x_-|\over \sqrt{2}})^\alpha}+{2^{5\over 2}\beta_2 n\over (\alpha+1) |v_-|(1+{|x_-|\over \sqrt{2}})^{\alpha+1}}\Big)
{{6n^2\over c}(1-{|v_-|^2\over c^2})2^{5\over 2}\beta_2 d\sigma\over (\alpha+1)|v_-|(1+{|x_-|\over \sqrt{2}}+{|v_-|\over 2\sqrt{2}}|\sigma|)^{\alpha+1}}\nonumber\\
&&\le 
{48n^3(1-{|v_-|^2\over c^2})\max(\beta_1^l,\beta_2)^2\over c\alpha^2 (\alpha+1)({|v_-|\over 2^{3\over 2}})^3(1+{|x_-|\over \sqrt{2}})^{2\alpha}}.\label{3.23}
\end{eqnarray}
We use \eqref{02a}, \eqref{03d}, \eqref{2.5b} (with ``$(f,r)$"$=(0,0)$) and \eqref{5.3b}, and we obtain 
\begin{eqnarray}
\max(|\Delta_{2,3}|,|\Delta_{2,6}|)
&\le&{2^{9\over 2}n^2\beta_1^l(1-{|v_-|^2\over c^2})\over \alpha |v_-|}\int_0^{+\infty}\int_\sigma^{+\infty}\big[{n\beta_2\over c}(1+{|x_-|\over \sqrt{2}}+{|v_-|\over 2\sqrt{2}}\tau)^{-\alpha-2}\nonumber\\
&&+2n^{3\over 2}\beta_3^s\tau(1+{|x_-|\over \sqrt{2}}+({|v_-|\over 2\sqrt{2}}\tau)^{-\alpha-3}\big]d\tau d\sigma\nonumber\\
&\le&{8n^3\max(\beta_1^l,\beta_2,\beta_3^s)^2(1-{|v_-|^2\over c^2}) \over \alpha^2(\alpha+1)({|v_-|\over 2^{3\over 2}})^3(1+{|x_-|\over \sqrt{2}})^\alpha}
\Big({1\over c}+{2n^{1\over 2}\over{|v_-|\over 2^{3\over 2}}}\Big).\label{3.18}
\end{eqnarray}
We use \eqref{02c} and \eqref{B9}, and then we use \eqref{03a}, \eqref{03b} and \eqref{3.36}, and we have
\begin{eqnarray}
|\Delta_{2,4}|
&\le&{6n\over c}(1-{|v_-|^2\over c^2})
\int_0^{+\infty}\Big(\int_{-\infty}^{+\infty}|F(z_-+x_-+y_-)(\tau)|d\tau\nonumber\\
&&+\int_\sigma^{+\infty}|F^l(z_+(a,.))(\tau)|d\tau\Big)\int_\sigma^{+\infty}|F^s(z_-+x_-+y_-)(s)|ds d\sigma\nonumber\\
&\le&{6n\over c}(1-{|v_-|^2\over c^2})
\int_0^{+\infty}\Big(\int_0^{+\infty}{4n\beta_2 d\tau\over (1+{|x_-|\over \sqrt{2}}-r+({|v_-|\over 2\sqrt{2}}-r)\tau)^{\alpha+2}}\nonumber\\
&&+\int_0^{+\infty}{6n\beta_1^l d\tau\over (1-r+({|v_-|\over 2\sqrt{2}}-r)\tau)^{\alpha+1}}\Big)
\int_\sigma^{+\infty}{2n\beta_2 ds d\sigma\over (1+{|x_-|\over \sqrt{2}}-r+({|v_-|\over 2\sqrt{2}}-r)s)^{\alpha+2}}\nonumber\\
&\le&{24n^3(1-{|v_-|^2\over c^2})\max(\beta_1^l,\beta_2)^2\over c\alpha^2(\alpha+1)({|v_-|\over 2\sqrt{2}}-r)^3(1-r)^{2\alpha}}\Big(3+{2\over
1+{|x_-|\over \sqrt{2}}-r}\Big).
\label{3.16}
\end{eqnarray}
We use \eqref{02a}, \eqref{2.9ab} and \eqref{lc1} ($y_-=A(y_-)$), and we obtain 
\begin{eqnarray}
&&|\Delta_{2,5}|
\le n^{1\over 2}\sqrt{1-{|v_-|^2\over c^2}}\int_0^{+\infty}\int_\sigma^{+\infty}|F^s(z_-+x_-+y_-)(\tau)
-F^s(z_-+x_-)(\tau)|d\tau d\sigma\nonumber\\
&&\le n^{1\over 2}\sqrt{1-{|v_-|^2\over c^2}}\|y_-\|\int_0^{+\infty}\int_\sigma^{+\infty}\Big({{n\beta_2\over c}\over(1+{|x_-|\over \sqrt{2}}-r+({|v_-|\over 2\sqrt{2}}-r)\tau)^{\alpha+2}}\nonumber\\
&&+{2n^{3\over 2}\beta_3^s(1+\tau)\over(1+{|x_-|\over \sqrt{2}}-r+({|v_-|\over 2\sqrt{2}}-r)\tau)^{\alpha+3}}\Big)d\tau d\sigma\nonumber
\end{eqnarray}
\begin{equation}
\le{4n^3(1-{|v_-|^2\over c^2})\max(\beta_1^l,\beta_2,\beta_3^s)^2 \big({r\over c}+2(n^{1\over 2}(|x_-|+r)+1)\big)
\big({1\over c}+{2n^{1\over 2}\over {|v_-|\over 2\sqrt{2}}-r}+n^{1\over 2}\Big)\big(1+{1\over {|v_-|\over 2\sqrt{2}}-r}\big)\over \alpha^2(\alpha+1)({|v_-|\over 2\sqrt{2}}-r)^3(1-r)^{2\alpha+2}}
.\label{3.17}
\end{equation}
From \eqref{B9}, \eqref{02a}, \eqref{03c}, \eqref{3.36} and \eqref{B10} it follows that
\begin{equation*}
|\Delta_{2,7}|\le 2n^{1\over 2}\sqrt{1-{|v_-|^2\over c^2}}\int_0^{+\infty}\int_\sigma^{+\infty}{({n\beta_1^l\over c}+2n^{3\over 2}\beta_2)\delta_r
d\tau d\sigma\over (1+{|x_-|\over \sqrt{2}}-r+({|v_-|\over 2^{3\over 2}}-r)\tau)^{\alpha+2}}
\end{equation*}
\begin{equation}
\le{8n^3(1- {|v_-|^2\over c^2})({\beta_1^l\over c}+2n^{1\over 2}\beta_2)\big({r\beta_1^l\over c}+2\beta_2(n^{1\over 2}(2|x_-|+r)+2)\big)\max\big(1,{1\over \alpha({|v_-|\over 2^{3\over 2}}-r)}\big)
\over \alpha(\alpha+1)^2({|v_-|\over 2^{3\over 2}}-r)^3(1-r)^{2\alpha}}.
\label{B12}
\end{equation}
Combining \eqref{3.36}, \eqref{02c}, \eqref{B9}, \eqref{03c}, \eqref{03a} and \eqref{03b} we obtain
\begin{eqnarray}
|\Delta_{2,8}|&\le&{6n(1-{|v_-|^2\over c^2})\over c}\int_0^{+\infty}\Big(\int_\sigma^{+\infty} |F(z_-+x_-+y_-)(\tau)|
\nonumber\\
&&+|F^l(z_+(a,.)+x_-)(\tau)|d\tau\Big) \int_\sigma^{+\infty}{2n^{3\over2} \beta_2|x_-|\over (1+{|v_-|\over 2^{3\over 2}}\tau)^{\alpha+2} }d\tau d\sigma\nonumber\\
&\le&{24n^{7\over 2}\beta_2|x_-|(1-{|v_-|^2\over c^2})
\big({2\beta_1^l\over \alpha}+{\beta_2^s\over (\alpha+1)(1+{|x_-|\over \sqrt{2}}-r)}\big)\over c\alpha(\alpha+1)\big({|v_-|\over 2^{3\over 2}}\big)^2({|v_-|\over 2^{3\over 2}}-r)(1+{|x_-|\over \sqrt{2}}-r)^\alpha}.\label{3.38}
\end{eqnarray}
Then we add the bounds on the right-hand sides of \eqref{3.15}--\eqref{3.38}, and we use \eqref{3.90}, and we obtain \eqref{t4}.
\hfill$\Box$

\section{Proof of Lemma \ref{lem_cont3} and Theorem \ref{thm_y2}}
\label{sec_cont3}
\subsection{Preliminary Lemma}
\begin{lemma}
\label{lem_scatinit2} 
Let $(v,w,x,q)\in \B(0,c)^2\times \R^n\times\B(0,1)$ so that $|v|=|w|\not=0$, $|v-w|<{|v|\over 2^{5\over 2}}$ and $v\cdot x=0$.
Assume that
\begin{equation}
{2^{7\over 2}n^{3\over 2}\big({1\over c}+2n^{1\over 2}\big)\max(\beta_1^l,\beta_2^l)\sqrt{1-{|v|^2\over c^2}}\over \alpha|v|\big(1+{|x|\over \sqrt{2}}-|q|\big)^\alpha}\big(1+{2^{3\over 2}\over |v| }\big)\le 1.\label{5.3g}
\end{equation}
Then the following estimates are valid
\begin{equation}
|\eta_1(z_-(v,x,t)+f(t))+\eta_2 z_+(w,x+q,t)+\eta_3 z_+(w,x+q',t)|\ge {|x|\over \sqrt{2}}-(\eta_2+\eta_3)|q|-\eta_1 r+({|v|\over 2^{3\over 2}}-\eta_1 r)t,\label{5.3d}
\end{equation}
\begin{equation}
\delta_{+,q,q'}:=\max\big(\sup_{(0,+\infty)}|\omega_{+,q,q'}|,\sup_{t\in(0,+\infty)}(1+{|x|\over \sqrt{2}}-|q|+t{|v|\over 2^{3\over 2}})
|\dot \omega_{+,q,q'}(t)|\big)\le 2|q-q'|,\label{5.3e}
\end{equation}
for $(r,\eta_1,\eta_2,\eta_3,t,q')\in (0,\min({|v|\over 2^{3\over 2}},1))\times [0,1]^3\times [0,+\infty)\times\R^n$ and for  $f\in M_{r,v,x}$ so that $\eta_1+\eta_2+\eta_3=1$, $|q'|\le |q|$, where $\omega_{+,q,q'}=z_+(w,x+q,.)-z_+(w,x+q',.)$. 
\end{lemma}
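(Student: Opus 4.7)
The plan is to exploit the fact that each of the ``free'' trajectories $z_-(v,x,\cdot)$ and $z_\pm(w,x+q,\cdot)$ stays close to a straight line, with velocity deviation controlled by Lemma \ref{lem_scatinit}, and then to combine these bounds with the elementary lower bound $|x+tv|\ge(|x|+t|v|)/\sqrt 2$ valid for $x\cdot v=0$ and $t\ge 0$. The two estimates \eqref{5.3d} and \eqref{5.3e} will be proved in this order; the second relies on the first.

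For \eqref{5.3d}, I would decompose, for $t\ge 0$,
\begin{equation*}
z_-(v,x,t)=x+tv+R_-(t),\qquad z_+(w,x+q,t)=x+q+tw+R_+^q(t),
\end{equation*}
and similarly for $q'$. By \eqref{5.3b} (applied with the appropriate substitutions), each remainder satisfies $|R(t)|\le tC_0$, where
\begin{equation*}
C_0:=\frac{2^{9/2}n^{3/2}\beta_1^l\sqrt{1-|v|^2/c^2}}{\alpha|v|(1+|x|/\sqrt 2-|q|)^\alpha}
\end{equation*}
dominates all three bounds (the $-|q|$ version is the largest, since $|q'|\le|q|$). Using $\eta_1+\eta_2+\eta_3=1$, the linear parts of the convex combination collapse to $x+(\eta_2q+\eta_3q')+t\,u$ with $u=v+(\eta_2+\eta_3)(w-v)$. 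Since $x\cdot v=0$ and $|v-w|\le|v|/2^{5/2}$, the triangle inequality gives $|x+tu|\ge|x|/\sqrt 2+t(|v|/\sqrt 2-|v|/2^{5/2})=|x|/\sqrt 2+t\cdot 3|v|/2^{5/2}$. Subtracting $(\eta_2+\eta_3)|q|$ for the position shifts, $\eta_1|f(t)|\le \eta_1 r(1+t)$ (from $f\in M_{r,v,x}$, using $|f(0)|\le r$ and $\sup_{(0,+\infty)}|\dot f|\le r$), and $tC_0$ for the curvature remainders, yields the claim provided $3|v|/2^{5/2}-C_0\ge|v|/2^{3/2}$, i.e.\ $C_0\le|v|/2^{5/2}$; this is exactly what hypothesis \eqref{5.3g} delivers.

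For \eqref{5.3e}, I would set $\omega:=z_+(w,x+q,\cdot)-z_+(w,x+q',\cdot)$ and use the integral representation \eqref{3.11a} for the two velocities $\dot z_+(w,x+q,\cdot)$ and $\dot z_+(w,x+q',\cdot)$, combined with \eqref{02b} and \eqref{03c}, to obtain
\begin{equation*}
|\dot\omega(t)|\le 2n^{1/2}\sqrt{1-|v|^2/c^2}\int_t^{+\infty}\Bigl(\tfrac{n\beta_1^l}{c}|\dot\omega(s)|+2n^{3/2}\beta_2^l|\omega(s)|\Bigr)\sup_{\eta\in(0,1)}(1+|u_\eta(s)|)^{-\alpha-1}\,ds,
\end{equation*}
where $u_\eta(s)=(1-\eta)z_+(w,x+q',s)+\eta z_+(w,x+q,s)$. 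Applying the just-established \eqref{5.3d} with $\eta_1=0$, $\eta_2=1-\eta$, $\eta_3=\eta$ gives $|u_\eta(s)|\ge |x|/\sqrt 2-|q|+s|v|/2^{3/2}$. Writing $\delta:=\delta_{+,q,q'}$, so that $|\omega|\le\delta$ and $|\dot\omega(s)|\le\delta(1+|x|/\sqrt 2-|q|+s|v|/2^{3/2})^{-1}$, substituting and integrating yields $(1+|x|/\sqrt 2-|q|+t|v|/2^{3/2})|\dot\omega(t)|\le K_1\delta$ and $\sup|\omega|\le|q-q'|+K_2\delta$ with explicit constants $K_1,K_2$. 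A direct computation from \eqref{5.3g} shows $K_1\le 1/4$ and $K_2\le 1/2$, so that $\delta\le\max(K_1\delta,|q-q'|+K_2\delta)=|q-q'|+\delta/2$, and hence $\delta\le 2|q-q'|$.

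The main obstacle will be keeping track of the dimensional constants: one must verify that the peculiar factors $(1+2^{3/2}/|v|)$ and $(1/c+2n^{1/2})$ appearing in \eqref{5.3g} are exactly calibrated to yield both the bound $C_0\le|v|/2^{5/2}$ needed for \eqref{5.3d} and the contraction constants $K_1\le 1/4$, $K_2\le 1/2$ needed for \eqref{5.3e} simultaneously. The structural arguments are otherwise straightforward adaptations of the proofs of Lemma \ref{lem_scatinit} and Lemma \ref{lem:comp}.
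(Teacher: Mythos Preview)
Your proposal follows essentially the same route as the paper: the triangle-inequality decomposition for \eqref{5.3d} using \eqref{5.3b} and the orthogonality $x\cdot v=0$, then the integral representation for $\dot\omega_{+,q,q'}$ combined with the Lipschitz bound \eqref{03c} and a bootstrap on $\delta_{+,q,q'}$ for \eqref{5.3e}.

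One small correction: the inequality $C_0\le|v|/2^{5/2}$ you need for \eqref{5.3d} does \emph{not} follow from \eqref{5.3g} alone for all admissible $|v|$ (a short computation shows \eqref{5.3g} only gives $C_0\le 2\big((1/c+2n^{1/2})(1+2^{3/2}/|v|)\big)^{-1}$, which falls short of $|v|/2^{5/2}$ when $|v|$ is small). The paper instead cites \eqref{5.1a} here, which \emph{does} yield $C_0\le |v|/(2^{7/2}n^{1/2})\le|v|/2^{5/2}$; this is legitimate because \eqref{5.1a} is implicitly in force---the very trajectories $z_-(v,x,\cdot)$ and $z_+(w,x+q,\cdot)$ appearing in the statement exist only under \eqref{5.1a} (Lemma~\ref{lem_scatinit}). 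Cite \eqref{5.1a} rather than \eqref{5.3g} at that step and your argument for \eqref{5.3d} is complete. For \eqref{5.3e}, your contraction argument with $K_2\le 1/2$ matches the paper's use of \eqref{5.3g} in \eqref{5.17b} exactly.
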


\begin{proof}[Proof of Lemma \ref{lem_scatinit2}]
We have
\begin{eqnarray}
&&|\eta_1(z_-(v,x,t)+f(t))+\eta_2 z_+(w,x+q,t)+\eta_3 z_+(w,x+q',t)|\nonumber\\
&&\ge |x+\eta_2 q+\eta_3 q'+tv|-\eta_1|z_-(v,x,t)-tv-x|-\eta_1 |f(t)|-(\eta_2+\eta_3)|v-w||t|\nonumber\\
&&-\eta_2|z_+(w,x+q,t)-wt-x-q|-\eta_3|z_+(w,x+q',t)-wt-x-q'|.\label{5.13}
\end{eqnarray}
Then we use \eqref{5.3b}, \eqref{5.1a}, \eqref{4.5b}, the equality $x\cdot v=0$ and the estimates $|q'|\le |q|$ and $|v-w|\le{|v|\over 2^{5\over 2}}$ to obtain \eqref{5.3d}.

From Lemma \ref{lem:comp} it follows that $\delta_{+,q,q'}$ is finite. 
Combining \eqref{03c} and \eqref{5.13} we obtain 
\begin{equation}
\big|F^l(z_+(w,x+q,.))(\tau)-F^l(z_+(w,x+q',.))(\tau))\big|
\le{\big({n\beta_1^l\over c}+2n^{3\over 2}\beta_2^l\big)\delta_{+,q,q'}\over \big(1+{|x|\over \sqrt{2}}-|q|+\tau{|v|\over 2^{3\over 2}}\big)^{\alpha+2}},\label{5.15}
\end{equation}
for $\tau\ge 0$.
Note that 
\begin{eqnarray}
\dot \omega_{+,q,q'}(t)&=&g\big(g^{-1}(w)-\int_t^{+\infty}F^l(z_+(w,x+q,.))(\tau)d\tau\big)\nonumber\\
&&-g\big(g^{-1}(w)-\int_t^{+\infty}F^l(z_+(w,x+q',.))(\tau)d\tau\big),\label{5.16}
\end{eqnarray}
for $t\ge 0$. Combining \eqref{5.15}, \eqref{5.16} and \eqref{02b} and \eqref{5.7} we have 
\begin{equation}
|\dot \omega_{+,q,q'}(t)|
\le  {2^{5\over 2}n^{3\over 2}\sqrt{1-{|v|^2\over c^2}}\big({\beta_1^l\over c}+2n^{1\over 2}\beta_2^l\big)\delta_{+,q,q'}\over (\alpha+1)|v|\big(1+{|x|\over \sqrt{2}}-|q|+t{|v|\over 2^{3\over 2}}\big)^{\alpha+1}},\label{5.17a}
\end{equation}
for $t\ge 0$.
Then we use the estimate \eqref{5.17a} and the estimate $|g(t)|\le |g(0)|+\int_0^t|\dot g(s)|ds$ for $t\ge 0$ and $g=\omega_{+,q,q'}$ ($g(0)=q-q'$), and we obtain a bound on  $\omega_{+,q,q'}$, and then we have
\begin{equation}
\delta_{+,q,q'}\le |q-q'|+{2^{5\over 2}n^{3\over 2}\big({\beta_1^l\over c}+2n^{1\over 2}\beta_2^l\big)\max\big(1,{2^{3\over 2}\over \alpha |v| }\big)\sqrt{1-{|v|^2\over c^2}}\delta_{+,q,q'}\over (\alpha+1)|v|\big(1+{|x|\over \sqrt{2}}-|q|\big)^\alpha}. \label{5.17b}
\end{equation} 
Using \eqref{5.3g} we obtain $\delta_{+,q,q'}\le|q-q'|+{\delta_{+,q,q'}\over 2}$,
which proves \eqref{5.3e}.
\end{proof}

For the rest of the text we shorten $z_-(v_-,x_-,.)$, $\tilde a_{sc}(v_-,x_-)$, $\tilde a(v_-,x_-)$ and $\tilde b_{sc}(v_-,x_-)$ to $z_-$, $\tilde a_{sc}$, $\tilde a$ and $\tilde b_{sc}$.

\subsection{Proof of Lemma \ref{lem_cont3}}
We first need to estimate  
\begin{equation}
\delta:=
\max\big(\sup_{(0,+\infty)}|\omega|,\sup_{t\in (0,+\infty)}(1+{|x_-|\over \sqrt{2}}-r+t({|v_-|\over 2^{3\over 2}}-r))|\dot \omega(t)|\big).\label{5.18}
\end{equation}
where $\omega=z_-+y_--z_+(\tilde a,x_-,.)$. Then under condition \eqref{10.1} we have
\begin{equation}
|g^{-1}(v_-)|-\int_{-\infty}^{+\infty}|F(z_-+y_-)(\tau)|d\tau
-\int_0^{+\infty}|F^l(z_+(\tilde a,x_-,.))(\tau)|d\tau
\ge{1\over 2}|g^{-1}(v_-)|.\label{5.23}
\end{equation}
Note also that $\omega(0)=y_-(0)$ and 
\begin{equation}
\dot \omega(t)
=g\big(g^{-1}(\tilde a)-\int_t^{+\infty}F(z_-+y_-)(\tau)d\tau\big)
-g\big(g^{-1}(\tilde a)-\int_t^{+\infty}F^l(z_+(\tilde a,x_-,.))(\tau)d\tau\big),\label{5.24}
\end{equation}
for $t\ge 0$.
Then similarly to \eqref{3.46} we obtain
\begin{equation}
|\dot \omega(t)|\le 
{2n^{3\over 2}\sqrt{1-{|v_-|^2\over c^2}}\big(2\beta_2+\delta({\beta_1^l\over c}+2n^{1\over 2}\beta_2)\big)\over (\alpha+1)({|v_-|\over 2^{3\over 2}}-r)\big(1-r+{|x_-|\over \sqrt{2}}+t\big({|v_-|\over 2^{3\over 2}}-r\big)\big)^{\alpha+1}},\label{5.25a}
\end{equation}
\begin{equation}
\delta\le |y_-(0)|+
{2n^{3\over 2}\sqrt{1-{|v_-|^2\over c^2}}\big(2\beta_2+\delta({\beta_1^l\over c}+2n^{1\over 2}\beta_2)\big)
\max\big(1,{1\over \alpha({|v_-|\over 2^{3\over 2}}-r)}\big)\over (\alpha+1)({|v_-|\over 2^{3\over 2}}-r)\big(1-r+{|x_-|\over \sqrt{2}}\big)^\alpha},\label{5.25c}
\end{equation}
for $t\ge 0$.
Then  from \eqref{10.1} and \eqref{4.15} ($y_-(0)=(\A(y_-)(0)$) and from the estimate $r\le {1\over 2}$ 
it follows that
\begin{equation}
{\delta\over 2}
\le
{2n^{3\over 2}\sqrt{1-{|v_-|^2\over c^2}}\big(2\beta_2+{r\beta_1^l\over c}+(n^{1\over 2}r+1)2\beta_2\big)
\max\big(1,{1\over \alpha({|v_-|\over 2^{3\over 2}}-r)}\big)\over (\alpha+1)({|v_-|\over 2^{3\over 2}}-r)\big({1\over 2}+{|x_-|\over \sqrt{2}}\big)^\alpha}.\label{5.25e}
\end{equation}
In addition under condition \eqref{10.1} we have $\delta({\beta_1^l\over c}+2n^{1\over 2}\beta_2)\le \big(2\beta_2+{r\beta_1^l\over c}+(n^{1\over 2}r+1)2\beta_2\big)$, and from \eqref{5.25a} it follows that
\begin{equation}
|\dot \omega(t)|\le 
{2n^{3\over 2}\sqrt{1-{|v_-|^2\over c^2}}\big({r\beta_1^l\over c}+(n^{1\over 2}r+3)2\beta_2\big)\over (\alpha+1)({|v_-|\over 2^{3\over 2}}-r)\big(1-r+{|x_-|\over \sqrt{2}}+t\big({|v_-|\over 2^{3\over 2}}-r\big)\big)^{\alpha+1}},\label{5.25g}
\end{equation}
Then note that 
\begin{equation}
\G_{v_-,x_-}(q)=\A(y_-)(0)+\int_0^{+\infty}(\dot \omega(s)-\dot \omega_{+,q,0}(s))ds,\label{B20}
\end{equation}
where $\omega_{+,q,q'}$ is defined in Lemma \ref{lem_scatinit2} for ``$w$"$=\tilde a$ and for any $q'\in \overline{\B(0,{1\over 2})}$.
Then we use \eqref{4.15} at $t=0$,  \eqref{5.17a} and \eqref{5.3e} ("$q'=0$") and \eqref{5.25g}, and we obtain
\begin{eqnarray}
|\G_{v_-,x_-}(q)|
&\le& {4n^{3\over 2}\left({1-{|v_-|^2\over c^2}}\right)^{1\over 2}\big(({\beta_1^l\over c}+2n^{1\over 2}\beta_2)r+4\beta_2\big)\over \alpha(\alpha+1)\big({|v_-|\over 2\sqrt{2}}-r\big)^2\big(1-r+{|x_-|\over \sqrt{2}}\big)^\alpha}\nonumber\\
&&+{2^5n^{3\over 2}\sqrt{1-{|v_-|^2\over c^2}}({\beta_1^l\over c}+2n^{1\over 2}\beta_2)|q|\over \alpha(\alpha+1)|v_-|^2(1+{|x_-|\over \sqrt{2}}-|q|)^\alpha},\label{6.9}
\end{eqnarray}
for $|q|\le {1\over 2}$. Estimates \eqref{lc3a} follow from \eqref{6.9}, condition \eqref{10.1} and $\max(r,|q|)\le {1\over 2}$.
Note also that for $(q,q')\in \overline{\B(0, {1\over 2})}^2$ 
\begin{equation}
\G_{v_-,x_-}(q)-\G_{v_-,x_-}(q')=\int_0^{+\infty}\dot \omega_{+,q,q'}(s)ds.
\end{equation}
Then we use \eqref{5.17a} and \eqref{5.3e} and condition \eqref{10.1}, and we obtain \eqref{lc3b}.
\hfill $\Box$

\subsection{Proof of Theorem \ref{thm_y2}}
Estimate \eqref{t2a} follows from the identity $\A(y_-)=y_-$ and \eqref{4.14} at $t=0$.
Estimate \eqref{t2c} follows from \eqref{lc7} and $r\le {1\over 2}$.
Estimate \eqref{t2d} follows from \eqref{lc3a}.
We add \eqref{5.25g} and \eqref{5.17a} for $(q,q')=(\tilde b_{sc}, 0)$, and we use \eqref{5.3e} and  we obtain 
\begin{equation}
|\dot y_+(t)|\le 
{8n^2\sqrt{1-{|v_-|^2\over c^2}}\max(\beta_1^l,\beta_2)({1\over c}+1)(2+|\tilde b_{sc}|)\over (\alpha+1)({|v_-|\over 2^{3\over 2}}-r)\big({1\over 2}+{|x_-|\over \sqrt{2}}+t\big({|v_-|\over 2^{3\over 2}}-r\big)\big)^{\alpha+1}}.\label{D1}
\end{equation}
Then we use $\tilde b_{sc}\le {1\over 2}$, and we obtain \eqref{t2e}.

Now we prove \eqref{t2f}.
Note that
\begin{equation}
\tilde a_{sc}(v_-,x_-)-\tilde W(v_-,x_-)-\sqrt{1-{|v_-|^2\over c^2}}\int_{-\infty}^{+\infty}F^s(\tau v_-+x_-,v_-)d\tau=:\sum_{j=1}^3\Delta_{fc,j},
\end{equation}
where
\begin{eqnarray}
\Delta_{fc,1}
&:=&\tilde a(v_-,x_-)-g\big(g^{-1}(v_-)+\int_{-\infty}^0 F^l(z_-)(\tau)d\tau\nonumber\\
&&+\int_{-\infty}^{+\infty}F^s(z_-)(\tau)d\tau+\int_0^{+\infty}F^l(z_+(\tilde a,x_-,.))(\tau)d\tau\big),\label{5.26a}
\end{eqnarray}
\begin{eqnarray}
\Delta_{fc,2}
&:=&\Big(<\int_0^1\big(\nabla g_j\big(g^{-1}(v_-)+\int_{-\infty}^0 F^l(z_-)(\tau)d\tau+\int_0^{+\infty}F^l(z_+(\tilde a,x_-,.))(\tau)d\tau\nonumber\\
&&\hskip-2cm+\ep \int_{-\infty}^{+\infty}F^s(z_-)(\tau)d\tau\big)-\nabla g_j(g^{-1}(v_-))\big)d\ep,\int_{-\infty}^{+\infty}F^s(z_-)(\tau)d\tau>\Big)_{j=1\ldots n},\label{5.28}
\end{eqnarray}
\begin{equation}
\Delta_{fc,3}:=\Big(<\nabla g_j(g^{-1}(v_-))\big),\int_{-\infty}^{+\infty}(F^s(z_-)(\tau)-F^s(. v_-+x_-)(\tau))d\tau >\Big)_{j=1\ldots n},\label{5.30}
\end{equation}
and where $<.,.>$ denotes the scalar product in $\R^n$.
Under condition \eqref{10.1} it follows that
\begin{eqnarray}
&&|g^{-1}(v_-)|-\int_{-\infty}^{+\infty}\big((|F^l|+|F^s|)(z_-+y_-)(\tau)+(|F^l|+|F^s|)(z_-)(\tau)\big)d\tau\nonumber\\
&&+|F^l(z_+(\tilde a,x_-,.))|(\tau)\big)d\tau
\ge{|g^{-1}(v_-)|\over 2}.\label{D5}
\end{eqnarray}
Then we use \eqref{02b} and we have
\begin{eqnarray}
|\Delta_{fc,1}|
&\le&{2n^{1\over 2}\sqrt{1-{|v_-|^2\over c^2}}}
\Big(\int_{-\infty}^0 |F(z_-+y_-)(\tau)-F(z_-)(\tau)|d\tau\nonumber\\
&&+\int_0^{+\infty} |F^s(z_-+y_-)(\tau)-F^s(z_-)(\tau)|d\tau\nonumber\\
&&+\int_0^{+\infty} |F^l(z_-+y_-)(\tau)-F^l(z_+(\tilde a,x_-,.))(\tau)|d\tau\Big).\label{5.26}
\end{eqnarray}
Hence from \eqref{03c} and \eqref{03d} it follows that
\begin{eqnarray}
&&|\Delta_{fc,1}|
\le{2n^{3\over 2}\sqrt{1-{|v_-|^2\over c^2}}}\Big({({\beta_1^l\over c}+2n^{1\over 2}\beta_2)(\|y_-\|_*+\delta)
\over (\alpha+1)({|v_-|\over 2^{3\over 2}}-r)(1+{|x_-|\over \sqrt{2}}-r)^{\alpha+1}}\label{5.27}\\
&&+{({2\beta_2\over c}+4n^{1\over 2}\beta_3^s)\|y_-\|_*
\over (\alpha+2)({|v_-|\over 2^{3\over 2}}-r)(1+{|x_-|\over \sqrt{2}}-r)^{\alpha+2}}
+{2n^{1\over 2}\beta_3^s\|y_-\|_*\over(\alpha+1)({|v_-|\over 2^{3\over 2}}-r)^2(1+{|x_-|\over \sqrt{2}}-r)^{\alpha+1}}\Big),\nonumber
\end{eqnarray}
where $\delta$ is defined in \eqref{5.18}.
We use the bound \eqref{5.25e} on $\delta$ and the bound \eqref{lc3} on $y_-=\A(y_-)$, and we use the estimate $r\le {1\over 2}$, and we obtain
\begin{equation}
|\Delta_{fc,1}|
\le{768n^4\beta^2\big(1-{|v_-|^2\over c^2}\big)({r\over c}+1)(1+{1\over c})(1+{1\over ({|v_-|\over 2^{3\over 2}}-r)})^2\over \alpha^2({|v_-|\over 2^{3\over 2}}-r)^2
({1\over 2}+{|x_-|\over \sqrt{2}})^{2\alpha+1}}.\label{D6}
\end{equation}
Then we use \eqref{D5} and \eqref{02c}, and we obtain
\begin{eqnarray}
|\Delta_{fc,2}|
&\le&{6n\big(1-{|v_-|^2\over c^2}\big)\over c}
\Big(\int_{-\infty}^0 |F^l(z_-)|(\tau)d\tau+\int_0^{+\infty}|F^l(z_+(\tilde a,x_-,.))|(\tau)d\tau\nonumber\\
&+&\int_{-\infty}^{+\infty}|F^s(z_-)(\tau)|d\tau\Big)\int_{-\infty}^{+\infty}|F^s(z_-) |(\tau)d\tau.\label{D9}
\end{eqnarray}
We use \eqref{03a}, \eqref{03b} and \eqref{5.3d}, and we have 
\begin{equation}
|\Delta_{fc,2}|\le{144n^2\beta^2\big(1-{|v_-|^2\over c^2}\big)\over c\alpha(\alpha+1)({|v_-|\over 2^{3\over 2}}-r)^2(1+{|x_-|\over \sqrt{2}})^{2\alpha+1}}.\label{D7}
\end{equation}
We use \eqref{02a} and an estimate similar to \eqref{3.9b}, and we have 
\begin{equation}
|\Delta_{fc,3}|\le{2^7n^{7\over 2}\beta_1^l(1-{|v_-|^2\over c^2})\over \alpha(\alpha+1)|v_-|^2(1+{|x_-|\over \sqrt{2}})^{2\alpha+1}}\big({\beta_2\over c}
+{2^{5\over 2}n^{1\over 2}\beta_3^s\over |v_-|}\big)\label{D8}
\end{equation}
Then we add the bounds of \eqref{D6}, \eqref{D7} and \eqref{D8}, and we obtain \eqref{t2f}.

We prove \eqref{t2g}. From  \eqref{6.1b} and $\tilde b_{sc}=\G_{v_-,x_-}(\tilde b_{sc})$ it follows that
\begin{eqnarray}
&&\tilde b_{sc}(v_-,x_-)-\Big(<\nabla g_j(g^{-1}(v_-)),\int_{-\infty}^0\int_{-\infty}^\sigma F^s(. v_-+x_-)(\tau)d\tau d\sigma\nonumber\\
&&-\int_0^{+\infty}\int_\sigma^{+\infty} F^s(. v_-+x_-)(\tau)d\tau d\sigma>\Big)_{j=1\ldots n}=\sum_{j=1}^8\Delta_{sc,j},\label{6.3t}
\end{eqnarray}
where
\begin{equation}
\Delta_{sc,1}:=\G_{v_-,x_-}(\tilde b_{sc})-\G_{v_-,x_-}(0),\ 
\Delta_{sc,2}:=\A(y_-)(0)-\A(0)(0),\ \label{6.3b}
\end{equation}
\begin{eqnarray}
\Delta_{sc,3}&:=&\int_{-\infty}^0\Big(<\int_0^1\nabla g_j\big(g^{-1}(v_-)+\int_{-\infty}^\sigma (F^l+\ep F^s)(z_-)(\tau)d\tau\big)d\ep\nonumber\\
&&-\nabla g_j(g^{-1}(v_-)),\int_{-\infty}^\sigma F^s(z_-)(\tau)d\tau>\Big)_{j=1\ldots n}d\sigma\label{6.4}
\end{eqnarray}
\begin{equation}
\Delta_{sc,4}:=\int_{-\infty}^0\big(<\nabla g_j(g^{-1}(v_-)),\int^\sigma_{-\infty}\big(F^s(z_-)(\tau)-F^s(. v_-+x_-)(\tau)\big)d\tau>\big)_{j=1\ldots n}d\sigma,\label{6.4c}
\end{equation}
\begin{eqnarray}
\Delta_{sc,5}&:=&-\int_0^{+\infty}\Big(<\int_0^1\Big(\nabla g_j\big(g^{-1}(\tilde a)-\int_\sigma^{+\infty}(\ep F(z_-+y_-)+(1-\ep)F^l(z_+(\tilde a,x_-,.)))(\tau))d\tau\big)\nonumber\\
&&-\nabla g_j(g^{-1}(v_-))\Big)d\ep,
\int_\sigma^{+\infty}F^s(z_-+y_-)(\tau)d\tau>\Big)_{j=1\ldots n}d\sigma,\label{6.3c}
\end{eqnarray}
\begin{equation}
\Delta_{sc,6}:=-\int_0^{+\infty}\Big(<\nabla g_j(g^{-1}(v_-)),\int_\sigma^{+\infty}\big(F^s(z_-+y_-)(\tau)
-F^s(z_-)(\tau)\big)d\tau>\Big)_{j=1\ldots n}d\sigma.\label{6.3d}
\end{equation}
\begin{equation}
\Delta_{sc,7}:=-\int_0^{+\infty}\Big(<\nabla g_j(g^{-1}(v_-)),\int_\sigma^{+\infty}\big(F^s(z_-)(\tau)
-F^s(.v_-+x_-)(\tau)\big)d\tau>\Big)_{j=1\ldots n}d\sigma,\label{6.3e}
\end{equation}
\begin{eqnarray}
\Delta_{sc,8}
&=&-\int_0^{+\infty}\Big(<\int_0^1\nabla g_j\big(g^{-1}(\tilde a)-\int_\sigma^{+\infty}(\ep F(z_-+y_-)
+(1-\ep)F^l(z_+(\tilde a,x_-,.)))(\tau)d\tau\big)\nonumber\\
&&\int_\sigma^{+\infty}\big(F^l(z_-+y_-)(\tau)-F^l(z_+(\tilde a,x_-,.))(\tau)\big)d\tau>\Big)_{j=1\ldots n}d\sigma.\label{6.2b}
\end{eqnarray}
From \eqref{lc3b} and \eqref{t2d} it follows that
\begin{equation}
|\Delta_{sc,1}|\le{8n^2\sqrt{1-{|v_-|^2\over c^2}}\max(\beta_1^l,\beta_2)({1\over c}+1)|\tilde b_{sc}|\over \alpha(\alpha+1)({|v_-|\over 2^{3\over 2}})^2({1\over 2}+{|x_-|\over \sqrt{2}})^\alpha}
\le{192n^4\left({1-{|v_-|^2\over c^2}}\right)\max(\beta_1^l,\beta_2)^2({1\over c}+1)^2
\over \alpha^2(\alpha+1)^2\big({|v_-|\over 2\sqrt{2}}-r\big)^4\big({1\over 2}+{|x_-|\over\sqrt{2}}\big)^{2\alpha}}.\label{E2}
\end{equation}
We use \eqref{4.27} ("$(f_1,f_2)=(y_-,0)$"), and we use the bound of $\|y_-\|_*=\|\A(y_-)\|_*$ given in \eqref{lc3} and $r\le{1\over 2}$, and we obtain 
\begin{equation}
|\Delta_{sc,2}|\le{192n^4(1-{|v_-|^2\over c^2})({1\over c}+1)({r\over c}+1)\max(\beta_1^l,\beta_2)\big(1+{1\over{|v_-|\over 2\sqrt{2}}-r})\over \alpha^2(\alpha+1)({|v_-|\over 2\sqrt{2}}-r)^3\big(1-r+{|x_-|\over \sqrt{2}}\big)^{2\alpha}}.\label{6.3f}
\end{equation}
The proof of the following estimates \eqref{E3}, \eqref{3.96}, \eqref{E5} and \eqref{6.3h} given below is similar to the proof of the estimates \eqref{3.23}, \eqref{3.18}, \eqref{3.16} and \eqref{3.17}
\begin{eqnarray}
|\Delta_{sc,3}|&\le&
{48n^3\max(\beta_1^l,\beta_2)(1-{|v_-|^2\over c^2})\over c\alpha^2(\alpha+1)({|v_-|\over 2^{3\over 2}})^3(1+{|x_-|\over \sqrt{2}})^{2\alpha}},\label{E3}\\
\max(|\Delta_{sc,4}|,|\Delta_{sc,7}|)&\le&{8n^3\beta^2(1-{|v_-|^2\over c^2}) \Big({1\over c}+{2n^{1\over 2}\over{|v_-|\over 2^{3\over 2}}}\Big)\over \alpha^2(\alpha+1)({|v_-|\over 2^{3\over 2}})^3(1+{|x_-|\over \sqrt{2}})^{2\alpha}},\label{3.96}\\
|\Delta_{sc,5}|&\le&{168n^3(1-{|v_-|^2\over c^2})\max(\beta_1^l,\beta_2)^2\over c\alpha^2(\alpha+1)({|v_-|\over 2\sqrt{2}}-r)^3(1-r+{|x_-|\over \sqrt{2}})^{2\alpha}},\label{E5}\\
|\Delta_{sc,6}|&\le&{96n^4\beta^2(1-{|v_-|^2\over c^2})(1+{1\over c})^2(1+{1\over {|v_-|\over 2^{3\over 2}}-r})^2\over \alpha^2(\alpha+1)({|v_-|\over 2^{3\over 2}}-r)^3(1+{|x_-|\over 2^{3\over 2}}-r)^{2\alpha}}.\label{6.3h}
\end{eqnarray}
From \eqref{D5}, \eqref{02a}, \eqref{03c} and \eqref{5.25e} it follows that
\begin{eqnarray}
|\Delta_{sc,8}|&\le&2n^{1\over 2}(1-{|v_-|^2\over c^2})^{1\over 2}\int_0^{+\infty}\int_\sigma^{+\infty}{({\beta_1^l\over c}+2n^{1\over 2}\beta_2)\delta d\tau d\sigma\over(1-r+{|x_-|\over \sqrt{2}}+({|v_-|\over 2^{3\over 2}}-r)\tau)^{\alpha+2}}\nonumber\\
&\le&{96n^4\beta^2(1-{|v_-|^2\over c^2})(1+{1\over c})^2(1+{1\over {|v_-|\over 2^{3\over 2}}-r})\over\alpha^2(\alpha+1)
({|v_-|\over 2^{3\over 2}}-r)^3({1\over 2}+{|x_-|\over 2^{3\over 2}})^{2\alpha}}.\label{E7}
\end{eqnarray}
Then we add the bounds on the right-hand sides of \eqref{E2}--\eqref{E7}, and we use \eqref{6.3t} and we obtain \eqref{t2g}.
\hfill$\Box$

\end{document}